%% file: main.tex
\documentclass[aps, pra, superscriptaddress, showpacs,floatfix,10pt,twocolumn]{revtex4-2}

\usepackage{graphicx}
\usepackage{xcolor}
\usepackage{transparent}
\usepackage{import}
\usepackage{overpic}
\usepackage{amssymb}
\usepackage{mathtools}
\usepackage{bbm}
\usepackage{bm}
\usepackage{mathrsfs}
\usepackage{verbatim}
\usepackage{dsfont}
\usepackage{calligra}
\usepackage{amsthm}
\usepackage[breaklinks=true,colorlinks,citecolor=blue,linkcolor=blue,urlcolor=blue]{hyperref}

\newtheorem{theorem}{Theorem}[section]
\newtheorem{corollary}{Corollary}[section]
\newtheorem{lemma}{Lemma}[section]
\newtheorem{defn}{Definition}[section]

\DeclareMathOperator*{\sumint}{%
\mathrlap{\displaystyle\int}
\mathrlap{\textstyle\sum}
\phantom{\mathrlap{\displaystyle
\int}\textstyle\sum}
}

\makeatletter
\DeclareRobustCommand{\cev}[1]{%
  {\mathpalette\do@cev{#1}}%
}
\newcommand{\do@cev}[2]{%
  \vbox{\offinterlineskip
    \sbox\z@{$\m@th#1 x$}%
    \ialign{##\cr
      \hidewidth\reflectbox{$\m@th#1\vec{}\mkern4mu$}\hidewidth\cr
      \noalign{\kern-\ht\z@}
      $\m@th#1#2$\cr
    }%
  }%
}
\makeatother

\newcommand{\tr}{\operatorname{Tr}}
\newcommand{\hc}{{\rm H.c.}}
\newcommand{\1}{\mathds{1}}
\renewcommand{\Re}{\operatorname{Re}}
\renewcommand{\Im}{\operatorname{Im}}

\renewcommand{\a}[1]{a^{\vphantom{\dagger}}_{#1}}
\newcommand{\ad}[1]{a^{\dagger}_{#1}}

\renewcommand{\c}[1]{c^{\vphantom{\dagger}}_{#1}}
\newcommand{\cd}[1]{c^{\dagger}_{#1}}
\renewcommand{\H}[1]{{\cal H}_{#1}}
\newcommand{\E}{{\cal E}}
\newcommand{\G}{{\cal G}}
\newcommand{\C}{{\cal C}}
\newcommand{\M}{{\cal M}}
\newcommand{\N}{{\cal N}}

\renewcommand{\AA}{{\mathcal A}}
\newcommand{\V}{{\cal V}}
\newcommand{\U}{{\cal U}}

\newcommand{\I}{{\cal I}}
\newcommand{\K}{{\cal K}}

\newcommand{\T}{{\cal T}}
\renewcommand{\O}{{\cal O}}

\newcommand{\g}{{\mathfrak g}}
\newcommand{\cc}{{\mathfrak c}}
\renewcommand{\d}{{\rm d}}
\newcommand{\ev}[1]{\bigl \langle #1 \bigr \rangle}
\newcommand{\vev}[1]{\bigl \langle #1 \bigr \rangle_0}
\newcommand{\tord}{{\mathbb{T}}}
\newcommand{\drot}{{\tilde{\mathbb{T}}}}
\newcommand{\effconj}{{}^{\bigstar}\!}

\begin{document}

\title{Exact quantum transport in non-Markovian open Gaussian systems}

\date{\today}

\author{Guglielmo Pellitteri}\email{guglielmo.pellitteri@sns.it}
\affiliation{Scuola Normale Superiore, 56126 Pisa, Italy}

\author{Vittorio Giovannetti}\email{vittorio.giovannetti@sns.it}
\affiliation{Scuola Normale Superiore, 56126 Pisa, Italy}
\affiliation{NEST and Istituto Nanoscienze-CNR, 56126 Pisa, Italy}

\author{Vasco Cavina}\email{vasco.cavina@sns.it}
\affiliation{Scuola Normale Superiore, 56126 Pisa, Italy} 

\begin{abstract}

We build an exact framework to evaluate heat, energy, and particle transport between Gaussian reservoirs mediated by a quadratic quantum system. By combining full counting statistics with newly developed non-Markovian master equation approaches, we introduce an effective master equation whose solution can be used to generate arbitrary moments of the heat statistics for any number of reservoirs. This theory applies equally to fermionic and bosonic systems, holds at arbitrarily strong coupling, and resolves out-of-equilibrium transient dynamics determined by the system’s initial state. In the steady-state, weak-coupling limit, we recover results analogous to those of the well-known Landauer-B\"{u}ttiker formalism. We conclude our discussion by demonstrating an application of the method to a prototypical fermionic system. Our results uncover a regime of transient negative heat conductance contingent upon the initial system preparation, providing a clear signature of non-trivial out-of-equilibrium dynamics.

\end{abstract}

\maketitle

\section{Introduction}
\label{sec:intro}
Understanding heat flows in quantum and nanoscale systems is a central problem in quantum thermodynamics and quantum technologies. It is essential for the design and optimization of quantum thermal machines and quantum engines~\cite{kosloff2014quantum, arrachea2007heat, erdman2017thermoelectric}, and it plays a key role in quantum transport, where energy currents, dissipation, and fluctuations govern the performance of nanoscale devices~\cite{benenti2017fundamental, dubi2011colloquium, esposito2015efficiency}. Moreover, a reliable description of heat flow and heat noise is crucial for quantum information processing, as thermal fluctuations contribute to decoherence and limit the fidelity of quantum computation~\cite{paladino20141}.

In all these contexts, the study of the energetics has been extensively tackled using methods based on Markovian master equations~\cite{gorini1976classic, lindblad1976classic, Espositoreview, soret2022thermodynamic} and weak coupling, steady state approaches, like the celebrated Landauer-B\"uttiker formalism~\cite{landauer1957spatial, buttiker1986four}.
However, the study of non-Markovian dynamics and strong-coupling effects has recently become a critical research direction, driven by the need to meet emerging technological demands~\cite{ vacchini2011markovianity, tamascelli2018nonperturbative, deVega2017review, Breuer2016Review, Li2020Perspective, ankerhold2026colloquium}. In particular, the quest for scalability in quantum computation—requiring dense, low-temperature architectures—renders quantum devices increasingly susceptible to correlated noise~\cite{paladino2002decoherence, gulacsi2023signatures, zou2024spatially}, making such effects especially relevant. Harnessing them lies at the core of modern error-mitigation schemes~\cite{wang2025non, sannia2025non}. 
Similarly, in quantum thermodynamics, correlations and memory effects can be exploited to improve the performance of standard tasks, including work extraction and refrigeration~\cite{abiuso2019non, strasberg2016nonequilibrium, perarnau2015extractable, thomas2018thermodynamics, talkner2020colloquium}. Accordingly, a realistic characterization of memory and correlations within open quantum system theory is a necessary step toward realizing these ideas.
For these reasons, several master equation–based approaches have been proposed in recent years for the study of systems strongly coupled to their environment, ranging from stochastic approaches that hold for Gaussian environments~\cite{stockburger2002exact, stockburger2001diffusion, tilloy2017unraveling, diosi1998diffusion, cavina2025unifying} to exact master equations~\cite{colla2025unveiling, luczka1990spin}, often relying on the Gaussian hypothesis on both system and environment~\cite{Hu1992Classic, Zhang2012ME, d2025exact}.
In this manuscript, we take on this challenge and establish a new method for computing energy statistics in open quantum systems strongly coupled to their environments. The framework developed here assumes Gaussian system and environment degrees of freedom with identical (bosonic or fermionic) statistics, yielding a fully exact theory valid at arbitrarily strong coupling. Crucially, this enables the resolution of out-of-equilibrium transient dynamics dependent on the initial preparation, thereby relaxing the weak-coupling and steady-state constraints inherent to standard scattering matrix approaches~\cite{NazarovBlanter2009, Datta1995, Imry2002, FisherLee1981, blasi2024exact}.
\begin{figure}[t]
  \centering
  \def\svgwidth{1.0\linewidth}
  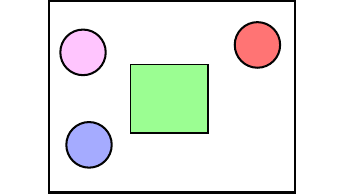
  \caption{Graphical representation of a quantum system described by a quadratic Hamiltonian $\H{S}$ interacting with arbitrary modes of multiple reservoirs $\H{\alpha}$ via potentials $\V_{\alpha}$. Each reservoir is initialized at thermal and chemical equilibrium at inverse temperature and chemical potential $\beta_\alpha,\, \mu_\alpha$. The unitary evolution of the system-environment compound is generated by the total Hamiltonian $\H{}$.} 
  \label{fig:1}
\end{figure}
The manuscript is organized as follows. In Sec.~\ref{sec:II}, we introduce the theoretical framework, giving an overview on the Keldysh formalism and on exact Gaussian master equations (GMEs). Sec.~\ref{sec:III} is devoted to the description of the moment-generating function (MGF) formalism employed to study the quantum thermodynamics of the system-environment compound. In Sec.~\ref{sec:IV}, we present the tilted Gaussian master equation (tGME), a novel tool which governs the exact time evolution of the MGF, thereby encoding the full information on the system-environment heat statistics. Moreover, we sketch the derivation of an exact equation for the heat transport within the system-environment compound, which is our main result. In Sec.~\ref{sec:V} we derive the weak-coupling limit of the latter equation, and we prove the reduction of our formalism to the Landauer-B\"uttiker scattering theory in the steady-state limit for the simple case of a single fermionic level. In Sec.~\ref{sec:VI}, we analyze a simple case study, showcasing an example of non-trivial transient dynamics in the form of negative heat conductance, and comparing exact results with weak-coupling calculations. Finally, in Sec.~\ref{sec:VII} we draw our conclusions.
\section{Exact non-Markovian quantum dynamics}
\label{sec:II}
In this section we describe the framework employed in this work. We first introduce open Gaussian systems in Sec.~\ref{subsec:IIa}. Then, we recap the formalism of the exact Gaussian master equation (GME) introduced in Ref.~\cite{d2025exact} (Sec.~\ref{subsec:IIb}), which we will later modify to describe heat, energy and particle transport within the system-environment compound.
\subsection{Open quantum systems in the Keldysh contour}
\label{subsec:IIa} 
We consider a quantum system coupled to $R$ reservoirs, that will be collectively referred to as the environment. The total Hamiltonian can be written as
\begin{equation}
\label{eq:total-hamiltonian}
    \H{} = \H{0} + \V = \H{S} + \H{E} + \V~,
\end{equation}
where $\H{E} = \sum_{\alpha=1}^{R} \H{\alpha}$, with $\H{\alpha}$ being the Hamiltonian of the reservoir $\alpha$ and $\V$ the system-environment coupling (as schematized in Fig.~\ref{fig:1}). The coupling is taken to be of the general form
\begin{equation}
\label{eq:interaction-hamiltonian}
    \V = \sum_\mu A_\mu \otimes B_\mu~,
\end{equation}
where $\{A_\mu\}_\mu$ is a set of system operators and $\{B_\mu\}_\mu$ are environment operators.  

The starting point of our discussion is the dynamics of $S$ and $E$. Given an initial compound state $\rho_{SE}(0)$, the interaction picture system state at subsequent time $t\geq 0$ reads
\begin{equation} \label{eq:evolv}
    \rho_{S}(t) =  \tr_E \bigl[\U(t, 0) \rho_{SE}(0)\, \U^\dagger(t, 0)\bigr]~,
\end{equation}
where $\U(t, 0)$ is the physical-time, interaction picture unitary evolution operator, i.e. the time-ordered exponential
\begin{equation} \label{eq:timeord}
    \U(t, 0) = \tord \exp \left[-i\int_0^t \d \tau\,\V(\tau)\right]~,
\end{equation}
where $\V(t) = \U_0^\dagger(t, 0) \V \,\U_0(t, 0) $ and $\U_0(t, 0)$ is the free evolution operator generated by $\H{0}$. Throughout this work, we adopt natural units in which $\hbar = 1$.

Let us now introduce an instrument that will be fundamental in the derivation of our main results: the Keldysh contour $\gamma(t)$.
For this purpose, it is convenient to think to $ \U(t, 0) $ and $\U^\dagger(t, 0)$ in Eq. \eqref{eq:evolv} as two ordered exponentials on two distinct domains $\gamma_-(t)$ and $\gamma_+(t)$, respectively. These two domains, called the forward $(-)$ and backward $(+)$ branches, are respectively ordered from $0$ to $t$ and from $t$ to $0$. The Keldysh contour $\gamma(t)$ is obtained by merging the branches in such a way that the initial time is half-way through the domain 
(see Fig.~\ref{fig:2}). 
The adoption of such a contour and an extension of the usual ordering operator $\tord$ to it ($\tord$ will order operators with earlier arguments in the contour to the right) allows us to express the dynamics in the following compact way
\begin{equation}
\notag \label{eq:state-keldysh-ord}
\begin{split}
    \rho_S(t) & =  \tr_E\tord \left\{\exp\left[ -i\int_{\gamma(t)} \d z\,\V(z) \right] \rho_{SE}^{\vphantom{\dagger}}(0)\right\}~,
\end{split}
\end{equation}
In the following, we will denote with a $+$ ($-$) subscript a complex variable belonging to the $\gamma_+$ ($\gamma_-$) branch (e.g. $z_{\pm}$ for $z \in \gamma_\pm(t)$).
For more details on the Keldysh formalism and on the Keldysh approach to open quantum systems and quantum thermodynamics, we refer to the extensive literature on the topic~\cite{stefanucci2013book, kamenev2011book, funo2018pathint, cavina2023convenient}. 
\subsection{Exact Gaussian master equation}
\label{subsec:IIb}
The Gaussian hypothesis amounts to assuming that both the system and the reservoirs consist of non-interacting particles obeying the same quantum statistics. Accordingly, their Hamiltonians are taken to be quadratic in suitably defined sets of ladder operators.
In a similar fashion, the coupling Hamiltonian
 $\V$ is chosen to be bilinear in the ladder operators of $S$ and $E$.
For simplicity, we take the initial state $\rho_{SE}(0)$ to be separable across all system and reservoir degrees of freedom, i.e.
\begin{equation} \label{eq:initial-state}
    \rho_{SE}(0) = \rho_S(0) \bigotimes_{\alpha = 1}^{R} \rho_\alpha(0)~. 
\end{equation}
In the interest of completeness, we remark that the calculations that follow can be generalized to the case of a correlated state under the Gaussian hypothesis, as shown in Ref.~\cite{d2025exact}. 
Within this Gaussian framework, physical intuition suggests that the bare environment Green's function (GF) governs the system dynamics. Such a GF is written on the Keldysh contour as
\begin{equation}
\label{eq:bare-GF}
    \C_{\mu\nu}(z, w) : = \tr\tord_\zeta\bigl[B_\mu(z) B_\nu(w) \,\omega_E\bigr]~,
\end{equation}
where $\omega_E := \bigotimes_{\alpha = 1}^R \rho_\alpha(0)$ is the initial state of the environment and
we introduced the modified contour ordering $\tord_\zeta$: for $\zeta = +1$ (bosons) it denotes the standard contour ordering $\tord$, while for $\zeta = -1$ (fermions) we account for each operator switch performed in order to attain the correct ordering with a minus sign. %
It has, indeed, recently been shown~\cite{d2025exact} that we can write a closed, exact equation for the system dynamics, the Gaussian master equation (GME) 
\begin{equation}
    \label{eq:gme}
    \frac{\d \rho_S(t)}{\d t} = \int_{0}^t \d \tau \, \G_{\mu\nu}(t_+, \tau) \bigl[{A_\nu(\tau) \rho_S(t)}, A_\mu(t)\bigr] + \hc~.
\end{equation}
Here and in the following, Einstein summation convention on $\mu, \nu$ is intended, and $\G$ is a dressed GF, with arguments in $\gamma(t)$, solving the following equation:
\begin{equation}
\label{eq:bare-dyson}
     \G(12)  =  \C(12)  + \iint_{\gamma(t)} \d (34)\, \G(13)\,\Sigma(34) \,  \C(42)~,
\end{equation}
where $\Sigma$ is a matrix built from the system operators [see Eq.~\eqref{sm-eq:self-energy} in App.~\ref{app:B}], and we employed the standard shorthand $z_i \to i$ for time variables (e.g. $\C(12) := \C(z_1, z_2)$). Eq.~\eqref{eq:bare-dyson} is formally identical to the Dyson equations commonly employed in many-body quantum theory~\cite{FetterWalecka, GiulianiVignale}. In analogy with the latter, $\Sigma$ plays the role of a self-energy, encoding the memory of repeated system–environment interactions and thereby accounting for interaction vertices to all orders in the coupling. At lowest non-trivial order in the coupling, i.e. $\G \to \C$, we recover the standard “memoryless" Redfield equation~\cite{d2025exact, breuer2007book, deVega2017review}.
\begin{figure}[t]
  \centering
  \def\svgwidth{1.0\linewidth}
  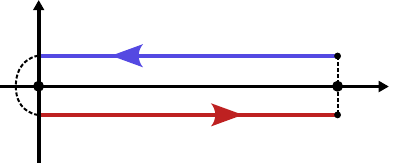
  \caption{Graphical representation of the Keldysh contour on the complex time plane.} 
  \label{fig:2}
\end{figure}
\section{From dynamics to thermodynamics: the moment-generating function}
\label{sec:III}

In order to describe heat exchanges within the system-environment compound, we adopt the {\it two-point energy measurement (TPEM)} technique~\cite{Espositoreview, talkner2016aspects}, in which the relevant reservoirs observables are measured at consecutive times $t_0$ and $t_1$, and the difference of the outcomes of the two measurements is used to quantify their energy variations. The reservoirs Hamiltonians can be measured simultaneously since $[\H{\alpha}, \H{\alpha'}] = 0$ for all $\alpha,\alpha' = 1, \dots, R$. Moreover, we assume each reservoir Hamiltonian to be number conserving, i.e. $[\H{\alpha}, \N_{\alpha}] = 0$, $\N_\alpha$ being the number operator associated with the $\alpha$-th reservoir. Fixing $t_0=0$ and $t_1 = t \geq 0$, we define the energy and particle number average variations as
\begin{align}
    \ev{\Delta \E_\alpha} (t) &: = \tr \Bigl[\H{\alpha} \bigl[\rho_\alpha(t) - \rho_\alpha(0)\bigr]\Bigr]~,\\
    \ev{\Delta N_\alpha} (t) &: = \tr \Bigl[\N_{\alpha} \bigl[\rho_\alpha(t) - \rho_\alpha(0)\bigr]\Bigr]~.
\end{align}
The heat exchange follows directly from the first law of thermodynamics, which reads
\begin{equation}
\label{eq:heat-exchange}
    \begin{split}
        \ev{Q_\alpha} &= \ev{\Delta \E_\alpha} - \mu_\alpha \ev{\Delta N_\alpha}\\
        &  =\tr \Bigl[\Omega_{\alpha} \bigl[\rho_\alpha(t) - \rho_\alpha(0)\bigr]\Bigr]
    \end{split}
\end{equation}
where $\mu_\alpha$ is the chemical potential associated with the $\alpha$-th reservoir, and $\Omega_\alpha = \H{\alpha} -\mu_\alpha \N_\alpha$ is the corresponding grand-canonical Hamiltonian.
Given a collective initial state $\rho_{SE}(0)$ satisfying $[\rho_{SE}(0), \Omega_\alpha] = 0$ $\forall \alpha$, we can build a MGF for the heat exchanged with each reservoir as follows:
\begin{equation}
\label{eq:MGF}
    \begin{split}
        \M(t; \bm \lambda) :=  \tr\left[ \tilde \U_{\bm \lambda}^{\vphantom{\dagger}} \,(t, 0) \rho_{SE}^{\vphantom{\dagger}}(0)\, \tilde \U_{-\bm\lambda}^\dagger(t, 0) \right]~,
    \end{split}
\end{equation}
where we introduced the tilted evolution operator
\begin{equation}
    \tilde \U_{\bm \lambda}^{\vphantom{\dagger}}(t, 0) := e^{-i \sum_\alpha\Omega_{\alpha}\lambda_\alpha/2} \,\U(t, 0) \,e^{i \sum_\alpha \Omega_{\alpha}\lambda_\alpha/2}~.
\end{equation}
Here, $\bm \lambda \in \mathbb R^{R}$ is a vector of tilting parameters. Such a generating function allows us to immediately obtain the heat exchange~\eqref{eq:heat-exchange} as
\begin{equation}
\label{eq:heat-from-MGF}
    \ev{Q_\alpha}(t) = i\frac{\partial  \M(t; \bm \lambda)}{\partial \lambda_\alpha} \bigg|_{\bm \lambda = \bm 0}~.
\end{equation}
Exploiting the Keldysh contour in Fig.~\ref{fig:2}, we can write the generating function as trace of a single ordered exponential:
\begin{equation}
\label{eq:MGF-exp}
\begin{split}
    &\M(t; \bm \lambda) = \\& \tr\tord \left\{\exp\left[ -i\int_{\gamma(t)} \d z\,\V_{\bm \lambda}(z) \right] \rho_{SE}^{\vphantom{\dagger}}(0)\right\}~,
\end{split}
\end{equation}
where $z \in \mathbb C$ and we defined
\begin{equation}
\label{eq:tilted-interaction}
    \V_{\bm \lambda}(z_\pm) : = u_{\mp \bm \lambda}^{\vphantom{\dagger}}\V(z)\, u_{\mp\bm\lambda}^\dagger 
\end{equation}
where $u_{\bm \lambda} := e^{-i\sum_\alpha\Omega_\alpha \lambda_\alpha/2}$.
In Eq.~\eqref{eq:MGF-exp}, the time-ordering is inteded to act as a contour ordering, putting to the left operators that appear “later" in the sense of the Keldysh contour in Fig.~\ref{fig:2}. Both Eq.~\eqref{eq:heat-from-MGF} and Eq.~\eqref{eq:MGF-exp} are proven in App.~\ref{app:A}.

In order to invoke the machinery of exact master equations for our purposes, we factorize the total trace in Eq.~\eqref{eq:MGF-exp} as $\tr = \tr_S \tr_E$, obtaining 
\begin{equation}
\label{eq:MGF-trace}
    \M(t; \bm \lambda) : = \tr_S \tilde \rho_{S}(t; \bm\lambda)~.
\end{equation}
Here, $\tilde \rho_{S}(t; \bm\lambda)$ is a tilted density operator given by
\begin{equation}
\label{eq:MGF-contour}
\begin{split}
    &\tilde \rho_{S}(t; \bm\lambda) :=\\ & \tr_E\tord \biggl\{\exp\left[ -i\int_{\gamma(t)} \d z\,\V_{\bm \lambda}(z) \right] \rho_{SE}^{\vphantom{\dagger}} (0)\biggr\} ~,
\end{split}
\end{equation}
and satisfying $\tilde \rho_{S}(t; \bm 0) = \rho_S(t)$. We remark that $\tilde \rho_{S}(t; \bm\lambda)$ is not a state, for it is not necessarily positive semidefinite nor normalized. The goal is now to write an exact master equation for $\tilde \rho_{S}(t; \bm\lambda)$, in analogy to the GME~\eqref{eq:gme} for $\rho_S(t)$, and therefore to obtain the MGF as its trace on the system's degrees of freedom.

Since the environment grand-canonical Hamiltonians $\{\Omega_\alpha\}_\alpha$ commute with the system operators, we can write the tilted interaction Hamiltonian~\eqref{eq:tilted-interaction} as
\begin{equation}
\label{eq:tilted-interaction-sum}
    \V_{\bm \lambda}(z) = \sum_\mu A_\mu(z) \otimes \tilde B_\mu(z; \bm \lambda)~,
\end{equation}
where only the environment operators $\{\tilde B_\mu(z; \bm \lambda)\}_\mu$ with 
\begin{equation}
\label{eq:tilted-operators}
    \tilde B_\mu(z_\pm;\bm \lambda) : = u_{\mp \bm \lambda}^{\phantom{\dagger}} B_{\mu}(z) u_{\mp \bm \lambda}^\dagger 
\end{equation}
depend on the contour branch. The other main ingredient will be a {\it tilted} GF, that, in analogy with Eq.~\eqref{eq:bare-GF}, is defined as
\begin{equation}
\label{eq:tilted-GF}
    \tilde\C_{\mu\nu}(z, w; \bm \lambda) : = \tr\tord_\zeta\bigl[\tilde B_\mu(z; \bm \lambda) \tilde B_\nu(w; \bm \lambda) \omega_E\bigr]~.
\end{equation}
Evaluated in $\bm \lambda = \bm 0$, the latter is just the standard contour-ordered environment GF~\eqref{eq:bare-GF}. 

By hermiticity of the interaction Hamiltonian~\eqref{eq:interaction-hamiltonian}, we can write the set of environment operators $\{B_\mu\}_\mu$ as a collection of pairs $\{(B_\alpha^\dagger, B_\alpha^{\vphantom{\dagger}})\}_{\alpha=1}^R$, each pair corresponding to a different reservoir. Without loss of generality, we assume $B_\alpha$ to be linear in the $\alpha$-th reservoir annihilation operators. It is thus immediate to see that, for the case of $R$ uncorrelated reservoirs, it is always possible to order the reservoir operators in such a way that the matrix $\tilde\C$ will be of the following block form in the $\mu,\nu$ indices:
\begin{equation}
\label{eq:tilted-GF-blocks}
   \tilde \C = \operatorname{diag} \bigl(\tilde \C_1, \tilde\C_2, \dots, \tilde  \C_R\bigr)~,
\end{equation}
where each $\tilde \C_\alpha$ is a $2\times 2$ block associated to the $\alpha$-th reservoir, containing the four possible correlators between the operators $\tilde{B}_{\alpha}^{\dag}$, $\tilde{B}_{\alpha}$. 
As proven in App.~\ref{app:A}, the tilted GF can be straightforwardly connected to the physical-time GF
\begin{equation}
\label{eq:chi}
    \begin{split}
        \chi_{\mu\nu}(s)
        & = \tr \bigl[B_\mu(s) B_\nu(0) \omega_E\bigr]\\
        & = \operatorname{diag} \bigl(\chi_1, \chi_2, \dots, \chi_R \bigr)_{\mu\nu}~,
    \end{split}
\end{equation}
just by fixing the arguments $z$ and $w$ on different branches of the Keldysh contour. We choose $z = z_{\pm}$ and $w = w_{\pm}$ (we recall that $z_{\pm}$ and $w_{\pm}$ take values in the interval $[0,t]$ and are assigned either to the forward branch $\gamma_+$ or to the backward branch $\gamma_-$) and obtain
\begin{align}
\label{eq:C-tord}
    \tilde \C^{\,\tord}_\alpha(z, w) & := \tilde \C_\alpha(z_-, w_-) \\&= \theta(z-w) \chi_\alpha(z-w) 
    + \zeta \theta(w-z) \chi_\alpha^T(w-z)\notag~,\\
\label{eq:C-drot}
    \tilde \C^{\,\drot}_\alpha(z, w) & := \tilde \C_\alpha(z_+, w_+)\\&= \theta(w-z) \chi_\alpha(z-w) 
    + \zeta \theta(z-w) \chi_\alpha^T(w-z)\notag~,
\end{align}
where $T$ denotes matrix transposition, and 
\begin{align}
\label{eq:C->}
    \tilde \C^>_\alpha (z, w; \bm \lambda)  & := \tilde \C_\alpha(z_+, w_-)\\& = e^{-i\mu_\alpha\lambda_\alpha \sigma_z} \,\chi_\alpha(z-w+\lambda_\alpha)\notag~,\\
\label{eq:C-<}
    \tilde \C^<_\alpha (z, w; \bm \lambda) & := \tilde \C_\alpha(z_-, w_+)\\& = \zeta e^{i\mu_\alpha\lambda_\alpha \sigma_z} \,\chi_\alpha^T(w-z+\lambda_\alpha)\notag~.
\end{align}
In the literature on the Keldysh formalism, the latter functions are commonly referred to, respectively, as the time-ordered, anti-time-ordered, greater and lesser components of the GF~\cite{stefanucci2013book, kamenev2011book}. We would like to emphasize that the tilting does {\it not} affect the $\tord$ and $\drot$ components of the GF, as it is evident in Eqs.~\eqref{eq:C-tord}-\eqref{eq:C-drot}. Indeed, we have $\tilde\C^{\,\tord} \equiv \C^{\tord}$, $\tilde\C^{\,\drot} \equiv \C^{\drot}$. The tilting only acts on the greater and lesser components, as in Eqs.~\eqref{eq:C->} and~\eqref{eq:C-<}, which also give us a recipe to immediately evaluate them from the physical-time GF $\chi$. We can now proceed to derive an exact equation for the tilted density operator $\tilde \rho_S(t; \bm \lambda)$, which will in turn give us access to a simple formula for the heat exchange between reservoirs.
\begin{figure}
    \centering
    \begin{overpic}[width=1.\linewidth]{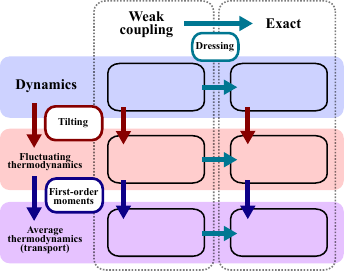}
        %
        \put(34.5, 52){\Large $\C$}
        \put(42, 55){\footnotesize Eq.~\eqref{eq:bare-GF}}
        \put(42, 50){\footnotesize Ref.~\cite{breuer2007book}}
        \put(70.5, 52){\Large $\G$}
        \put(78, 55){\footnotesize Eq.~\eqref{eq:bare-dyson}}
        \put(78, 50){\footnotesize Ref.~\cite{d2025exact}}
        \put(34.5, 30){\Large $\tilde\C$}
        \put(42, 34){\footnotesize Eq.~\eqref{eq:tilted-GF}}
        \put(42, 29){\footnotesize Ref.~\cite{Espositoreview}}
        \put(70.5, 30){\Large $\tilde\G$}
        \put(78, 34.){\footnotesize Eq.~\eqref{eq:tGME}}
        \put(78, 29.){\footnotesize Eq.~\eqref{eq:tilted-dyson}}
        \put(90.9, 35.7){$\bigstar$}
        \put(34.5, 9.5){\Large $\cc_\alpha$}
        \put(42, 12.5){\footnotesize Eq.~\eqref{eq:goth-c}}
        \put(42, 7.5){\footnotesize Eq.~\eqref{eq:wc-transport-equation}}
        \put(55.2, 14.7){$\bigstar$}
         \put(70.5, 9.5){\Large $\g_\alpha$}
        \put(78, 12.5){\footnotesize Eq.~\eqref{eq:transport-equation}}
        \put(78, 7.5){\footnotesize Eq.~\eqref{eq:goth-g}}
        \put(90.9, 14.7){$\bigstar$}
    \end{overpic}
    \caption{A summary of the results presented in Secs.~\ref{sec:II} to~\ref{sec:IV}. Star-marked boxes contain equations introduced in the present work. The equations valid in the weak-coupling regime (left) are set apart from the exact results (right), which are valid for arbitrary system-environment coupling strength. The exact dynamics of the system is captured by the GME introduced in Ref.~\cite{d2025exact}, characterized by the dressed GF $\G$, that can be obtained by “dressing" the contour GF [Eq.~\eqref{eq:bare-GF}] via the Dyson equation~\eqref{eq:bare-dyson}. In the weak-coupling limit, we recover the standard “memoryless" Redfield equation~\cite{breuer2007book, deVega2017review}. The tilting procedure, which maps $\C \to \tilde \C$ and $\G \to \tilde \G$, allows us to study the fluctuating thermodynamics of the system-environment compound via a similar machinery~\cite{Espositoreview}. Indeed, the MGF~\eqref{eq:MGF}, which time evolution is described by $\tilde \G$ via the tGME~\eqref{eq:tGME}, contains all the information on the heat statistics of the compound. Finally, by evaluating the first-order moments of such a distribution, we obtain exact transport equations governed by heat kernels $\g_\alpha$, which are in turn obtained from the “memoryless" kernels $\cc_\alpha$ by dressing them according to the Dyson form~\eqref{eq:goth-g-dyson-contour}.} 
    \label{fig:3}
\end{figure}
\section{Exact transport equations}
\label{sec:IV}
In this section we present the main results of this manuscript. In Sec.~\ref{subsec:IVa}, we discuss the tilted version of the GME~\eqref{eq:gme} and sketch the derivation of an exact equation for the heat exchange within the system-environment compound. In Sec.~\ref{subsec:IVb}, we analyze the kernel which defines such an equation, and we show how we can immediately derive analogous forms for the energy and particle exchange.
\subsection{An exact equation for the heat current}
\label{subsec:IVa}
The derivation of the tilted Gaussian master equation (tGME), i.e. the equation governing the evolution of $\tilde\rho_S(t; \bm \lambda)$, follows similar steps to the non-tilted case~\cite{d2025exact} and is carried out in App.~\ref{app:B}. The tGME reads as follows:
\begin{equation}
\label{eq:tGME}
    \begin{split}
        \frac{\d \tilde\rho_S(t; \bm \lambda)}{\d t} 
         = \int_{\gamma(t)} \d z\, &\Bigl\{
            \tilde\G_{\mu \nu}(t_+, z; \bm \lambda) \overline{ A_\nu(z) \tilde\rho_S(t; \bm \lambda) } A_\mu(t) \\ & \!\!\!
            - \tilde\G_{\mu \nu}(t_-, z; \bm \lambda) A_\mu(t) \overline{A_\nu(z) \tilde\rho_S(t; \bm \lambda)}\Bigr\} ~,
    \end{split}
\end{equation}
where $\overline{ A(z)\,\bullet}$ is  $A(z) \,\bullet$ if $z \in\gamma_-(t)$ and $ \zeta \bullet A(z) $ if $z\in\gamma_+(t)$. The tilted, dressed GF $\tilde \G$ is defined by the Dyson form
\begin{equation}
\label{eq:tilted-dyson}
    \tilde \G(12)  = \tilde \C(12)  + \iint_{\gamma(t)} \d (34)\,\tilde \G(13)\,\Sigma(34) \, \tilde \C(42)~,
\end{equation}
which has a physical interpretation analogous to that of Eq.~\eqref{eq:bare-dyson}. As a consistency check, we note that in the $\bm \lambda \to \bm 0$ limit the tGME reduces to the standard GME~\eqref{eq:gme}, and we have $\tilde \G(z, w; \bm \lambda \to \bm 0) \equiv \G(z, w)$, $\G$ being the dressed GF that solves Eq.~\eqref{eq:bare-dyson}.

It is now pretty straightforward to derive an equation for the heat transport directly from the tGME~\eqref{eq:tGME}. Indeed, by writing Eq.~\eqref{eq:tGME} in physical time, we obtain a real-time ODE for the $\tilde\rho_S(t;\bm \lambda)$. Then, recalling that $\M(t; \bm \lambda) : = \tr_S \tilde \rho_{S}(t; \bm\lambda)$ by definition, we can obtain a formula for the heat current flowing into the $\alpha$-th reservoir just by differentiating Eq.~\eqref{eq:tGME} with respect to $\lambda_\alpha$ in $\bm \lambda = \bm 0$ and then taking the trace with respect to the system degrees of freedom:
\begin{equation}
\label{eq:heat-current}
    \begin{split}
        I_Q^{(\alpha)}(t) & := \frac{\d \ev{Q_\alpha}(t)}{\d t}\\ & = i\tr_S \left[\frac{\partial}{ \partial \lambda_\alpha}\frac{\d \tilde\rho_S(t; \bm \lambda)}{\d t} \bigg|_{\bm \lambda = \bm 0} \,\right]~.
    \end{split}
\end{equation}
By carrying out this procedure, as done in App.~\ref{app:C}, we obtain the following heat transport equation, which is the main result of this work:
\begin{equation}
\label{eq:transport-equation}
    \begin{split}
        I_Q^{(\alpha)}(t)=  \int_{0}^{t} \d \tau\, 
            \bigl[\g^>_{\alpha;\mu\nu}(t, \tau) - \g^\tord_{\alpha;\mu\nu}(t, \tau) \bigr] &\ev{A_\mu(t)A_\nu(\tau)}_t \\&+{\rm c.c.} ~.
    \end{split}
\end{equation}
Here, $\g^>_\alpha$ and $\g^\tord_\alpha$ are Keldysh components [see also Eqs. \eqref{eq:C->}, \eqref{eq:C-tord}] of the {\it dressed heat kernel} corresponding to the $\alpha$-th reservoir, defined as
\begin{equation}
\label{eq:goth-g}
    \g_\alpha (z,w) : = i\frac{\partial\tilde \G(z, w; \bm \lambda )}{\partial \lambda_\alpha} \bigg|_{\bm \lambda = \bm 0}~,
\end{equation}
which is convolved with the system correlators
\begin{equation}
\label{eq:system-correlators}
    \ev{A_\mu(t)A_\nu(\tau)}_t : = \tr_S\left[ A_\mu(t)A_\nu(\tau) \rho_S(t) \right]~.
\end{equation}
We emphasize that evaluating these correlators requires full knowledge of the system dynamics, encoded in the density matrix $\rho_S(t)$. In order to have an exact equation, the latter must be calculated via the GME~\eqref{eq:gme}. We further observe that, through the system correlators~\eqref{eq:system-correlators}, the heat current depends on the system’s initial preparation. Further differentiation of the tGME~\eqref{eq:tGME} yields ordinary differential equations for higher-order moments of the heat distribution. A closed form for the heat covariance matrix $\ev{Q_\alpha Q_\beta}(t)$ is presented and derived in App.~\ref{app:F}. 

Let us now turn to the dressed heat kernel~\eqref{eq:goth-g}. The calculation of the latter, carried out in App.~\ref{app:D}, is performed via first-order-in-$\bm \lambda$ Taylor expansion of the tilded Dyson equation~\eqref{eq:tilted-dyson}. It naturally finds a close recursive form in terms of the {\it bare} heat kernel
\begin{equation}
\label{eq:goth-c}
    \cc_\alpha(z,w) : = i\frac{\partial\tilde \C(z, w; \bm \lambda )}{\partial \lambda_\alpha} \bigg|_{\bm \lambda = \bm 0}~.
\end{equation}
Such a recursive form can be presented as follows:
\begin{equation}
\label{eq:goth-g-dyson-contour}
    \begin{split}
        &{\g}_\alpha(12)  = {\cc}_\alpha(12) \\
        &+\iint_{\gamma(t)} \d(34)\,\Bigl\{\G(13) \Sigma(34)   {\cc}_{\alpha}(42) \\
        &\phantom{\iint_{\gamma(t)} \d(34)\,\Bigl\{} + {\g}_{\alpha}(13) \Sigma(34)  \C(42)\Bigr\}~.
    \end{split}
\end{equation}
We observe that the solution of Eq.~\eqref{eq:goth-g-dyson-contour} requires knowledge of the GME dressed GF $\G$ given by Eq.~\eqref{eq:bare-dyson}. We now proceed to rewrite Eq.~\eqref{eq:goth-g-dyson-contour} in physical time, obtaining coupled integral equations for the Keldysh components of $\g_\alpha$. The resulting Dyson form for the pair $\vec \g_\alpha := (\g^>_\alpha, \g^\drot_\alpha)$ is
\begin{equation}
\label{eq:physical-time-goth-g-dyson}
        \vec \g_\alpha
        =
        \begin{pmatrix}
            \cc_\alpha^> - \K^< \cc_\alpha^>\\
            \K^> \cc_\alpha^<
        \end{pmatrix}
        +
        \begin{pmatrix}
             \I^\tord\ & -\I^>\\
             \I^< & -\I^\drot
        \end{pmatrix}
        \vec \g_\alpha~,
\end{equation}
where we defined the following functionals:
\begin{align}
\label{eq:K-functional}
    \bigl[\K^X f\bigr](12) & : = \iint_{[0, t]} \d^2 \bm w\,\G^X(13) \,\Sigma^\tord(34)  \, f(42)~,\\
\label{eq:I-functional}
    \bigl[\I^Yf\bigr](12) & : = \iint_{[0, t]} \d^2 \bm w\,f(13)\,\Sigma^\tord(34)\, \C^Y(42) ~,
\end{align}
for $X\in[>,<]$ and $Y \in [>, <,\tord, \drot]$. A simpler form of such functionals in terms of elementary correlators is given in App.~\ref{app:D}. The remaining Keldysh components, i.e. $\g^<_\alpha$ and $\g^\tord_\alpha$, can be obtained from $\g^>_\alpha$ and $\g^\drot_\alpha$ respectively by performing the simple effective conjugation
\begin{equation}
   \effconj{f} := \zeta (\1_R\otimes \sigma_x) f^* (\1_R\otimes \sigma_x) ~,
\end{equation}
as proven in Corollary~\ref{cor:3} to Lemma~\ref{lemma:4} in App.~\ref{app:C}. We note that Eq.~\eqref{eq:physical-time-goth-g-dyson} may be solved employing well-studied tools of many-body and condensed matter physics (see e.g. Refs.~\cite{Talarico2019scalable, lamic2025solving}). In Fig.~\ref{fig:3} we recap the results obtained so far, remarking the connection between our work and the existing literature.
\subsection{Details on the bare heat kernel}
\label{subsec:IVb}
In order to give a complete picture, we now discuss the bare heat kernel and identify inside the latter the contributions connected to energy transport and those related to particle exchange. Such a procedure will allow us to write closed forms for the energy current and for the particle number current, namely
\begin{equation}
    I_{\E}^{(\alpha)}(t) := \frac{\d \ev{\E_\alpha}(t)}{\d t}~, \quad I_{N}^{(\alpha)}(t) := \frac{\d \ev{N_\alpha}(t)}{\d t}~.
\end{equation}
Out of the four Keldysh components of $\cc_\alpha$, only $\cc^>_\alpha$ and $\cc^<_\alpha$ are non-zero, as can be immediately seen by looking at definiton~\eqref{eq:goth-c} together with Eqs.~\eqref{eq:C-tord}-\eqref{eq:C-<}. Moreover, as argued above, we have $\cc^<_\alpha = \effconj\cc^>_\alpha$, so we may restrict out attention to $\cc^>_\alpha$. Since only the $\alpha$-th block in the matrix form~\eqref{eq:tilted-GF-blocks} of the bare, tilted GF depends on $\lambda_\alpha$, the bare heat kernel consists of a single non-zero block, i.e.
\begin{equation}
    \cc_\alpha = \operatorname{diag}\left(0, \dots,i\frac{\partial\tilde \C_\alpha}{\partial \lambda_\alpha} \bigg|_{\bm \lambda = \bm 0}, \dots, 0\right)~,
\end{equation}
which in the following---by a slight abuse of notation---we will denote simply as $\cc_\alpha$. We observe that this property does not hold for the dressed heat kernel $\g_\alpha$, as out-of-diagonal terms appear after “dressing" the bare GF. 
Per the derivation in App.~\ref{app:E}, we can express $\cc_\alpha$ in terms of the physical-time GF~\eqref{eq:chi}:
\begin{equation}
\label{eq:bare-heat-kernel}
     \cc_\alpha^> (z -w)= i \frac{\partial \chi_\alpha(s)}{\partial s}\Bigg|_{s=z-w} +\mu_\alpha  \sigma_z \chi_{\alpha}(z-w)~,
\end{equation}
where we emphasized the time-homogeneity of the bare heat kernel. The RHS of Eq.~\eqref{eq:bare-heat-kernel} is particularly eloquent, as it is already separated into contributions of different nature: $i\partial_s\chi$ is evidently responsible for energy transport, as it is the only remaining contribution when substituting the standard reservoir Hamiltonians $\H{\alpha}$ to the grand-canonical Hamiltonians $\Omega_\alpha$ in Eq.~\eqref{eq:MGF}, which is equivalent to setting $\mu_\alpha = 0$ $\forall \alpha$. The remaining term, proportional to $\chi_\alpha$, can be therefore identified as responsible for particle number transport (in the opposite direction). We may thus define the bare {\it energy} and {\it particle number} kernels
\begin{align}
    \bigl[\cc_{\alpha}^{(\E)}(z - w)\bigr]^> & := i\frac{\partial \chi_\alpha(s)}{\partial s}\Bigg|_{s=z-w} ~,\\
     \bigl[\cc_{\alpha}^{(N)}(z -w)\bigr]^> & :=  -\sigma_z   \chi_{\alpha}(z-w)~,
\end{align}
such that
\begin{equation}
    \cc_\alpha =\cc^{(\E)}_\alpha-\mu_\alpha \cc^{(N)}_\alpha~.
\end{equation}
It is then immediate to see that we can solve for the fixed point of the recursive map~\eqref{eq:physical-time-goth-g-dyson} with $\cc^{(\E)}_\alpha$ and $\cc^{(N)}_\alpha$ as starting points, thereby obtaining dressed kernels $\g^{(\E)}_\alpha$ and $\g^{(N)}_\alpha$, respectively. Substituting these to $\g_\alpha$ in Eq.~\eqref{eq:transport-equation} yields exact expressions for $I_{\E}^{(\alpha)}(t)$ and $I_{N}^{(\alpha)}(t)$ of the same formal structure as that for $I_{Q}^{(\alpha)}(t)$.

We close the discussion on the bare heat kernel by giving a simple analytical recipe to evaluate the latter for the case of uncorrelated thermal reservoirs characterized by spectral densities $\{J_\alpha(\omega)\}_\alpha$. Inspecting the definition of $\chi$ given in Eq.~\eqref{eq:chi}, is it straightforward to see that each bare heat kernel must assume the matrix form
\begin{equation}
\label{eq:bare-heat-kernel-matrix}
    \cc_\alpha = 
    \begin{pmatrix}
        0 & \cc_{\alpha,-}\\
        \cc_{\alpha,+} & 0
    \end{pmatrix}~,
\end{equation}
where the notation is reminiscent of the conventional association of the elements of $\chi$ with absorption and emission processes~\footnote{This $\pm$ subscript introduced here, which refers to the standard association of certain GF matrix elements to absorption and emission processes, is not to be confused with the similar notation introduced in Sec.~\ref{subsec:IIa} for variables on the $\gamma_{\pm}$ branches of the Keldysh contour.}. As shown in App.~\ref{app:E}, the two relevant matrix elements can be computed explicitly as
\begin{align}
\label{eq:goth-c-}
    \cc_{\alpha, -}^>(s)  
        & =  - \int_0^{\infty} \d \omega \,\bigl(\omega-\mu_\alpha\bigr) J_\alpha(\omega) f_\alpha(\omega) e^{i\omega s}~, \\[5pt]
\label{eq:goth-c+}
    \cc_{\alpha, +}^>(s)  
        & = + \int_0^{\infty} \d \omega \,\bigl(\omega-\mu_\alpha\bigr) J_\alpha(\omega) \bigl[1+\zeta f_\alpha(\omega) \bigr]e^{-i\omega s}~,
\end{align}
where 
\begin{equation}
    f_\alpha(\omega) = \frac{1}{e^{\beta_{\alpha}(\omega - \mu_\alpha)} - \zeta}
\end{equation}
is the Bose-Einstein (Fermi-Dirac) distribution for $\zeta = 1$ ($\zeta = -1$).

\section{Weak-coupling limit and steady-state limit}
\label{sec:V}

Inspecting the map given by Eq.~\eqref{eq:physical-time-goth-g-dyson}, we see that acting on the $(\g^>, \g^\drot)^{(n)}$ pair with the $\I$ functionals shifts the pair to the next order in the system-environment coupling. The term $\K^{\drot} \cc^>_\alpha$ ($\K^{>} \cc^<_\alpha$), when calculated by plugging in an iteration $\G^{(n')}$ of the dressed GF exact at order $n'+1$ in the coupling, is of order $n'+2$. Therefore, the lowest-order non-vanishing contribution to $\g_\alpha$ is just the bare heat kernel $\cc_\alpha$, exact up to first order in the coupling, in complete analogy with the dressed GF solving Eq.~\eqref{eq:bare-dyson}. Since the bare heat kernel lacks a time-ordered Keldysh component, we can immediately write the weak-coupling (w.c.) limit of the exact transport equation~\eqref{eq:transport-equation} as follows:
\begin{equation}
\label{eq:wc-transport-equation}
    \begin{split}
        I_Q^{(\alpha)}(t) \overset{\rm w.c.}{=}  \int_{0}^{t} \d \tau\, 
            \cc^>_{\alpha;\mu\nu}(t- \tau)&\ev{A_\mu(t)A_\nu(\tau)}_t +{\rm c.c.} ~.
    \end{split}
\end{equation}
We may get an even more eloquent form if we exploit the form~\eqref{eq:bare-heat-kernel-matrix} of the bare heat kernel and identify---in the absence on anomalous system correlations---the system correlators associated with emission and absorption into the $\alpha$-th reservoir, i.e.
\begin{align}
    \AA^-_\alpha(t, \tau) &:= \ev{A_\alpha^{\dagger}(t)A_\alpha(\tau)}_t~,\\
    \AA^+_\alpha(t, \tau) &:= \ev{A_\alpha(t)A_\alpha^{\dagger}(\tau)}_t~.
\end{align}
Thus, Eq.~\eqref{eq:wc-transport-equation} may be presented as
\begin{equation}
\label{eq:wc-transport-equation-II}
    \begin{split}
         I_{Q}^{(\alpha)}(t) \overset{\rm w.c.}{=} &\int_0^t \d \tau\, \Bigl[\cc_{\alpha,-}^>(t-\tau)\AA_\alpha^+(t, \tau) \\
        & \phantom{\int_0^t \d \tau} \!\!+ \cc_{\alpha,+}^>(t-\tau)\AA_\alpha^-(t, \tau) \Bigr] + {\rm c.c.}~,
    \end{split}
\end{equation}
where the absorption component of the heat kernel couples to the emission correlator of the system, and vice versa. To be consistent with the w.c. limit, the system correlators evolution must be described within the weak-coupling limit of the GME, i.e. Eq.~\eqref{eq:gme} with $\G \to \C$.

We observe that Eq.~\eqref{eq:wc-transport-equation}, although being valid in the weak-coupling limit only, still yields the heat currents at each instant $t \geq 0$, thus incorporating the full out-of-equilibrium transient dynamics of the system-environment compound. Moreover, such dynamics will clearly depend on the initial preparation of the system.

To gain insight on the steady-state reduction of the present framework, let us consider a simple example in which fermionic thermal reservoirs are coupled through a single fermionic mode $\a{0}$ of energy $\omega_0$, such that the relevant system correlators are given by
\begin{align}
    \AA_0^-(t,\tau) & = e^{+i\omega_0(t-\tau)}\,f_0(t)~,\\
    \AA_0^+(t,\tau) & = e^{-i\omega_0(t-\tau)}\,\bigl[1-f_0(t)\bigr]~,
\end{align}
where $f_0(t) = \ev{\ad{0} \a{0}}_t$ is the population of the mode at time $t \geq 0$. Equation~\eqref{eq:wc-transport-equation-II} immediately yields the following:
\begin{equation}
    \begin{split}
        I_{Q}^{(\alpha)}(t) & \overset{\rm w.c.}{=} \int_0^{\infty} \d \omega\,(\omega-\mu_\alpha)\,J_\alpha(\omega)\\&\times\Bigl\{f_0(t) - f_\alpha(\omega) \Bigr\} \left[ 2\frac{\sin[(\omega-\omega_0)t]}{\omega-\omega_0}\right]~.
    \end{split}
\end{equation}
We may now consider the distributional limit
\begin{equation}
    \lim_{t\to \infty}\frac{\sin[(\omega-\omega_0)t]}{\omega-\omega_0} = \pi \delta(\omega-\omega_0)~,
\end{equation}
and admit a broadening of the system spectrum, i.e.
\begin{equation} 
\label{eq:broadening}
    \pi \delta(\omega-\omega_0) \to \frac{\gamma}{(\omega-\omega_0)^2 + \gamma^2}~,
\end{equation}
in order to mimic a transport channel of finite spectral width. The latter is assumed to approach a steady-state population $f_0(\omega) := \lim_{t\to\infty} f_0(t, \omega)$, which must now be treated as energy-dependent as a consequence of the broadening [Eq.~\eqref{eq:broadening}]. Eventually, we obtain
\begin{equation}
\label{eq:landauer-buttiker}
    I_{Q}^{(\alpha)}(\infty)  \overset{\rm w.c.}{=} \int_{0}^{\infty} \d \omega\,(\omega-\mu_\alpha)\,\T_{\alpha}(\omega)\bigl[f_0(\omega)- f_\alpha(\omega) \bigr]~,
\end{equation}
where we defined the transmission function
\begin{equation}
    \T_\alpha(\omega) = \frac{2\gamma J_\alpha(\omega)}{(\omega-\omega_0)^2 + \gamma^2}~.
\end{equation}
Eq.~\eqref{eq:landauer-buttiker} matches the well-known formula for the heat current flowing into the $\alpha$-th lead that we would have obtained by applying the Landauer-B\"uttiker scattering formalism~\cite{landauer1957spatial, buttiker1986four}.

\section{A case study: transient negative conductance}
\label{sec:VI}
\begin{figure}[t]
  \centering
  \def\svgwidth{1.0\linewidth}
  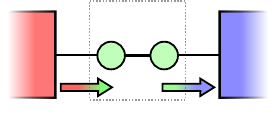
  \caption{Schematic representation of the model analyzed in Sec.~\ref{sec:VI}, showing the current flow within the system-environment compound.} 
  \label{fig:4}
\end{figure}
\begin{figure}[t]
    \centering
    \begin{tabular}{c}
    \begin{overpic}[width = 1.\linewidth]{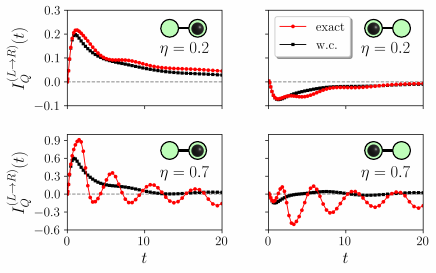}
        \put(3,60){(a)}
        \put(54,60){(b)}
        \put(3,32){(c)}
        \put(54,32){(d)}
    \end{overpic}\\[10pt]
    \begin{overpic}[width = 1.\linewidth]{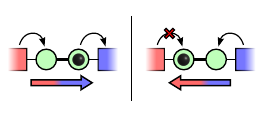}
        \put(3,45){(f)}
        \put(10,38){$ \psi_S(0) = |0\rangle_1 \otimes |1\rangle_2 $}
        \put(12,4){$  I_Q^{(L\to R)}(t) > 0 $}
        \put(54,45){(g)}
        \put(61,38){$ \psi_S(0) = |1\rangle_1 \otimes |0\rangle_2 $}
        \put(63,4){$  I_Q^{(L\to R)}(t) < 0 $}
    \end{overpic}
    \end{tabular}
    \caption{(a)-(d) Exact (red dots) and weak-coupling (black squares) current flow $I_Q^{(L\to R)}$ [Eq.~\eqref{eq:LtoR-current}] for different values of the coupling constant $\eta$ and different system initialization (depicted in the inset). The system parameters are $\varepsilon_1 = 0.5$, $\varepsilon_2=1.0$, $\Delta = 0.7$. The reservoirs are initialized at inverse temperatures $\beta_L = 1$ and $\beta_R = 10$. Other numerical parameters are specified in App.~\ref{app:G}. When the system is initialized in $|1\rangle_1 \otimes |0\rangle_2$ [(b) and (d)] we observe transient anomalous heat flow from the colder reservoir to the hotter one. In the strong-coupling regime [(c) and (d)] the w.c. approximation fails to capture the full current dynamics. (f)-(g) Pictorial representation of the phenomenology behind the emergence of negative heat conductance at small $t$: when site $1$ is full and site $2$ is empty [(g)], Pauli exclusion prevents heat exchange between the hotter reservoir and the system, and the latter is forced to absorb heat from the colder reservoir, thus resulting in net heat flow against the thermal bias.}
    \label{fig:5}
\end{figure}
In this section, we employ the exact transport equation [Eq.~\eqref{eq:transport-equation}] to investigate a minimal model, revealing highly non-trivial transient dynamics.

The model we choose, schematized in Fig.~\ref{fig:4}, consists of a two-site fermion chain subject to anomalous pairing. The system Hamiltonian is
\begin{equation}
    \H{S} = \varepsilon_1 \ad{1}\a{1} + \varepsilon_2 \ad{2}\a{2} + \Delta \Bigl(\ad{1}\ad{2}-\a{1}\a{2}\Bigr)~,
\end{equation}
where $\varepsilon_1, \varepsilon_2$ are on-site energies and $\Delta$ is the anomalous pairing energy. We couple such chain to multimode reservoirs characterized by Hamiltonians
\begin{equation}
    \H{\alpha} = \sum_l \omega_{\alpha, l}\cd{\alpha, l} \c{\alpha, l}~,
\end{equation}
where $\alpha = L, R$ denote the left and right reservoirs, respectively, and each mode $l$ is represented by the annihilation operator $\c{\alpha, l}$ . Each reservoir is linearly coupled to the a single site of the chain:
\begin{equation}
    \V = \sum_l \left(g^{\vphantom{\dagger}}_{L, l} \,\ad{1} \c{L, l} + g^{\vphantom{\dagger}}_{R, l} \,\ad{2} \c{R, l}\right) + \hc~,
\end{equation}
where $g^{\vphantom{\dagger}}_{\alpha, l}$ is the coupling strength relative to the $l$-th mode. In the following, we assume both reservoirs possess the same Lorentzian spectral density, such that the tunneling strength is set by a global coupling constant $\eta$, as argued in App.~\ref{app:G}. Each reservoir is initially prepared in a Gibbs state with $\mu_L = \mu_R =0$:
\begin{equation}
    \rho_\alpha(0) = \frac{e^{-\beta_\alpha \H{\alpha}}}{\tr e^{-\beta_\alpha \H{\alpha}}}~.
\end{equation}
We study the heat current flow from the left reservoir the right reservoir, defined as
\begin{equation}
\label{eq:LtoR-current}
    I_Q^{(L\to R)}(t) := I_Q^{(R)}(t) - I_Q^{(L)}(t)~.
\end{equation}
If we choose $\beta_R \gg \beta_L$, so that---in the steady-state regime---a thermal gradient will be eventually established, driving heat from the hotter left reservoir to the colder right one. This implies positivity of the heat conductance $C_Q$ at steady state, where $C_Q$ is given by
\begin{equation}
    C_Q(t) := \frac{I_Q^{(L\to R)}(t)}{T_L - T_R}~.
\end{equation}
The heat current [Eq.~\eqref{eq:LtoR-current}] is shown in Fig.~\ref{fig:5}(a-d) for different coupling constants and system initializations, comparing the exact results of Eq.~\eqref{eq:transport-equation} with its w.c. limit~\eqref{eq:wc-transport-equation}.

When the system is initialized in the pure single-fermion state in which site $2$ is fully occupied and site $1$ is empty, i.e.
\begin{equation}
    \psi_S(0) = |0\rangle_1 \otimes |1\rangle_2 ~,
\end{equation}
as in Figs.~\ref{fig:5}a and~\ref{fig:5}c, we see that the heat conductance is mostly positive in the transient regime too. Indeed, as schematized in Fig.~\ref{fig:5}f, the hopping processes that carry fermions from the hot reservoir to the cold reservoir are favored by (effective) temperature differences. If we instead prepare the system in the state
\begin{equation}
    \psi_S(0) = |1\rangle_1 \otimes |0\rangle_2 ~,
\end{equation}
swapping the filling of the two sites with respect to the first case, we observe mostly-negative heat conductance, as the heat flows from the colder reservoir to the hotter one. This transient negative heat conductance is a strongly non-equilibrium transient phenomenon which finds a simple explanation illustrated in Fig.~\ref{fig:5}g: the hopping process which would bring a fermion from the hot reservoir $R$ to fermion site $1$ is suppressed due to Pauli exclusion, as site $1$ is completely full at $t = 0$. Moreover, since site $2$ is initialized in a state with zero effective temperature, hopping from the cold $L$ reservoir to empty site $2$ is favored. The confluence of these two factors gives rise to a transient regime in which heat flows against the imposed thermal bias. We note that, in the strong-coupling regime [Fig.~\ref{fig:5}c and~\ref{fig:5}d], the w.c. approximation fails to capture the strongly oscillatory behavior we observe in the exact solution to the transport equation.
\section{Conclusions}
\label{sec:VII}

{
We formulated the moment-generating function for quantum thermodynamics using the Keldysh contour technique [see Eqs.~\eqref{eq:MGF-trace}-\eqref{eq:MGF-contour}]. We then adapted the exact Gaussian master equation formalism to derive an ordinary differential equation for this generating function, the tilted Gaussian master equation (tGME) [Eq.~\eqref{eq:tGME}]. The latter is a novel formal instrument which contains the full information on the heat statistics of the system-environment compound at any given time. We employed such an instrument to obtain a concise transport equation for the heat currents within the system-environment compound [Eq.~\eqref{eq:transport-equation}], which is the primary result of this manuscript. This toolkit enables a rigorous treatment of non-Markovian effects and out-of-equilibrium transient regimes in Gaussian systems. By taking the weak-coupling and steady-state limits of Eq.~\eqref{eq:transport-equation}, we demonstrated that our approach reduces to the Landauer-Büttiker scattering formalism. Finally, we applied the exact transport equation~\eqref{eq:transport-equation} to study heat transport in a minimal fermion model. Our results uncover a regime of transient negative heat conductance, serving as a striking demonstration of non-trivial out-of-equilibrium dynamics.
Our results provide a foundation for several viable developments. An exact treatment of heat transport in Gaussian systems could uncover the emergence on non-trivial phenomena, such as rectification and negative differential conductance~\cite{tesser2022heat, ingi2017reversal, sierra2014strongly}. Another promising direction is the study on non-Markovian effects in devices exhibiting bipolar thermoelectricity~\cite{antola2025quantum} or in Cooper-pair splitters \cite{mayo2025tur, Cao2015thermoelectric}.
Furthermore, the toolkit presented here, in synergy with the GME formalism (Ref.~\cite{d2025exact}), could find fruitful applications in describing memory effects in noisy quantum architectures---a prerequisite for tailoring efficient error correction codes for quantum computation.

\section*{Acknowledgments}

We acknowledge financial support by MUR (Ministero dell’Universit{\`a} e della Ricerca) through the PNRR MUR project PE0000023-NQSTI. We thank Fabio Taddei and Antonio D'Abbruzzo for useful discussions.

\bibliography{bibliography.bib}

\appendix

\onecolumngrid

\section{The moment-generating function framework on the Keldysh contour}
\label{app:A}
In this appendix we derive Eq.~\eqref{eq:MGF-exp} for the MGF as a contour-ordered integral and we prove Eq.~\eqref{eq:heat-from-MGF} for calculating the heat exchange in the TPEM scheme. Moreover, we establish two useful lemmas relating the four Keldysh components of the tilted GF, and we obtain the explicit forms~\eqref{eq:C-tord}-\eqref{eq:C-<} for said components.
\begin{proof}[Proof of Eq.~\eqref{eq:MGF-exp}]
Let us start from the well-known form [Eq.~\eqref{eq:MGF}] for the MGF: 
\begin{equation}
\label{sm-eq:MGF}
    \begin{split}
        \M(t; \bm \lambda) :=  \tr\left[ \tilde \U_{\bm \lambda}^{\vphantom{\dagger}} (t, 0) \rho_{SE}^{\vphantom{\dagger}}(0) \,\tilde \U_{-\bm\lambda}^\dagger(t, 0) \right]~,
    \end{split}
\end{equation}
The explicit form of the tilted evolution operator is
\begin{equation}
    \tilde \U_{\bm \lambda} (t, 0) := u_{\bm \lambda}^{\vphantom{\dagger}}\, \U(t, 0)\, u_{\bm\lambda}^\dagger:= e^{-i \sum_\alpha \Omega_{\alpha}\lambda_\alpha/2} \,\U(t, 0) \,e^{i \sum_\alpha\Omega_{\alpha}\lambda_\alpha/2}~.
\end{equation}
Using the time-ordered exponential form for the interaction-picture unitary [Eq.~\eqref{eq:timeord}], we act on the left and on the right with $u_{\bm \lambda}^{\vphantom{\dagger}},\,u_{\bm \lambda}^\dagger$:
\begin{equation}
    \begin{split}
        \tilde \U_{\bm \lambda}(t, 0) 
        & = u_{\bm \lambda}^{\vphantom{\dagger}}\, \tord \exp \left[-i\int_0^t \d \tau\,\V(\tau)\right]\, u_{\bm\lambda}^\dagger \\[3pt]
        & \overset{\rm (i)}{=}  \tord\left\{ u_{\bm \lambda}^{\vphantom{\dagger}}\exp \left[-i\int_0^t \d \tau\,\V(\tau)\right] u_{\bm\lambda}^\dagger\right\}\\[3pt]
        & \overset{\rm (ii)}{=}  \tord \exp \left[-i\int_0^t \d \tau\,u_{\bm \lambda}^{\vphantom{\dagger}}\V(\tau)\,u_{\bm\lambda}^\dagger\right]~.
    \end{split}
\end{equation}
Here, (i) is justified by $\Omega_{\alpha}$ being time-independent, and for (ii) we used $u_{\bm \lambda}^{\vphantom{\dagger}} u_{\bm \lambda}^\dagger = \1$. The “conjugate" operator appearing in Eq.~\eqref{sm-eq:MGF} is
\begin{equation}
    \begin{split}
        \tilde \U^\dagger_{-\bm \lambda}(t, 0) 
        & = u_{\bm \lambda}^{\dagger}\, \drot \exp \left[+i\int_0^t \d \tau\,\V(\tau)\right]\, u_{\bm\lambda}^{\vphantom{\dagger}} \\[3pt]
        & = \drot\left\{ u_{\bm \lambda}^{\dagger}\exp \left[+i\int_0^t \d \tau\,\V(\tau)\right]  u_{\bm\lambda}^{\vphantom{\dagger}} \right\}\\[3pt]
        & = \drot \exp \left[+i\int_0^t \d \tau\,u_{-\bm \lambda}^{\vphantom{\dagger}}\V(\tau) \,u_{-\bm\lambda}^\dagger \right]~.
    \end{split}
\end{equation}
Here, $\drot$ is the anti time-ordering superoperator, which arranges operators in reverse chronological order compared to $\tord$. Introducing the Keldysh contour $\gamma(t) = \gamma_+(t) \oplus \gamma_-(t)$ as in Fig.~\ref{fig:2} we can thus write the generating function as the trace of a single time-ordered exponential, eventually recovering Eq.~\eqref{eq:MGF-exp} with definition~\eqref{eq:tilted-interaction}:
\begin{equation}
\label{sm-eq:MGF-exp}
\begin{split}
    \M(t; \bm \lambda) & =  \tr\tord \left\{\exp\left[ -i\int_{\gamma(t)} \d z\,\V_{\bm \lambda}(z) \right] \rho_{SE}^{\vphantom{\dagger}}(0)\right\}~.
\end{split}
\end{equation}
\end{proof}
\begin{proof}[Proof of Eq.~\eqref{eq:heat-from-MGF}]
This result follows from straightforward algebraic manipulation:
\begin{equation}
    \begin{split}
        i\frac{\partial}{\partial \lambda_\alpha} \M(t; \bm \lambda) \biggr|_{\bm \lambda = \bm 0}
        & =   i\frac{\partial}{\partial \lambda_\alpha}\tr\left[ \tilde\U_{\bm \lambda}^{\vphantom{\dagger}} (t, 0) \rho_{SE}^{\vphantom{\dagger}}(0) \tilde\U_{-\bm\lambda}^\dagger(t, 0) \right] \biggr|_{\bm \lambda = \bm 0}\\[3pt]
        & =   i\tr\left[ \frac{\partial \tilde \U_{\bm \lambda}^{\vphantom{\dagger}} (t, 0)}{\partial \lambda_\alpha} \rho_{SE}^{\vphantom{\dagger}}(0) \,\tilde\U_{-\bm\lambda}^\dagger(t, 0) + \tilde\U_{\bm \lambda}^{\vphantom{\dagger}} (t, 0) \rho_{SE}^{\vphantom{\dagger}}(0) \frac{\partial \tilde\U_{-\bm\lambda}^\dagger(t, 0)}{\partial \lambda_\alpha} \right]\biggr|_{\bm \lambda = \bm 0} \\[3pt]
        & = \frac{1}{2}\tr\left[  \bigl[\Omega_\alpha,\tilde\U_{\bm \lambda}^{\vphantom{\dagger}} (t, 0) \bigr] \rho_{SE}^{\vphantom{\dagger}}(0) \,\tilde\U_{-\bm\lambda}^\dagger(t, 0) -\tilde\U_{\bm \lambda}^{\vphantom{\dagger}} (t, 0) \rho_{SE}^{\vphantom{\dagger}}(0)  \bigl[ \Omega_\alpha,\tilde\U_{-\bm\lambda}^\dagger(t, 0) \bigr]\right] \biggr|_{\bm \lambda = \bm 0}\\[3pt]
        & = \tr\Bigl[  \Omega_\alpha\,\tilde\U_{\bm \lambda}^{\vphantom{\dagger}} (t, 0)  \rho_{SE}^{\vphantom{\dagger}}(0)\, \tilde\U_{-\bm\lambda}^\dagger(t, 0) \Bigr] -  \frac{1}{2}\tr\Bigl[ \tilde\U_{\bm \lambda}^{\vphantom{\dagger}} (t, 0) \Bigl\{\Omega_\alpha, \rho_{SE}^{\vphantom{\dagger}}(0) \Bigr\}\,\tilde\U_{-\bm\lambda}^\dagger(t, 0) \Bigr] \biggr|_{\bm \lambda = \bm 0}\\[3pt]
        & \overset{\rm (i)}{=} \tr\Bigl[  \Omega_\alpha\,\tilde\U_{\bm \lambda}^{\vphantom{\dagger}} (t, 0)  \rho_{SE}^{\vphantom{\dagger}}(0) \,\tilde\U_{-\bm\lambda}^\dagger(t, 0) \Bigr] -  \tr\Bigl[  \Omega_\alpha\, \rho_{SE}^{\vphantom{\dagger}}(0) \,\tilde\U_{-\bm\lambda}^\dagger(t, 0)\, \tilde\U_{\bm \lambda}^{\vphantom{\dagger}} (t, 0)\Bigr] \biggr|_{\bm \lambda = \bm 0}\\[3pt]
        & \overset{\rm (ii)}{=} \tr\Bigl[  \Omega_\alpha\,\U (t, 0)  \rho_{SE}^{\vphantom{\dagger}}(0) \,\U^\dagger(t, 0) \Bigr] -  \tr\Bigl[ 
        \Omega_\alpha \rho_{SE}^{\vphantom{\dagger}}(0) \,\U^\dagger(t, 0)\, \U(t, 0) \Bigr] \\[3pt]
        & = \tr\Bigl[  \Omega_\alpha\bigl[\rho_{SE}^{\vphantom{\dagger}}(t) - 
        \rho_{SE}^{\vphantom{\dagger}}(0)  \bigr]\Bigr] \\[3pt]
        & \overset{\rm (iii)}{=} \tr_\alpha \Bigl[  \Omega_\alpha\bigl[\rho_{\alpha}^{\vphantom{\dagger}}(t) - 
        \rho_{\alpha}^{\vphantom{\dagger}}(0)  \bigr]\Bigr] \\[3pt]
        & : = \ev{Q_\alpha}~.
    \end{split}
\end{equation}
In (i) we used the hypothesis that the initial state commutes with $\Omega_{\alpha}$ for all $\alpha$, together with the cyclicity of the trace. In (ii), we used $\tilde \U_{\bm \lambda = \bm 0}(t, 0) = \U(t, 0)$ and unitarity of $\U(t, 0)$. For (iii), we performed the partial trace on $S$ and on $E \setminus \alpha$, recalling then definition~\eqref{eq:heat-exchange} of the heat exchange.
\end{proof}
We now turn our attention to the tilted GF $\tilde \C$ [Eq.~\eqref{eq:tilted-GF}]. We start by establishing a useful lemma. By hermiticity of $\V_{\bm \lambda}(t)$, we can always chose the sets $\{A_\mu\}_\mu$ and $\{B_\mu\}_\mu$ in such a way that for every index $\mu$ in $\V_{\bm \lambda}(z) = \sum_{\mu}A_\mu(z) \otimes \tilde B_\mu(z;\bm \lambda)$ a conjugate index $\bar \mu$ is defined such that
\begin{equation}
\label{sm-eq:conjugate-operator}
    A^\dagger_\mu(z) = A^{\vphantom{\dagger}}_{\bar \mu}(z^*)~, \qquad \tilde B^\dagger_\mu(z;\bm \lambda) = \tilde B_{\bar \mu}^{\vphantom{\dagger}}(z^*; -\bm \lambda)~.
\end{equation}
where $z^*_{\pm} = z_\mp$ ($\,^*\,$ flips the contour branches into each other). Therefore, we can state the following:
\begin{lemma} \label{lemma:1}
For every pair of indices $\mu, \nu$ and for every $z, w \in \gamma(t)$,
\begin{equation}
\label{sm-eq:lemma1}
    \bigl[\tilde\C_{\mu \nu}(z, w; \bm \lambda)\bigr]^* = \zeta \tilde \C_{\bar \mu \bar \nu} (z^*, w^*;-\bm \lambda)~.
\end{equation}
\end{lemma}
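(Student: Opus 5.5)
We prove the identity by complex-conjugating the defining trace in Eq.~\eqref{eq:tilted-GF} directly. We use $\tr[X]^* = \tr[X^\dagger]$ and the hermiticity of $\omega_E$, and we split into cases according to the contour ordering of $z$ and $w$. Suppose first that $z$ is later than $w$ on $\gamma(t)$. Then $\tilde\C_{\mu\nu}(z,w;\bm\lambda) = \tr[\tilde B_\mu(z;\bm\lambda)\tilde B_\nu(w;\bm\lambda)\,\omega_E]$, and conjugation gives
\begin{equation*}
\tr\bigl[\omega_E\,\tilde B_\nu^\dagger(w;\bm\lambda)\tilde B_\mu^\dagger(z;\bm\lambda)\bigr] = \tr\bigl[\tilde B_{\bar\nu}(w^*;-\bm\lambda)\,\tilde B_{\bar\mu}(z^*;-\bm\lambda)\,\omega_E\bigr]~,
\end{equation*}
where we used Eq.~\eqref{sm-eq:conjugate-operator} and cyclicity of the trace. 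The case in which $w$ is later than $z$ carries the factor $\zeta$ from $\tord_\zeta$, which is real and passes through the conjugation unchanged; it yields the analogous expression with the roles of $z$ and $w$ exchanged.

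The key step is to check how the map $z\mapsto z^*$ acts on contour ordering. This map swaps branches while keeping the physical time label. We will verify case by case, for $(z,w)$ on $(\gamma_-,\gamma_-)$, $(\gamma_+,\gamma_+)$, $(\gamma_+,\gamma_-)$ and $(\gamma_-,\gamma_+)$, that $z$ later than $w$ implies $w^*$ later than $z^*$. In other words, $^*$ reverses the contour order. On a single branch this holds because forward and backward branches have opposite orientation. For mixed branches, every point of $\gamma_+$ lies later than every point of $\gamma_-$, and swapping the branches reverses this relation.

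Given the order reversal, the product $\tilde B_{\bar\nu}(w^*)\tilde B_{\bar\mu}(z^*)$ obtained above is already correctly contour-ordered. It therefore equals $\tr\tord_\zeta[\tilde B_{\bar\nu}(w^*;-\bm\lambda)\tilde B_{\bar\mu}(z^*;-\bm\lambda)\,\omega_E] = \tilde\C_{\bar\nu\bar\mu}(w^*,z^*;-\bm\lambda)$. To reach the stated index order $\bar\mu\bar\nu$, we use the symmetry of the ordered product, $\tord_\zeta[XY] = \zeta\,\tord_\zeta[YX]$, which follows from the definition of $\tord_\zeta$. Applying it once produces $\zeta\,\tilde\C_{\bar\mu\bar\nu}(z^*,w^*;-\bm\lambda)$. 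In the other ordering case, the prefactor $\zeta$ already present combines with a possible extra $\zeta$ from the reordering to give the same result, since $\zeta^2=1$.

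The main obstacle is bookkeeping the signs and the branch flip consistently, especially for equal physical times on different branches. We will dispatch this with the explicit branch convention, namely that $\gamma_+$ is entirely later than $\gamma_-$. We also need that the tilt $\bm\lambda\to-\bm\lambda$ in Eq.~\eqref{sm-eq:conjugate-operator} is compatible with Eq.~\eqref{eq:tilted-operators}. Taking the adjoint of $u_{\mp\bm\lambda}B_\mu u_{\mp\bm\lambda}^\dagger$ gives $u_{\mp\bm\lambda}B_\mu^\dagger u^\dagger_{\mp\bm\lambda}$. Reading this on the opposite branch with $-\bm\lambda$ gives exactly $u_{\pm(-\bm\lambda)}$, which confirms the compatibility and completes the proof.
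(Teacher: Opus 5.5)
Your proposal is correct and follows essentially the same route as the paper. Conjugating the trace reverses the operator product (the paper writes this as $\tord_\zeta \to \drot_\zeta$ acting on the adjoints), Eq.~\eqref{sm-eq:conjugate-operator} flips the branches and sends $\bm\lambda\to-\bm\lambda$, the branch flip turns anti-ordering back into contour ordering, and one $\zeta$-swap restores the index order $\bar\mu\bar\nu$. Your explicit case check that $z\mapsto z^*$ reverses contour order, and your verification that Eq.~\eqref{sm-eq:conjugate-operator} is compatible with Eq.~\eqref{eq:tilted-operators}, spell out steps the paper takes for granted.
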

\begin{proof} [Proof of Lemma~\ref{lemma:1}]%
Using observation~\eqref{sm-eq:conjugate-operator}, we have:
\begin{equation}
    \begin{split}
        \bigl[\tilde\C_{\mu \nu}(z, w; \bm \lambda)\bigr]^* 
        & : = \vev{\tord_{\zeta} \bigl\{\tilde B_\mu(z;\bm \lambda) \tilde B_\nu(w;\bm \lambda)\bigr\}}^*\\[2pt]
        & = \tr \bigl[ \tord_{\zeta}\bigl\{ \tilde B_{\mu}(z; \bm \lambda) \tilde B_{\nu}(w; \bm \lambda) \Omega_0\bigr\}\bigr]^*\\[2pt]
        & = \tr \bigl[ \drot_{\zeta}\bigl\{\tilde  B_{\nu}^\dagger(w; \bm \lambda)  \tilde B_{\mu}^\dagger(z; \bm \lambda)\Omega_0\bigr\}\bigr]\\[2pt]
        & = \tr \bigl[ \tord_{\zeta}\bigl\{\tilde B_{\bar\nu}(w^*; -\bm \lambda) \tilde  B_{\bar\mu}(z^*; -\bm \lambda)\Omega_0\bigr\}\bigr]\\[2pt]
        & = \zeta\tr \bigl[ \tord_{\zeta}\bigl\{  \tilde B_{\bar\mu}(z^*; -\bm \lambda)\tilde B_{\bar\nu}(w^*; -\bm \lambda) \Omega_0\bigr\}\bigr]\\[2pt]
        & = \zeta\tilde \C_{\bar \mu \bar \nu}(z^*, w^*; -\bm \lambda)~.
    \end{split}
\end{equation}
\end{proof}
Let us also give a definition:
\begin{defn}[\bf effective conjugate] \label{defn:effective-conjugate}
Given a set of complex-valued functions $F_{\mu \nu}(\bm \lambda)$, we define the \textbf {effective conjugate} of each element as
\begin{equation}
    \bigl[\effconj F\bigr]_{\mu \nu}(\bm \lambda) := \zeta F^*_{\bar \mu \bar \nu}(-\bm \lambda)~,
\end{equation}
where the mapping $\mu \to \bar \mu$ is defined by Eq.~\eqref{sm-eq:conjugate-operator}.
\end{defn}
By choosing a convenient ordering of the reservoir operators, such a mapping can be always be expressed as
\begin{align}
\label{sm-eq:barred-index}
    \begin{cases}
        \bar \mu = \mu -1 \qquad {\rm if} ~\mu ~{\rm is~even}~,\\
        \bar \mu = \mu +1 \qquad {\rm if} ~\mu ~{\rm is~odd}~.
    \end{cases}
\end{align}
In this case, the effective conjugation can be presented in the simple matrix form
\begin{equation}
    \effconj F(\bm \lambda) = \zeta(\1_R\otimes \sigma_x) F^*(-\bm \lambda)(\1_R\otimes \sigma_x)~.
\end{equation}
Given these facts, we can prove the following:
\begin{corollary} \label{cor:1}
Given Definition~\ref{defn:effective-conjugate}, the following identities for the bare tilted GF hold:
\begin{equation} 
\label{sm-eq:cor1}
    \tilde \C^{<} = \effconj\tilde \C^> ~, \qquad  \tilde \C^{\,\drot} = \effconj\tilde \C^{\,\tord} ~.
\end{equation}
\end{corollary}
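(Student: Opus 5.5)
The identities follow by specializing Lemma~\ref{lemma:1} to the four Keldysh components and reading off how the branch-flip $z^*_\pm = z_\mp$ permutes them. I will take $z,w\in[0,t]$ as real physical times and write $z_\pm, w_\pm$ for their copies on the two branches.

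\textbf{Lesser component.} By definition~\eqref{eq:C-<}, $\tilde\C^<_{\mu\nu}(z,w;\bm\lambda) = \tilde\C_{\mu\nu}(z_-,w_+;\bm\lambda)$. Apply Lemma~\ref{lemma:1} to the pair $(z_+, w_-)$, which defines the greater component:
\begin{equation*}
\bigl[\tilde\C_{\mu\nu}(z_+,w_-;\bm\lambda)\bigr]^* = \zeta\,\tilde\C_{\bar\mu\bar\nu}(z_-,w_+;-\bm\lambda)~.
\end{equation*}
This reads $[\tilde\C^>_{\mu\nu}(\bm\lambda)]^* = \zeta\,\tilde\C^<_{\bar\mu\bar\nu}(-\bm\lambda)$.

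Three operations now invert this relation. First, replace $\bm\lambda\to-\bm\lambda$. Second, multiply by $\zeta$, using $\zeta^2=1$. Third, relabel $\mu\to\bar\mu$, $\nu\to\bar\nu$, using that the bar map is an involution ($\bar{\bar\mu}=\mu$, clear from Eq.~\eqref{sm-eq:barred-index}). The result is $\tilde\C^<_{\mu\nu}(\bm\lambda) = \zeta\,[\tilde\C^>_{\bar\mu\bar\nu}(-\bm\lambda)]^*$. By Definition~\ref{defn:effective-conjugate}, this is exactly $[\effconj\tilde\C^>]_{\mu\nu}(\bm\lambda)$.

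\textbf{Anti-time-ordered component.} The same argument applies with the pair $(z_-,w_-)$. Lemma~\ref{lemma:1} maps it to $(z_+,w_+)$, giving $[\tilde\C^{\,\tord}_{\mu\nu}(\bm\lambda)]^* = \zeta\,\tilde\C^{\,\drot}_{\bar\mu\bar\nu}(-\bm\lambda)$. Inverting as above yields $\tilde\C^{\,\drot}=\effconj\tilde\C^{\,\tord}$.

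The $\bm\lambda$-dependence is harmless in this case, since Eqs.~\eqref{eq:C-tord}--\eqref{eq:C-drot} show that these components do not depend on $\bm\lambda$. The relation therefore also holds as a plain statement about $\C^{\,\tord},\C^{\,\drot}$.

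The matrix form $\zeta(\1_R\otimes\sigma_x)F^*(-\bm\lambda)(\1_R\otimes\sigma_x)$ is just the index relabeling~\eqref{sm-eq:barred-index} written as conjugation by the permutation matrix. This is the matrix form of $\effconj$ stated after Definition~\ref{defn:effective-conjugate}, and it needs no further argument.

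\textbf{Main obstacle.} The delicate point is the bookkeeping of branches and real times. One must check that the order of the arguments is unchanged, so that $(z^*,w^*)$ has the first argument still being $z$ and only the branch labels swap. One must also check that the conjugation in Lemma~\ref{lemma:1} acts on real physical times, so no time reversal of the arguments enters.

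As a cross-check, I would verify the lesser identity directly from the explicit forms~\eqref{eq:C->}--\eqref{eq:C-<}. Conjugating $e^{-i\mu_\alpha\lambda_\alpha\sigma_z}\chi_\alpha(z-w+\lambda_\alpha)$ and sending $\bm\lambda\to-\bm\lambda$ should reproduce $\zeta e^{i\mu_\alpha\lambda_\alpha\sigma_z}\chi^T_\alpha(w-z+\lambda_\alpha)$ once one uses:
\begin{itemize}
\item the stationarity identity $\chi_{\mu\nu}(s)^* = \chi_{\bar\nu\bar\mu}(-s)$;
\item $\sigma_x\sigma_z\sigma_x=-\sigma_z$.
\end{itemize}
This direct check confirms that the sign conventions in the exponentials are consistent with the lemma.
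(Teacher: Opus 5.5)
Your proposal is correct and is essentially the paper's proof: both specialize Lemma~\ref{lemma:1} to the branch pairs $(z_\pm,w_\mp)$ and $(z_\mp,w_\mp)$ that define the greater/lesser and time-ordered/anti-time-ordered components, and recognize the result as Definition~\ref{defn:effective-conjugate}. The only cosmetic difference is that the paper writes the relation in one line, directly in the $(\1_R\otimes\sigma_x)$ matrix form. You instead spell out the inversion index-wise ($\bm\lambda\to-\bm\lambda$, $\zeta^2=1$, $\bar{\bar\mu}=\mu$).
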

\begin{proof}[Proof of Corollary~\ref{cor:1}] By Lemma~\ref{lemma:1}, we have
\begin{align}
    \tilde \C^<(z, w;\bm \lambda)  := \tilde \C(z_-, w_+; \bm \lambda) 
    & = \zeta (\1_R\otimes \sigma_x) [\tilde \C(z_+, w_-; -\bm \lambda)]^*(\1_R\otimes \sigma_x)\\
    & := \zeta (\1_R\otimes \sigma_x) [\tilde \C^>(z, w; -\bm \lambda)]^*(\1_R\otimes \sigma_x)\notag \\
    & := \effconj \tilde \C^>(z, w;\bm \lambda) \notag~,\\[3pt]
    \tilde \C^{\,\drot}(z, w;\bm \lambda)  := \tilde \C(z_+, w_+; \bm \lambda) 
    & = \zeta (\1_R\otimes \sigma_x) [\tilde \C(z_-, w_-; -\bm \lambda)]^*(\1_R\otimes \sigma_x)\\
    & := \zeta (\1_R\otimes \sigma_x) [\tilde \C^{\,\tord}( z, w; -\bm \lambda)]^*(\1_R\otimes \sigma_x)\notag \\
    & := \effconj \tilde \C^{\,\tord}(z, w;\bm \lambda) \notag~.
\end{align}
\end{proof}
We'll now derive explicit expression for the four Keldysh components. Let us start by considering the time evolution of the tilted environment operators appearing in Eq.~\eqref{eq:tilted-interaction-sum}, which depends on the specific contour branch under consideration. As stated in the main text, exploiting hermiticity of the interaction Hamiltonian~\eqref{eq:interaction-hamiltonian}, we can write the set of environment operators $\{B_\mu\}_\mu$ as a collection of pairs $\{(B_\alpha^\dagger, B_\alpha^{\vphantom{\dagger}})\}_{\alpha=1}^R$, each pair corresponding to a different reservoir. By leveraging the Gaussian hypothesis, we constrain $B_\alpha^{\vphantom{\dagger}}$ ($B_\alpha^\dagger$) to be linear in the $\alpha$-th reservoir annihilation (creation) operators. Thus, we have
\begin{equation}
\label{sm-eq:tilted-reservoir-operators}
    \begin{split}
        \tilde B_\alpha^{\vphantom{\dagger}}(z_\pm;\bm \lambda) 
        & := e^{\pm i\sum_{\beta}\Omega_{\beta} \lambda_{\beta}/2 }\,e^{i\H{E}z } B_\alpha^{\vphantom{\dagger}} e^{-i\H{E}z } \,e^{\mp i\sum_{\beta}\Omega_{\beta} \lambda_{\beta}/2 }\\
        & = e^{\mp i\sum_{\beta}\mu_{\beta} \N_{{\beta}} \lambda_{\beta}/2} e^{i\sum_{{\beta}} \H{\beta}(z \pm \lambda_{\beta}/2)} B_\alpha^{\vphantom{\dagger}}  \,e^{-i\sum_{\beta} \H{\beta}(z \pm \lambda_{\beta}/2)} e^{\pm i\sum_{\beta}\mu_{\beta}\N_{\beta}\lambda_{\beta}/2} \\
        & \overset{\rm(i)}{=} e^{\mp i\mu_{\alpha} \N_{{\alpha}} \lambda_{\alpha}/2} e^{i \H{\alpha}(z \pm \lambda_{\alpha}/2)} B_\alpha^{\vphantom{\dagger}} \, e^{-i\H{\alpha}(z \pm \lambda_{\alpha}/2)} e^{\pm i\mu_{\alpha}\N_{\alpha}\lambda_{\alpha}/2} \\
        & \overset{\rm(ii)}{=} e^{\mp i\mu_{\alpha} \N_{{\alpha}} \lambda_{\alpha}/2} B_\alpha^{\vphantom{\dagger}}(z\pm \lambda_\alpha/2) e^{\pm i\mu_{\alpha}\N_{\alpha}\lambda_{\alpha}/2} \\
        & \overset{\rm(iii)}{=} e^{\pm i\mu_{\alpha} \lambda_{\alpha}/2} B_\alpha^{\vphantom{\dagger}}(z\pm \lambda_\alpha/2) ~.
    \end{split}
\end{equation}
In (i) we used the fact that $[B_\alpha^{\vphantom{\dagger}},\Omega_\beta] = 0$ $\forall \beta \neq \alpha$. In (ii), we exploited the definition of interaction-picture operators. In (iii), we used linearity of $B_\alpha^{\vphantom{\dagger}}$ in the reservoir annihilation operators. Moreover, we observe that, once again by hermiticity of $\V$ and since the Hamiltonian of the environment is particle-number conserving ($[\H{\alpha}, \N_\alpha] = 0$ $\forall \alpha$), each $2 \times 2$ block $\tilde \C_\alpha$ ($\chi_\alpha$) in Eq.~\eqref{eq:tilted-GF-blocks} (Eq.~\eqref{eq:chi}) is constrained to be of the form
\begin{equation}
\label{sm-eq:C-alpha-block}
    \tilde \C_\alpha = 
    \begin{pmatrix}
        0 & \tilde\C_{\alpha, -} \\
        \tilde\C_{\alpha, +} & 0
    \end{pmatrix}~,
    \qquad 
    \chi_\alpha = 
    \begin{pmatrix}
        0 & \chi_{\alpha}^- \\
        \chi_{\alpha}^+ & 0
    \end{pmatrix}~,
\end{equation}
where we chose an ordering for the environment operators and defined
\begin{align}
    \tilde\C_{\alpha, -}(z, w; \bm \lambda ) & : = \tr \tord_\zeta \bigl\{\tilde B_\alpha^\dagger(z;\bm \lambda) \tilde B^{\vphantom{\dagger}}_\alpha(w;\bm \lambda) \,\omega_E\bigr\} ~, \qquad \quad\chi_{\alpha}^-(s): =\tr \bigl[B^\dagger_\alpha(s) B^{\vphantom{\dagger}}_\alpha(0) \,\omega_E\bigr]~,\\
    \tilde\C_{\alpha, +}(z, w; \bm \lambda ) &: = \tr \tord_\zeta \bigl\{\tilde B^{\vphantom{\dagger}}_\alpha(z;\bm \lambda) \tilde B_\alpha^\dagger(w;\bm \lambda) \,\omega_E\bigr\} ~, \qquad \quad\chi_{\alpha}^+(s): =\tr \bigl[B^{\vphantom{\dagger}}_\alpha(s) B_\alpha^\dagger(0)\, \omega_E\bigr]~.
\end{align}
From now on, we will employ this notation, already introduced in Eq.~\eqref{eq:bare-heat-kernel-matrix} of the main text. 

We can now proceed to derive explicit forms for the Keldysh components of $\tilde \C$.
\begin{proof} [Proof of Eqs.~\eqref{eq:C-tord} to~\eqref{eq:C-<}]%
By Corollary~\ref{cor:1}, we observe that we can obtain the Keldysh components of every $\tilde\C_{\alpha, +}$ from the conjugate components of $\tilde\C_{\alpha, -}$, so we only need to calculate the latter. We start by showing that $\tilde\C^{\,\tord}$ is not affected by the tilting [Eq.~\eqref{eq:C-tord}]:
\begin{align}
    \tilde \C_{\alpha, -}^{\,\tord}(z,w) 
    & :=  \tr \tord_\zeta \bigl\{\tilde B_\alpha^\dagger(z_-;\bm \lambda) \tilde B_\alpha^{\vphantom{\dagger}}(w_-;\bm \lambda) \,\omega_E\bigr\} \\[3pt]
    & \overset{\rm (i)}{=} \theta (z-w) \tr\bigl[\tilde B_\alpha^\dagger(z_-;\bm \lambda) \tilde B_\alpha^{\vphantom{\dagger}}(w_-;\bm \lambda) \,\omega_E\bigr] +\zeta \theta(w-z) \tr \bigl[\tilde B_\alpha^{\vphantom{\dagger}}(w_-;\bm \lambda)\tilde B_\alpha^\dagger(z_-;\bm \lambda)  \,\omega_E\bigr] \notag \\[3pt]
    & \overset{\rm (ii)}{=} \theta(z-w)\tr \bigl[e^{+ i\mu_{\alpha} \lambda_{\alpha}/2} B^\dagger_\alpha(z- \lambda_\alpha/2) e^{- i\mu_{\alpha} \lambda_{\alpha}/2} B_\alpha^{\vphantom{\dagger}}(w- \lambda_\alpha/2)\,\omega_E\bigr] \notag\\
    & + \zeta \theta(w-z) \tr \bigl[e^{- i\mu_{\alpha} \lambda_{\alpha}/2} B_\alpha^{\vphantom{\dagger}}(w- \lambda_\alpha/2)e^{+ i\mu_{\alpha} \lambda_{\alpha}/2} B^\dagger_\alpha(z- \lambda_\alpha/2) \,\omega_E\bigr] \notag \\[3pt]
    & \overset{\rm (iii)}{=}  \theta(z-w)\tr \bigl[B_\alpha^\dagger(z- \lambda_\alpha/2-w+ \lambda_\alpha/2) B_\alpha^{\vphantom{\dagger}}(0)\,\omega_E\bigr] \notag\\
    & + \zeta \theta(w-z) \tr \bigl[ B_\alpha^{\vphantom{\dagger}}(w- \lambda_\alpha/2-z+ \lambda_\alpha/2 )B_\alpha^\dagger(0) \,\omega_E\bigr] \notag \\[3pt]
    & = \theta(z-w)\chi_\alpha^-(z-w) +\zeta \theta(w-z)\chi_\alpha^+(w-z)\notag\\[3pt]
    & :=  \C_{\alpha, -}^{\,\tord}(z,w) ~. \notag
\end{align}
Here, in (i) we used the definition of contour ordering, in (ii) we plugged in Eqs.~\eqref{sm-eq:tilted-reservoir-operators} for the tilted environment operators and in (iii) we exploited homogeneity of $\chi$ [Eq.~\eqref{eq:chi}] in the time variables. We proceed identically for $\tilde \C_{\alpha, -}^{\,\drot}(z,w)$, concluding that $\tilde \C^{\,\tord} = \C^{\,\tord}$ and $\tilde \C^{\,\drot} = \C^{\,\drot} $. For the greater and lesser components [Eqs.~\eqref{eq:C->}, \eqref{eq:C-<}] we have
\begin{equation}
    \begin{split}
    \tilde \C_{\alpha, -}^{\,>}(z,w; \bm \lambda) 
        & :=  \tr \tord_\zeta \bigl\{\tilde B^\dagger_\alpha(z_+;\bm \lambda) \tilde B_\alpha^{\vphantom{\dagger}}(w_-;\bm \lambda) \,\omega_E\bigr\} \\[3pt]
        & = \tr\bigl[\tilde B_\alpha^{\vphantom{\dagger}}(w_-;\bm \lambda)\,\omega_E\, \tilde B_\alpha^\dagger(z_+;\bm \lambda) \bigr]   \\
        & \overset{\rm (i)}{=} \tr\bigl[\tilde B_\alpha^\dagger(z_+;\bm \lambda)\tilde B_\alpha^{\vphantom{\dagger}}(w_-;\bm \lambda)\,\omega_E \bigr]   \\[3pt]
        & = \tr\bigl[e^{- i\mu_{\alpha} \lambda_{\alpha}/2} B_\alpha^\dagger(z+ \lambda_\alpha/2)e^{-i\mu_{\alpha} \lambda_{\alpha}/2} B_\alpha^{\vphantom{\dagger}}(w- \lambda_\alpha/2)\,\omega_E \bigr]   \\
        & \overset{\rm (ii)}{=} e^{-i\mu_{\alpha} \lambda_{\alpha}}\tr\bigl[B^\dagger_\alpha(z+ \lambda_\alpha/2) B_\alpha^{\vphantom{\dagger}}(w- \lambda_\alpha/2)\,\omega_E \bigr]   \\[3pt]
        & = e^{-i\mu_{\alpha} \lambda_{\alpha}}\tr\bigl[B^\dagger_\alpha(z-w+ \lambda_\alpha) B_\alpha^{\vphantom{\dagger}}(0)\,\omega_E \bigr]   \\[3pt]
        & = e^{-i\mu_{\alpha} \lambda_{\alpha}} \chi_\alpha^-(z-w+\lambda_\alpha)~,
\end{split}
\end{equation}
and
\begin{equation}
    \begin{split}
    \tilde \C_{\alpha, -}^{\,<}(z,w; \bm \lambda) 
        & :=  \tr \tord_\zeta \bigl\{\tilde B^\dagger_\alpha(z_-;\bm \lambda) \tilde B_\alpha^{\vphantom{\dagger}}(w_+;\bm \lambda) \,\omega_E\bigr\} \\[3pt]
        & = \zeta \tr\bigl[\tilde B^\dagger_\alpha(z_-;\bm \lambda) \,\omega_E \,\tilde B_\alpha^{\vphantom{\dagger}}(w_+;\bm \lambda) \bigr]   \\
        & \overset{\rm (i)}{=} \zeta \tr\bigl[\tilde B_\alpha^{\vphantom{\dagger}}(w_+;\bm \lambda) \tilde B_\alpha^\dagger(z_-;\bm \lambda) \,\omega_E  \bigr]     \\[3pt]
        & = \zeta \tr\bigl[e^{+i\mu_{\alpha} \lambda_{\alpha}/2} B_\alpha^{\vphantom{\dagger}}(w+ \lambda_\alpha/2)e^{+i\mu_{\alpha} \lambda_{\alpha}/2} B_\alpha^\dagger(z - \lambda_\alpha/2)\,\omega_E \bigr]   \\
        & \overset{\rm (ii)}{=} \zeta e^{+i\mu_{\alpha} \lambda_{\alpha}}\tr\bigl[B_\alpha^{\vphantom{\dagger}}(w+ \lambda_\alpha/2) B_\alpha^\dagger(z- \lambda_\alpha/2)\,\omega_E \bigr]   \\[3pt]
        & = \zeta e^{+i\mu_{\alpha} \lambda_{\alpha}}\tr\bigl[B_\alpha^{\vphantom{\dagger}}(w-z+ \lambda_\alpha) B_\alpha^\dagger(0)\,\omega_E \bigr]   \\[3pt]
        & = \zeta e^{+i\mu_{\alpha} \lambda_{\alpha}} \chi_\alpha^+(w-z+\lambda_\alpha)~, 
    \end{split}
\end{equation}
using (i) ciclicity and (ii) linearity of the trace operation. For completeness, we obtain the remaining matrix elements by effective conjugation [Def.~\ref{defn:effective-conjugate}]:
\begin{align}
    \tilde \C_{\alpha, +}^{\,>}(z,w; \bm \lambda) & = \zeta \bigl[\tilde \C_{\alpha, -}^{\,<}(z,w; -\bm \lambda)\bigr]^* = \zeta^2 e^{+i\mu_{\alpha} \lambda_{\alpha}} \bigl[\chi_\alpha^+(w-z-\lambda_\alpha)\bigr]^* = e^{+i\mu_{\alpha} \lambda_{\alpha}} \chi_\alpha^+(z-w+\lambda_\alpha)~,\\
    \tilde \C_{\alpha, +}^{\,<}(z,w; \bm \lambda) & = \zeta \bigl[\tilde \C_{\alpha, -}^{\,>}(z,w; -\bm \lambda)\bigr]^* = \zeta e^{-i\mu_{\alpha} \lambda_{\alpha}} \bigl[\chi_\alpha^-(z-w-\lambda_\alpha)\bigr]^* = \zeta e^{-i\mu_{\alpha} \lambda_{\alpha}} \chi_\alpha^-(w-z+\lambda_\alpha)~.
\end{align}
thus concluding the proof.
\end{proof}
\section{Derivation of the tilted Gaussian master equation}
\label{app:B}
In this appendix, we derive the tGME [Eq.~\eqref{eq:tGME}] and the Dyson form [Eq.~\eqref{eq:tilted-dyson}]. To be pragmatic, we will assume the initial state to be separable, i.e. $\rho_{SE}(0) = \rho_S(0) \otimes \omega_E$, although the results we present only rely on the Gaussian hypothesis. A derivation for correlated initial state is a straightforward generalization of the construction employed in Ref.~\cite{d2025exact} for the standard GME. 

By expanding the exponential in Eq.~\eqref{eq:MGF-exp} and performing the partial trace on the environment, we obtain
\begin{equation}
    \label{sm-eq:rho-tilde}
    \tilde \rho_S(t;\bm \lambda) = \sum_{n=0}^\infty \frac{(-i)^n}{n!} {\sumint}_{\!\!\!\!\!\gamma(t)} \d^n\bm z \,\tr\bigl[\tord_\zeta \bigl\{\tilde B_1\dots \tilde B_n\,\omega_E\bigr\}\bigr]\,\tord_\zeta \bigl\{A_1\dots A_n\,\rho_S(0)\bigr\}~,
\end{equation}
where we introduced the shorthand notation $A_i := A_{\mu_i}(z_i)$ and $\tilde B_i := B_{\mu_i}(z_i; \bm \lambda)$, and the sum symbol superimposed on the integral is a reminder that we should also sum over $\mu$ indices. Moreover, we made the substitution $\tord \to \tord_\zeta$, which does not change the result as the sum number of transposition is involved in ordering the strings $\tilde B_1\dots \tilde B_n\,\omega_E$ and $A_1\dots A_n\,\rho_S(0)$. Now we make the assumption of a quadratic environment linearly coupled to the system, so that we can invoke the following form of Wick's theorem:
\begin{theorem}[Wick]
Assuming $\tr[\tilde B_i \omega_E] = 0$ $\forall i$, the following holds:
\begin{align}
\label{sm-eq:wick1}
    \tr\bigl[\tord_\zeta \bigl\{\tilde B_1\dots \tilde B_{2m+1}\,\omega_E\bigr\}\bigr] &= 0~,\\
\label{sm-eq:wick2}
    \tr\bigl[\tord_\zeta \bigl\{\tilde B_1\dots \tilde B_{2m}\,\omega_E\bigr\}\bigr] &= \frac{1}{m! 2^m} \sum_{\sigma \in {\mathfrak S}_{2m} } \zeta^{N(\sigma)} \tilde \C_{\sigma(1), \sigma(2)} \cdots \tilde \C_{\sigma(2m-1), \sigma(2m)}~,
\end{align}
where ${\mathfrak S}_{2\nu}$ is the set of permutations of $\{1, \dots, 2\nu\}$, $N(\sigma)$ the parity of the permutation $\sigma$, and $\tilde \C$ is the bare GF defined in Eq.~\eqref{eq:bare-GF} and given in the current notation by
\begin{equation}
    \tilde\C_{ij} = \tr\bigl[\tord_\zeta \{ \tilde B_i \tilde B_j \,\omega_E \}\bigr] = \zeta \tilde \C_{ji}~.
\end{equation} 
\end{theorem}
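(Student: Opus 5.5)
The plan is to reduce the contour-ordered statement to the standard (thermal) Wick theorem for Gaussian states, and then reorganize the resulting sum of pairings into the symmetrized permutation form of Eq.~\eqref{sm-eq:wick2}.

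\textbf{Step 1: Gaussianity is preserved by the tilting.} By Eq.~\eqref{sm-eq:tilted-reservoir-operators}, each $\tilde B_i$ equals a phase $e^{\pm i\mu_\alpha\lambda_\alpha/2}$ times $B_\alpha(z\pm\lambda_\alpha/2)$ (or its adjoint). It is therefore still linear in the reservoir ladder operators, with shifted time arguments and complex coefficients. Since $\omega_E$ is a product of Gibbs states of quadratic Hamiltonians, it is a Gaussian state, and it is number-conserving so odd moments vanish. Fixing any contour configuration $z_1,\dots,z_n$, the contour ordering $\tord_\zeta$ is a permutation $\pi$ of the string, accompanied by the sign $\zeta^{N(\pi)}$. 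The quantity to compute is thus $\zeta^{N(\pi)}\tr[\tilde B_{\pi(1)}\cdots\tilde B_{\pi(n)}\omega_E]$, which is an ordinary ordered moment of linear operators in a Gaussian state.

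\textbf{Step 2: apply the standard Wick theorem to the ordered string.} For the ordered product I would use the usual result (e.g.\ via the Gaudin proof: commute the first operator cyclically through the string using $[\,\cdot\,,\cdot\,]_{\zeta}=c$-numbers and the KMS-type relation $\omega_E\, b = e^{\pm\beta(\omega-\mu)} b\,\omega_E$ for each mode). This gives zero for odd $n$, which is Eq.~\eqref{sm-eq:wick1}. For $n=2m$ it gives a sum over perfect matchings of products of ordered two-point functions $\tr[\tilde B_a\tilde B_b\,\omega_E]$ with $a$ to the left of $b$, weighted by the sign of the crossing permutation.

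Each ordered pair is then re-expressed in terms of contour correlators. Because $\pi$ is the contour ordering, within each pair the left element is the later one on $\gamma(t)$. Hence $\tr[\tilde B_a\tilde B_b\omega_E]=\tilde\C_{ab}$, or $\zeta\tilde\C_{ba}$ if one indexes in the original order. Combining these per-pair signs with $\zeta^{N(\pi)}$ yields exactly the parity of the matching read in the original labeling. Doing this uniformly, including the cases $z_+$ versus $w_-$ where the ordering is fixed by branch rather than by time, is the step I regard as the main bookkeeping obstacle. The case analysis of Eqs.~\eqref{eq:C-tord}--\eqref{eq:C-<} shows that $\tilde\C$ is precisely this contour-ordered pair, so the bookkeeping closes.

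\textbf{Step 3: convert matchings to permutations.} Each perfect matching of $\{1,\dots,2m\}$ corresponds to exactly $m!\,2^m$ permutations $\sigma$ listing its pairs $(\sigma(2k-1),\sigma(2k))$. Different listings of the same matching have equal summands $\zeta^{N(\sigma)}\prod_k\tilde\C_{\sigma(2k-1)\sigma(2k)}$. Reordering pairs is an even permutation, so it changes neither the sign nor the product. Swapping the two elements within a pair multiplies $\zeta^{N(\sigma)}$ by $\zeta$, and $\tilde\C_{ij}=\zeta\tilde\C_{ji}$ compensates it, since $\zeta^2=1$. Summing over all $\sigma$ and dividing by $m!\,2^m$ therefore reproduces the matching sum, which gives Eq.~\eqref{sm-eq:wick2}.
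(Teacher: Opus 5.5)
Your proposal is correct. Note, however, that the paper contains no proof to compare it with: the authors simply invoke Wick's theorem as the standard consequence of a quadratic environment linearly coupled to the system.

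The only paper-specific ingredient is Eq.~\eqref{sm-eq:tilted-reservoir-operators}. It shows that tilting multiplies each $B_\alpha$ by a phase and shifts its real time argument by $\pm\lambda_\alpha/2$, so each $\tilde B_i$ remains linear in the reservoir ladder operators. Your Step~1 makes exactly this point. Steps~2--3 are the textbook route: fix the contour configuration, apply the Gaudin/KMS argument to the resulting ordered string, then symmetrize over the listings of each matching. So your proposal supplies what the paper takes for granted.

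Two minor clarifications would tighten it.

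\textbf{The bookkeeping step.} The step you single out as the main obstacle is immediate and does not need the explicit components \eqref{eq:C-tord}--\eqref{eq:C-<}. Let $\pi$ be the contour-ordering permutation and let $\tau$ list a matching of positions with $\tau(2k-1)<\tau(2k)$. Then $\tr[\tilde B_{\pi(\tau(2k-1))}\tilde B_{\pi(\tau(2k))}\omega_E]=\tilde\C_{\pi(\tau(2k-1)),\pi(\tau(2k))}$ holds by the very definition \eqref{eq:tilted-GF} of $\tilde\C$ as a contour-ordered correlator, whichever branches the two points lie on. The signs combine as $\zeta^{N(\pi)}\zeta^{N(\tau)}=\zeta^{N(\pi\circ\tau)}$, and $\sigma=\pi\circ\tau$ runs bijectively over matchings of the original labels.

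\textbf{The Gaudin step.} You should state it only after expanding each $\tilde B_i$ in the normal-mode ladder operators of the $\H{\alpha}$, using multilinearity of both sides. The argument relies on $\omega_E$ being diagonal in those modes. This holds for the grand-canonical Gibbs states used here; a general zero-mean Gaussian state would first need a Bogoliubov diagonalization. For fermions, the parity factors $\mathcal{P}_\alpha$ in App.~\ref{app:G} do not spoil the argument: they commute with $\omega_E$ and keep all $\zeta$-commutators c-numbers.
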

Using Eqs.~\eqref{sm-eq:wick1}-\eqref{sm-eq:wick2} in Eq.~\eqref{sm-eq:rho-tilde}, we get
\begin{equation}
    \label{sm-eq:rho-tilde2}
    \begin{split}
        \tilde \rho_S(t;\bm \lambda) &= \sum_{m=0}^\infty \frac{(-i)^{2m}}{(2m)!m!2^m} \sum_{\sigma \in {\mathfrak S}_{2m} } \zeta^{N(\sigma)} {\sumint}_{\!\!\!\!\!\gamma(t)} \d^n \bm z \,\tilde \C_{\sigma(1), \sigma(2)} \cdots \tilde \C_{\sigma(2m-1), \sigma(2m)}\,\tord_\zeta \bigl\{A_1\dots A_{2m}\,\rho_S(0)\bigr\} \\
        & \overset{\rm(i)}{=}  \sum_{m=0}^\infty \frac{(-i)^{2m}}{(2m)!m!2^m} \sum_{\sigma \in {\mathfrak S}_{2m} } \zeta^{N(\sigma)} {\sumint}_{\!\!\!\!\!\gamma(t)} \d^n \bm w \,\tilde \C_{1,2} \cdots \tilde \C_{2m-1, 2m}\,\tord_\zeta \bigl\{A_{\sigma^{-1}(1)}\dots A_{\sigma^{-1}(2m)}\,\rho_S(0)\bigr\} \\
        & \overset{\rm(ii)}{=}  \sum_{m=0}^\infty \frac{(-i)^{2m}}{(2m)!m!2^m} \sum_{\sigma \in {\mathfrak S}_{2m} } \zeta^{N(\sigma)} \zeta^{N(\sigma^{-1})} {\sumint}_{\!\!\!\!\!\gamma(t)} \d^n \bm w \,\tilde \C_{1,2} \cdots \tilde \C_{2m-1, 2m}\,\tord_\zeta \bigl\{A_{1}\dots A_{2m}\,\rho_S(0)\bigr\}\\
        & \overset{\rm(iii)}{=}  \sum_{m=0}^\infty \frac{(-i)^{2m}}{m!2^m} {\sumint}_{\!\!\!\!\!\gamma(t)} \d^n \bm w \,\tilde \C_{1,2} \cdots \tilde \C_{2m-1, 2m}\,\tord_\zeta \bigl\{A_{1}\dots A_{2m}\,\rho_S(0)\bigr\}~.
    \end{split}
\end{equation}
Here, in (i) we changed the integration variables to $w_i = z_{\sigma(i)}$ and $\nu_i = \mu_{\sigma(i)}$, in (ii) we used the simple relation
\begin{equation}
    \tord_\zeta \bigl\{A_{\sigma^{-1}(1)}\dots A_{\sigma^{-1}(2m)}\,\rho_S(0)\bigr\}  = \zeta^{N(\sigma^{-1})} \tord_\zeta \bigl\{A_1\dots A_{2m}\,\rho_S(0)\bigr\} ~,
\end{equation}
and in (iii), noting that nothing depends on $\sigma$ anymore, we performed the sum on $\sigma \in {\mathfrak S}_{2m}$, yielding the cardinality of the set ${\mathfrak S}_{2m}$, i.e. $(2m)!$. This result is conveniently presented as
\begin{equation}
\label{sm-eq:tilde-rho-M}
    \tilde\rho_S(t; \bm \lambda) = \sum_{m=0}^{\infty} \frac{(-i)^{2m}}{m!2^m} \tilde M_m(t; \bm \lambda)~, \qquad \tilde M_m(t; \bm \lambda) := {\sumint}_{\!\!\!\!\!\gamma(t)} \d^n \bm w \,\tilde \C_{1,2} \cdots \tilde \C_{2m-1, 2m}\,\tord_\zeta \bigl\{A_{1}\dots A_{2m}\,\rho_S(0)\bigr\}~.
\end{equation}
In order to find an exact effective master equation for $\tilde\rho_S(t; \bm \lambda)$, we need to take the time derivative of Eq.~\eqref{sm-eq:tilde-rho-M}. For such purpose, we state the following:
\begin{theorem}[Fundamental theorem of calculus on the Keldysh contour]
Let $f(\bm z)$ be an operator-valued function on $[\gamma(t)]^n$, with $n \in \mathbb N^+$. Then,
\begin{equation}
    \label{sm-eq:fundamental-theorem}
    \frac{\d}{\d t} \int_{\gamma(t)} \d^n \bm z\,f(z) = \sum_{k=1}^n \int_{\gamma(t)} \d ^{n-1} \bm w\,\bigl[ f(\bm w_1^{k-1}, t_-, \bm w_k^{n-1}) - f(\bm w_1^{k-1}, t_+, \bm w_k^{n-1})\bigr]~,
\end{equation}
where $\bm w_i^j := (w_i, \dots, w_j)$ with $i,j \in (1, \dots, n)$ and $i \leq j$.
\end{theorem}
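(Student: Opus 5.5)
The plan is to reduce the statement to the ordinary fundamental theorem of calculus by parametrizing the contour in physical time. Every $z\in\gamma(t)$ is a pair $(s,\pm)$ with $s\in[0,t]$. The measure on the forward branch $\gamma_-(t)$ is $\d z=\d s$, and on the backward branch $\gamma_+(t)$, traversed from $t$ to $0$, it is $\d z=-\d s$. This sign convention is the one that makes $\int_{\gamma(t)}\d z\,\V(z)$ reproduce $\U$ and $\U^\dagger$ in Eq.~\eqref{eq:MGF-exp}. Writing $\sigma_k\in\{+,-\}$ for the branch label of the $k$-th variable, I would decompose
\begin{equation}
\int_{\gamma(t)}\d^n\bm z\, f(\bm z)=\sum_{\sigma_1,\dots,\sigma_n}\Bigl(\prod_{k=1}^n s(\sigma_k)\Bigr)\int_{[0,t]^n}\d^n\bm s\, f\bigl(s_{1,\sigma_1},\dots,s_{n,\sigma_n}\bigr)~,
\end{equation}
with $s(-)=+1$ and $s(+)=-1$. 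Each term is now an ordinary integral over a hypercube whose side depends on $t$. The integrand $f$ itself does not depend on $t$ except through its arguments, since $t$ enters only through the domain.

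Next I differentiate each hypercube integral by the Leibniz rule. Writing it as iterated integrals, $\frac{\d}{\d t}\int_{[0,t]^n}g=\sum_{k}\int_{[0,t]^{n-1}}g|_{s_k=t}$. This requires mild regularity: $f$ should be continuous, or at least the partial integrals should be continuous at the upper endpoint, and I would state this as the standing assumption. Regrouping by $k$, the sum over $\sigma_k$ produces $f(\dots,t_-,\dots)-f(\dots,t_+,\dots)$, with the minus sign coming from $s(+)=-1$. The remaining sums over the other $\sigma_j$, together with their signs and $s$-integrals, recombine into $\int_{\gamma(t)}\d^{n-1}\bm w$. This gives Eq.~\eqref{sm-eq:fundamental-theorem}.

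The main subtlety is the behaviour of the integrands the theorem will be applied to, which contain contour orderings $\tord_\zeta$ and are therefore discontinuous wherever two arguments coincide or switch branches. My first move is to note that these discontinuities lie on a measure-zero set: the hyperplanes $s_i=s_j$. The Leibniz differentiation requires only integrability together with continuity at $s_k=t$ along almost every line, so the lower-dimensional integral of the boundary value is unaffected. At $s_k=t$, the point $t_\pm$ is the latest or earliest point of the contour, so the ordering is unambiguous there, and the evaluation $f(\dots,t_\pm,\dots)$ is well defined as a one-sided limit.

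The last step is to check consistency with the branch conventions. The point $t_-$ sits at the end of $\gamma_-$ and $t_+$ at the start of $\gamma_+$, and both correspond to the turning point of the contour, so the values inserted are exactly the ones intended.
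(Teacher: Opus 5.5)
Your proof is correct, but it is organized differently from the paper's. The paper only says the result follows by induction on $n$ and defers the argument to the Supplemental Material of Ref.~\cite{d2025exact}.

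The inductive route stays on the contour. One writes $\int_{\gamma(t)}\d^n\bm z\,f=\int_{\gamma(t)}\d z_1\,F(z_1;t)$ with $F(z_1;t):=\int_{\gamma(t)}\d^{n-1}\bm z'\,f(z_1,\bm z')$. The one-variable rule $\frac{\d}{\d t}\int_{\gamma(t)}\d z\,F(z;t)=F(t_-;t)-F(t_+;t)+\int_{\gamma(t)}\d z\,\partial_t F(z;t)$ gives the $k=1$ term, and applying the induction hypothesis to $\partial_t F$ produces $k=2,\dots,n$.

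You instead unfold all variables at once into $2^n$ signed integrals over the cube $[0,t]^n$. The $n$ boundary terms are then literally the $n$ faces $s_k=t$, and the relative minus sign is visibly the orientation $s(+)=-1$ of $\gamma_+$. This is more transparent and brings the implicit hypotheses into the open. The first is that $f$ has no intrinsic $t$-dependence, which holds for the integrands of App.~\ref{app:B}, built from $\tilde\C$, $\Sigma$ and $\tord_\zeta$-strings of interaction-picture operators. The second is that the discontinuities of $f$ lie on the diagonals $s_i=s_j$. The induction, in exchange, avoids the bookkeeping over branch assignments and needs only the one-variable statement plus differentiability of the partial integral $F$.

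Two small corrections. First, integrability plus continuity at $s_k=t$ along almost every line is not enough for the multivariable Leibniz rule. You also need a dominating bound near the faces, e.g.\ $f$ bounded, as it is for the piecewise-continuous integrands actually used. Then the corner regions where two coordinates exceed $t$ contribute $o(h)$, and the limit can be taken inside the $(n-1)$-dimensional integral. Second, in the paper's contour the initial time $0$ sits in the middle, so $t_+$ and $t_-$ are the two endpoints (earliest and latest points) rather than a turning point. That endpoint property is exactly what you invoke when you say the ordering is unambiguous at $s_k=t$.
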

A proof of this statement proceeds by induction and is carried out in the Supplemental Material of Ref.~\cite{d2025exact}. 
\begin{corollary}
Let $f(\bm z)$ be an operator-valued function on $[\gamma(t)]^n$, with $n \in \mathbb N^+$. If
\begin{equation}
\label{sm-eq:condition}
    \int_{\gamma(t)} \d^{n-1} \bm w\,f(\bm w_{1}^{k-1}, t_{\pm}, \bm w_k^{n-1}) =  \int_{\gamma(t)} \d^{n-1} \bm w\,f(t_{\pm}, \bm w) 
\end{equation}
holds for every $k \in \{1, \dots, n\}$, then
\begin{equation}
    \label{sm-eq:fundamental-theorem-2}
    \frac{\d}{\d t} \int_{\gamma(t)} \d^n \bm z\,f(z) = n \int_{\gamma(t)} \d ^{n-1} \bm w\,\bigl[ f(t_-, \bm w) - f(t_+, \bm w)\bigr]~.
\end{equation}
\end{corollary}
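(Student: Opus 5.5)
This corollary follows from the fundamental theorem of calculus on the Keldysh contour, Eq.~\eqref{sm-eq:fundamental-theorem}. We apply that theorem to $f$ and obtain a sum of $n$ terms. In the $k$-th term the time $t_\pm$ is inserted in the $k$-th slot, while the remaining $n-1$ variables $\bm w$ are integrated over $[\gamma(t)]^{n-1}$. The whole argument reduces to showing that every term of this sum equals the $k=1$ term.

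The steps, in order, are the following.
\begin{itemize}
\item[(i)] Write out Eq.~\eqref{sm-eq:fundamental-theorem} explicitly:
\begin{equation*}
\frac{\d}{\d t}\int_{\gamma(t)}\d^n\bm z\, f = \sum_{k=1}^n\int_{\gamma(t)}\d^{n-1}\bm w\,\bigl[f(\bm w_1^{k-1},t_-,\bm w_k^{n-1}) - f(\bm w_1^{k-1},t_+,\bm w_k^{n-1})\bigr].
\end{equation*}
\item[(ii)] Use linearity of the contour integral to split each bracket into its $t_-$ part and its $t_+$ part.
\item[(iii)] Apply hypothesis~\eqref{sm-eq:condition} separately to each part, with the sign chosen consistently. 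This replaces $f(\bm w_1^{k-1},t_\pm,\bm w_k^{n-1})$ under the integral by $f(t_\pm,\bm w)$, for every $k$.
\item[(iv)] Each summand is now the same $k$-independent quantity $\int\d^{n-1}\bm w\,[f(t_-,\bm w)-f(t_+,\bm w)]$. Summing over $k=1,\dots,n$ gives the factor $n$, which is Eq.~\eqref{sm-eq:fundamental-theorem-2}.
\end{itemize}

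There is essentially no obstacle. The one point that needs care is reading the hypothesis correctly: Eq.~\eqref{sm-eq:condition} has to hold for both choices of sign $\pm$, with the same sign on each side. In the intended application this is exactly what happens. The integrand $\tilde\C_{1,2}\cdots\tilde\C_{2m-1,2m}\,\tord_\zeta\{A_1\cdots A_{2m}\rho_S(0)\}$ is invariant under relabelling of integration variables. The $\zeta$ signs produced by reordering the $\tilde B$ string cancel against those produced by reordering the $A$ string, as in step (ii) of Eq.~\eqref{sm-eq:rho-tilde2}. Checking that symmetry belongs to the later application, though, not to this corollary.

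Finally, we can take $t$-differentiation through the finite sum. We can also treat the $\pm$ pieces separately, since each is a well-defined contour integral by the same regularity assumptions under which the fundamental theorem was stated.
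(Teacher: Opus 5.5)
Your proof is correct and follows the same reasoning the paper relies on. The paper states the corollary without a separate proof, as an immediate consequence of the fundamental theorem of calculus on the Keldysh contour: the hypothesis lets you replace each of the $n$ insertion terms by the $k=1$ term, which produces the factor $n$. Your remark that the hypothesis must hold for each sign separately, and that checking it belongs to the later application, also matches the paper; there it is verified via transpositions of the $A$ string together with $\tilde\C_{ij}=\zeta\tilde\C_{ji}$.
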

Looking at Eq.~\eqref{sm-eq:tilde-rho-M}, we see that the operator-valued function to consider is
\begin{equation}
    f(\bm z) = \sum_{\{\mu\}} \tilde \C_{1,2} \cdots \tilde \C_{2m-1, 2m}\,\tord_\zeta \bigl\{A_{1}\dots A_{2m}\,\rho_S(0)\bigr\}~,
\end{equation}
where, recalling the shorthand notation introduced in Eq. \eqref{sm-eq:rho-tilde}, we explicitly wrote the sum over the indices $\mu$. We need to prove that this $f(\bm z)$ satisfies condition~\eqref{sm-eq:condition}. To achieve such a task, it is useful to introduce the shorthand notation
\begin{equation}
    A_{\underline{i}}^{\pm} := A_{\mu_i}(t_\pm)~,
\end{equation}
so that an underline index indicates evaluation in $t_\pm$. Similarly, we define \footnote{The label $\pm$ introduced below is not to be confused with the notation established in Eq.~\eqref{sm-eq:C-alpha-block}. While $ \tilde{\C}_{i,\underline{j}}^\pm$, $ \tilde{\C}_{\underline{i},j}^\pm$ denote generic correlators with at least one argument lying in $t_{\pm}$, the correlators $\tilde\C_{\alpha, -}$ and  $\tilde\C_{\alpha, +}$ are specifically referring to absorption or emission processes and their arguments on the contour are generic.}
\begin{equation}
    \tilde \C_{\underline{i},j}^\pm := \tilde \C_{\mu_i, \mu_j}(t_\pm, z_j)~, \qquad \quad
    \tilde \C_{i, \underline{j}}^\pm := \tilde \C_{\mu_i, \mu_j}(z_i, t_\pm)~.
\end{equation}
Now, we fix $k \in \{1, \dots, 2m\}$ and let $\bm w = (w_1, \dots, w_{k-1}, w_{k+1}, \dots w_{2m})$. For $k$ odd, the LHS of Eq.~\eqref{sm-eq:condition} reads
\begin{equation}
    {\sumint}_{\!\!\!\!\!\gamma(t)} \d^{2m-1} \bm w\, 
    \tilde \C_{1,2} \cdots \tilde \C_{k-2,k-1} \tilde \C_{\underline{k},k+1}^{\pm}  \tilde \C_{k+2,k+3} \dots \tilde \C_{2m-1, 2m}
    \tord_\zeta \bigl\{A_{1}\dots A_{k-1}A_{\underline{k}}^\pm A_{k+1} \dots A_{2m}\,\rho_S(0)\bigr\}~.
\end{equation}
With $k-1$ transpositions we bring $A_{\underline{k}}^\pm$ at the beginning of the string, and with another $k-1$ transpositions we move $A_{k+1}$ to the right of $A_{\underline{k}}^\pm$. Having performed $2(k-1)$ transpositions in total, no additional sign appears:
\begin{equation}
    {\sumint}_{\!\!\!\!\!\gamma(t)} \d^{2m-1} \bm w\, 
    \tilde \C_{\underline{k},k+1}^{\pm}\tilde \C_{1,2} \cdots \tilde \C_{k-2,k-1}   \tilde \C_{k+2,k+3} \dots \tilde \C_{2m-1, 2m}
    \tord_\zeta \bigl\{A_{\underline{k}}^\pm A_{k+1}A_{1}\dots A_{k-1}A_{k+2} \dots A_{2m}\,\rho_S(0)\bigr\}~.
\end{equation}
The latter expression is equivalent, after an appropriate change of variables, to the RHS of Eq.~\eqref{sm-eq:condition}. A similar argument can be carried out for even $k$. This time, the LHS of Eq.~\eqref{sm-eq:condition} reads
\begin{equation}
    {\sumint}_{\!\!\!\!\!\gamma(t)} \d^{2m-1} \bm w\, 
    \tilde \C_{1,2} \cdots \tilde \C_{k-3,k-2} \tilde \C_{k-1,\underline{k}}^{\pm}  \tilde \C_{k+1,k+1} \dots \tilde \C_{2m-1, 2m}
    \tord_\zeta \bigl\{A_{1}\dots A_{k-1}A_{\underline{k}}^\pm A_{k+1} \dots A_{2m}\,\rho_S(0)\bigr\}~.
\end{equation}
We need to perform $k-1$ transpositions to bring $A_{\underline{k}}^\pm$ to the beginning of the string, and another $k-2$ to move $A_{k-1}$ to its right. A total of $2k-3$ transpositions, which is odd, causes a $\zeta$ factor to appear. However, we can compensate this by writing $\tilde \C_{k-1,\underline{k}}^{\pm} = \zeta\tilde \C_{\underline{k}, k-1}^{\pm}$, resulting in
\begin{equation}
    {\sumint}_{\!\!\!\!\!\gamma(t)} \d^{2m-1} \bm w\, 
    \tilde \C_{\underline{k},k-1}^{\pm}\tilde \C_{1,2} \cdots \tilde \C_{k-3,k-2}   \tilde \C_{k+1,k+2} \dots \tilde \C_{2m-1, 2m}
    \tord_\zeta \bigl\{A_{\underline{k}}^\pm A_{k-1}A_{1}\dots A_{k-2}  A_{k+1} \dots A_{2m}\,\rho_S(0)\bigr\}~.
\end{equation}
We conclude that Eq.~\eqref{sm-eq:fundamental-theorem-2} can be used to evaluate the time derivative of $\tilde M_m(t; \bm \lambda)$ in Eq.~\eqref{sm-eq:tilde-rho-M}. Note that
\begin{align}
    \tord_\zeta\{A^-_{\underline{1}} A_2 \cdots A_{2m} \rho_S(0)\} & = A_{\underline{1}} \tord_\zeta \{A_2 \cdots A_{2m} \rho_S(0)\}~,\\
    \tord_\zeta\{A^+_{\underline{1}} A_2 \cdots A_{2m} \rho_S(0)\} & = \tord_\zeta \{A_2 \cdots A_{2m} \rho_S(0)\}A_{\underline{1}} ~.
\end{align}
Therefore,
\begin{equation}
\label{sm-eq:dMdt}
    \begin{split}
        \frac{\d \tilde M_m(t; \bm \lambda)}{\d t} = 2m  \,{\sumint}_{\!\!\!\!\!\gamma(t)} \d ^{2m-1}\bm z\,&\Bigl[\tilde \C_{\underline{1},2}^- \cdots \tilde \C_{2m-1, 2m}\,A_{\underline{1}} \tord_\zeta \{A_2 \cdots A_{2m} \rho_S(0)\}\\
        & \!\!\!\! - \tilde \C_{\underline{1},2}^+ \cdots \tilde \C_{2m-1, 2m}\,\tord_\zeta \{A_2 \cdots A_{2m} \rho_S(0)\} A_{\underline{1}}\Bigr]~.
    \end{split}
\end{equation}
At this point, we introduce the quadratic assumption, that will allow us to expand the RHS of Eq.~\eqref{sm-eq:dMdt} and make $M_k(t;\bm \lambda)$ with $k\leq m$ appear, yielding a recursive form. As a preliminary step, we state the following lemma, proven in the Supplemental Material of Ref.~\cite{d2025exact}:
\begin{lemma}
\label{lemma:3}
Suppose $[A_i, A_j]_\zeta$ is a c-number $\forall \,i, j$. Then
\begin{equation}
    \label{sm-eq:lemma3}
     \tord_\zeta\{A_0A_1  \cdots A_{2m} \rho_S(0)\} = \overline{A_0\tord_\zeta\{A_1  \cdots A_{2m} \rho_S(0)\}} - \sum_{j=1}^{2m} \zeta^{j-1} \Sigma_{0,j} \tord_\zeta\{A_1  \cdots A_{j-1} A_{j+1}\cdots A_{2m} \rho_S(0)\}~,
\end{equation}
where
\begin{equation}
    \overline{ A(z)\,\bullet} = 
    \begin{cases}
        A(z) \,\bullet & {\rm if} \quad z \in \gamma_-(t)~,\\
        \zeta \bullet A(z)  & {\rm if} \quad z \in \gamma_+(t)~,\\
    \end{cases}
\end{equation}
and we introduced the system “self-energy"
\begin{equation}
\label{sm-eq:self-energy}
    \Sigma_{\mu \nu}(z, w) : =
    \begin{cases}
        +\theta_{z \prec w} \bigl[A_\mu(z), A_{\nu}(w)\bigr]_\zeta & {\rm if} \quad z \in \gamma_-(t)~,\\
        -\theta_{z \succ w} \bigl[A_\mu(z), A_{\nu}(w)\bigr]_\zeta & {\rm if} \quad z \in \gamma_+(t)~.
    \end{cases}
\end{equation}
\end{lemma}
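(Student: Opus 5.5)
The plan is a direct reordering argument rather than an induction on $m$: I will start from the contour-ordered string $\tord_\zeta\{A_0A_1\cdots A_{2m}\rho_S(0)\}$ and move $A_0$ to the extreme position prescribed by $\overline{A_0\,\bullet}$. Each elementary swap is resolved with the hypothesis that $[A_i,A_j]_\zeta$ is a c-number.

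First I fix the structure of the ordered product. Operators on $\gamma_-(t)$ sit to the left of $\rho_S(0)$, later ones further left. Operators on $\gamma_+(t)$ sit to the right, with physically later ones closer to $\rho_S(0)$. The overall sign is $\zeta$ raised to the parity of the permutation from the labelled order $0,1,\dots,2m$ to the contour order. The operator $A_0$ occupies a definite slot in this string.

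Consider first $z_0\in\gamma_-(t)$. The term $A_0\tord_\zeta\{A_1\cdots A_{2m}\rho_S(0)\}$ is the same string with $A_0$ moved to the far left. To get there, $A_0$ must be pushed leftward past exactly those $A_j$ whose arguments are later than $z_0$ on the contour but still to the left of $\rho_S(0)$. Each such move uses
\begin{equation*}
A_jA_0=\zeta A_0A_j-\zeta[A_0,A_j]_\zeta ,
\end{equation*}
so it produces one extra $\zeta$, which the $\tord_\zeta$ sign bookkeeping absorbs, plus a remainder term. Because the commutator is a c-number, it factors out of the remainder. What is left is the ordered string with $A_0$ and $A_j$ both deleted, which is again of the form $\tord_\zeta\{A_1\cdots A_{j-1}A_{j+1}\cdots A_{2m}\rho_S(0)\}$. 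Since these remainders carry no operator $A_0$, they need no further reordering. Summing over $j$ gives the correction terms, each with coefficient $\theta\,[A_0,A_j]_\zeta$, which is precisely $\Sigma_{0,j}$ for the $\gamma_-$ case.

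The case $z_0\in\gamma_+(t)$ is the mirror image. Here $A_0$ is pushed rightward to the far right, which yields $\zeta\,\tord_\zeta\{\cdots\}A_0$, i.e. the $\gamma_+$ case of $\overline{A_0\,\bullet}$. The crossed operators are the ones beyond $z_0$ in that direction, and the swaps generate $-\theta_{z_0\succ z_j}[A_0,A_j]_\zeta$. This accounts for the opposite sign and ordering in Eq.~\eqref{sm-eq:self-energy}.

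I expect the main obstacle to be the sign bookkeeping, which I would do in two stages.

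\emph{Sign $\zeta^{j-1}$.} The sign of each remainder must be computed relative to the labelled order $A_1\cdots\widehat{A_j}\cdots A_{2m}$. I would do this by writing the parity of the permutation sorting $(0,1,\dots,2m)$ as the parity sorting $(1,\dots,\widehat{j},\dots,2m)$, plus the transpositions that bring $A_0A_j$ to the front of the label list. That contributes $j-1$ transpositions, plus one from the swap identity. I expect these to combine to $\zeta^{j-1}$ together with the overall minus sign.

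\emph{Cross-branch pairs.} I would check that the relations $\prec$ and $\succ$ in Eq.~\eqref{sm-eq:self-energy} select exactly the operators $A_0$ actually has to cross. For pairs on opposite branches, $A_0$ never crosses $\rho_S(0)$, so no remainder should arise there. I would verify that the contour convention of Fig.~\ref{fig:2} makes $\theta$ vanish for such pairs.

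If it does not, I would instead show that those cross-branch contributions cancel against the $\zeta$ factor carried by the overline map.
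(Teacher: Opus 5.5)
The paper does not prove this lemma itself; it defers to the Supplemental Material of Ref.~\cite{d2025exact}. Your swap-by-swap reordering, which resolves each crossing with the c-number $\zeta$-commutator, is the natural route, and its skeleton is sound. The $\gamma_-$ case works as you describe, and the remainders are indeed the reduced contour-ordered strings.

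\textbf{The gap.} There is a concrete gap exactly where you expected trouble: the factor $\zeta$ in the $\gamma_+$ branch of $\overline{A(z)\,\bullet}$. You fix the sign of $\tord_\zeta$ as the parity of the permutation of the labels $0,1,\dots,2m$ alone. Suppose $k$ operators stand to the right of $A_0$ in the contour-ordered string. The $k$ swaps give $\zeta^k$. Relative to $\tord_\zeta\{A_1\cdots A_{2m}\rho_S(0)\}$, label $0$ has been moved past the $2m-k$ operators on its left. So in your convention the leading term is $\zeta^{k+2m-k}\,\tord_\zeta\{A_1\cdots A_{2m}\rho_S(0)\}A_0$, with no $\zeta$. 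Already at $m=0$ this contradicts the statement, which requires $\tord_\zeta\{A_0\rho_S(0)\}=\zeta\,\rho_S(0)A_0$ for $z_0\in\gamma_+(t)$.

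\textbf{The missing idea.} In the paper's $\tord_\zeta$, the state $\rho_S(0)$ (likewise $\omega_E$) is itself an entry of the permuted string, pinned at the contour point $0$, and transpositions across it also carry $\zeta$. You can see this in two places:
\begin{itemize}
\item $\tilde\C^{<}_{\alpha,-}=\zeta\tr[\tilde B^\dagger_\alpha(z_-)\,\omega_E\,\tilde B_\alpha(w_+)]$ in App.~\ref{app:A};
\item $\tord_\zeta\{A^+_{\underline 1}A_2\cdots A_{2m}\rho_S(0)\}=\tord_\zeta\{A_2\cdots A_{2m}\rho_S(0)\}A_{\underline 1}$ in App.~\ref{app:B}.
\end{itemize}
With $\rho_S(0)$ counted, $A_0$ has $2m+1-k$ objects to its left, and the leading term carries $\zeta^{k+2m+1-k}=\zeta$. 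That single crossing of $\rho_S(0)$ is the whole content of the $\zeta$ in the overline map. You assert it rather than derive it, and it is false in the convention you wrote down.

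\textbf{The correction terms.} Once the convention is fixed, the rest goes through. The correction terms do not depend on the convention, because $A_0$ and $A_j$ cross $\rho_S(0)$ either zero times ($\gamma_-$) or once each ($\gamma_+$). Let $A_j$ be the $r$-th operator crossed, and let $P_j$ be the parity defining $\tord_\zeta\{A_1\cdots\widehat{A_j}\cdots A_{2m}\rho_S(0)\}$. Then the parity of the full string is $P\equiv P_j+(j-1)+r$ on $\gamma_-$, and $P\equiv P_j+j+r$ on $\gamma_+$. The swaps contribute $\zeta^{r-1}$, followed by the remainder $-\zeta[A_0,A_j]_\zeta$ on $\gamma_-$, or $+[A_0,A_j]_\zeta$ on $\gamma_+$. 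In both cases the signs combine to $\zeta^{j-1}$, matching the two signs in Eq.~\eqref{sm-eq:self-energy}. So your count ``$j-1$ plus one from the swap identity'' must also cancel the $r-1$ earlier swaps against the $r-1$ operators lying between $A_0$ and $A_j$.

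\textbf{Two smaller corrections.}
\begin{itemize}
\item On $\gamma_+(t)$ it is the physically \emph{earlier} operators that sit next to $\rho_S(0)$, as in $\U^\dagger$. So the operators $A_0$ crosses are physically later than $z_0$, consistent with $\Sigma^{\drot}=-\Sigma^{\tord}$.
\item Your cross-branch check needs no fallback. The contour runs $t_+\to 0\to t_-$ (cf.\ the fundamental theorem of calculus on the contour, which yields $f(t_-)-f(t_+)$). Hence $\theta_{z\prec w}$ with $z\in\gamma_-(t)$, and $\theta_{z\succ w}$ with $z\in\gamma_+(t)$, already force $w$ onto the same branch. The alternative of cancelling cross-branch terms against the overline's $\zeta$ has no basis, since that $\zeta$ comes from crossing $\rho_S(0)$.
\end{itemize}
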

Substituting Eq.~\eqref{sm-eq:lemma3} into Eq.~\eqref{sm-eq:dMdt} yields
\begin{equation}
\label{sm-eq:dMdt-lunga}
    \begin{split}
        \frac{\d \tilde M_m(t; \bm \lambda)}{\d t} &= 2m  \,{\sumint}_{\!\!\!\!\!\gamma(t)} \d ^{2m-1}\bm z\,\tilde \C_{\underline{1},2}^- \cdots \tilde \C_{2m-1, 2m} \,A_{\underline{1}} \overline{A_2 \tord_\zeta \{\cdots A_{2m} \rho_S(0)\}}\\
         &- 2m  \,{\sumint}_{\!\!\!\!\!\gamma(t)} \d ^{2m-1}\bm z\,\tilde \C_{\underline{1},2}^+ \cdots \tilde \C_{2m-1, 2m} \,\overline{A_2 \tord_\zeta \{\cdots A_{2m} \rho_S(0)\}}A_{\underline{1}} \\
        & - 2m \sum_{j=3}^{2m} \zeta^{j-1} \,{\sumint}_{\!\!\!\!\!\gamma(t)} \d ^{2m-1}\bm z\,\tilde \C_{\underline{1},2}^- \cdots \tilde \C_{2m-1, 2m} \Sigma_{2,j}\,A_{\underline{1}} \tord_\zeta\{A_3  \cdots A_{j-1} A_{j+1}\cdots A_{2m} \rho_S(0)\}\\
        & + 2m \sum_{j=3}^{2m} \zeta^{j-1} \,{\sumint}_{\!\!\!\!\!\gamma(t)} \d ^{2m-1}\bm z\,\tilde \C_{\underline{1},2}^+ \cdots \tilde \C_{2m-1, 2m} \Sigma_{2,j}\tord_\zeta\{A_3  \cdots A_{j-1} A_{j+1}\cdots A_{2m} \rho_S(0)\} A_{\underline{1}} ~.
    \end{split}
\end{equation}
In the first and second terms of the RHS, we see $\tilde M_{m-1}(t;\bm \lambda)$ appear if we isolate the variables $z_3, \dots, z_{2m}$. For what concerns the third and fourth terms, we will perform the sum on $j$ by showing that all terms in the sum are equal to each other. Let us consider the third term in the case of even $j$. We have
\begin{equation}
    \begin{split}
        \zeta^{j-1} &\,{\sumint}_{\!\!\!\!\!\gamma(t)} \d ^{2m-1}\bm z\,\tilde \C_{\underline{1},2}^-\cdots  \tilde \C_{j-1, j} \cdots \tilde \C_{2m-1, 2m} \Sigma_{2,j}\,A_{\underline{1}} \tord_\zeta\{A_3 A_4 \cdots A_{j-1} A_{j+1}\cdots A_{2m} \rho_S(0)\} \\
        & =\zeta \,{\sumint}_{\!\!\!\!\!\gamma(t)} \d ^{2m-1}\bm z\,\tilde \C_{\underline{1},2}^-  \tilde\C_{j, j-1} \cdots \tilde\C_{4, 3} \cdots  \tilde \C_{2m-1, 2m} \Sigma_{2,3}\,A_{\underline{1}} \tord_\zeta\{A_j A_{j-1} A_5 \cdots A_{j-2} A_4 A_{j+1}\cdots A_{2m} \rho_S(0)\} ~,
    \end{split}
\end{equation}
where the variable changes $j \leftrightarrow3$ and $j-1 \leftrightarrow 4$ were performed. We now move $A_4$ to the leftmost position in the string using $j-4$ transpositions. Then, we move the product with $A_jA_{j-1}$ to the former position of $A_4$---without accounting for the permutation parity since $A_jA_{j-1}$ is the product of an even number of operators---and we finally switch $A_j$ with $A_{j-1}$, resulting in $j-3$ total transpositions, which is odd. Therefore, using the fact that $\tilde \C_{j,j-1} \tilde \C_{4,3} = \tilde \C_{j-1, j} \tilde \C_{3,4}$, we obtain with
\begin{equation}
     {\sumint}_{\!\!\!\!\!\gamma(t)} \d ^{2m-1}\bm z\,\tilde \C_{\underline{1},2}^-\tilde \C_{3, 4} \cdots\tilde \C_{j-1, j} \cdots   \tilde \C_{2m-1, 2m} \Sigma_{2,3}\,A_{\underline{1}} \tord_\zeta\{A_4 A_5 \cdots A_{j-1} A_j A_{j+1}\cdots A_{2m} \rho_S(0)\} ~,
\end{equation}
which we can express without specifying $j$. A similar procedure can be followed for the case of odd $j$ and for the fourth term on the RHS of Eq.~\eqref{sm-eq:dMdt-lunga}. As a consequence, we get
\begin{equation}
\label{sm-eq:dMdt-lunga-2}
    \begin{split}
        \frac{\d \tilde M_m(t; \bm \lambda)}{\d t} &= 2m  \,{\sumint}_{\!\!\!\!\!\gamma(t)} \d z_2\,\Bigl[\tilde \C_{\underline{1},2}^- A_{\underline{1}} \overline{A_2 \tilde M_{m-1}(t; \bm \lambda)}-\tilde \C_{\underline{1},2}^+ \overline{A_2  \tilde M_{m-1}(t; \bm \lambda)}A_{\underline{1}} \Bigr] \\
        & - 2m (2m-2)\,{\sumint}_{\!\!\!\!\!\gamma(t)} \d ^{2m-1}\bm z\,\tilde \C_{\underline{1},2}^- \Sigma_{2,3}\tilde \C_{3,4} \cdots \tilde \C_{2m-1, 2m} \,A_{\underline{1}} \tord_\zeta\{A_4 \cdots A_{2m} \rho_S(0)\}\\
        & + 2m (2m-2)\,{\sumint}_{\!\!\!\!\!\gamma(t)} \d ^{2m-1}\bm z\,\tilde \C_{\underline{1},2}^+ \Sigma_{2,3}\tilde \C_{3,4} \cdots \tilde \C_{2m-1, 2m} \,\tord_\zeta\{A_4 \cdots A_{2m} \rho_S(0)\}A_{\underline{1}} ~.
    \end{split}
\end{equation}
The above reasoning can be applied again to reduce the string $\tord_\zeta\{A_4 \cdots A_{2m} \rho_S(0)\}A_{\underline{1}}$, generating a term containing $\tilde M_{m-2}(t; \bm \lambda)$ and a term containing $\tord_\zeta\{A_6 \cdots A_{2m} \rho_S(0)\}A_{\underline{1}}$. We iterate this pattern until we exhaust all operators inside $\tord_\zeta$. The result is 
\begin{equation}
     \frac{\d \tilde M_m(t; \bm \lambda)}{\d t} = \sum_{k=1}^m \frac{(-1)^{k+1} (2m)!!}{(2m-2k)!!} \, \tilde{\cal S}_k(t; \bm \lambda) \tilde M_{m-k}(t; \bm \lambda)~,
\end{equation}
where we defined the superoperator
\begin{equation}
\label{sm-eq:kernel}
    \tilde{\cal S}_k(t; \bm \lambda) \,\bullet := {\sumint}_{\!\!\!\!\!\gamma(t)} \d ^{2k-1}\bm z\,\bigl[\tilde \C_{\underline{1},2}^- A_{\underline{1}} \overline{A_{2k} \,\bullet } \, - \tilde \C_{\underline{1},2}^+ \overline{A_{2k} \,\bullet }\,A_{\underline{1}}  \bigr]\,\Sigma_{2,3}\tilde \C_{3,4} \Sigma_{4,5} \cdots \Sigma_{2k-2, 2k-1}\tilde \C_{2k-1, 2k} ~.
\end{equation}
If we recall that $(2x)!! = x! 2^x$ and Eq.~\eqref{sm-eq:tilde-rho-M}, we get
\begin{equation}
    \frac{\d \tilde \rho_S(t; \bm \lambda)}{\d t} = \sum_{m=1}^\infty \frac{(-1)^m}{m! 2^m}  \frac{\d \tilde M_m(t; \bm \lambda)}{\d t} = \sum_{m=1}^\infty \sum_{k=1}^m\frac{(-1)^{m+k+1}}{(m-k)! 2^{(m-k)}}  \tilde{\cal S}_k(t; \bm \lambda) \tilde M_{m-k}(t; \bm \lambda)~,
\end{equation}
and recognizing the Cauchy product of two series
\begin{equation}
    \sum_{m=1}^\infty \sum_{k=1}^m F(m,k) = \sum_{k=1}^\infty \sum_{m=k}^\infty F(m,k) = \sum_{k=1}^\infty \sum_{m=0}^\infty F(m+k,k)~,
\end{equation}
we finally obtain
\begin{equation}
\label{sm-eq:tGME-implicit}
    \frac{\d \tilde \rho_S(t; \bm \lambda)}{\d t} = \sum_{k=1}^\infty \sum_{m=0}^\infty \frac{(-1)^{m+1}}{m! 2^m} \tilde{\cal S}_k(t; \bm \lambda) \tilde M_{m}(t; \bm \lambda) = - \tilde{\cal S}(t; \bm \lambda) \tilde \rho_S(t; \bm \lambda)~,
\end{equation}
where $\tilde {\cal S} := \sum_{k=1}^{\infty}\tilde {\cal S}_k $. We'll now show that the latter is exactly the 
effective master equation reported in Eq.~\eqref{eq:tGME} of the main text. First, we define the following recursion:
\begin{equation}
    \begin{cases}
        \tilde\G^{(1)}_{\mu\nu}(z,w; \bm \lambda) = \tilde \C_{\mu\nu}(z,w; \bm \lambda)~, \\[3pt]
        \tilde\G^{(k)}_{\mu\nu}(z,w; \bm \lambda) = \sum_{\mu',\nu'}\int_{\gamma(t)} \d^2 \bm y\, \tilde\G^{(k-1)}_{\mu\mu'}(z,y_1; \bm \lambda) \Sigma_{\mu'\nu'}(y_1, y_2) \tilde \C_{\nu'\nu}(y_2, w; \bm \lambda)~,
    \end{cases}
\end{equation}
so that we can rewrite the kernel of Eq~\eqref{sm-eq:tGME-implicit} as
\begin{equation}
\label{sm-eq:kernel2}
    \tilde{\cal S}(t; \bm \lambda) \,\bullet = \sum_{\mu,\nu}\int_{\gamma(t)}\d z\, \bigl[\tilde \G_{\mu\nu}(t_-, z; \bm \lambda) A_\mu(t) \overline{A_{\nu}(z) \,\bullet } \, - \tilde \G (t_+, z; \bm \lambda) \overline{A_\nu(z) \,\bullet }\,A_\mu(t)  \bigr] ~,
\end{equation}
where performing the sum on $k$ yields Eq.~\eqref{eq:tilted-dyson} for the tilted GF:
\begin{equation}
\label{sm-eq:tilted-GF-proof}
    \begin{split}
        \tilde \G (12)
        & := \sum_{k=1}^{\infty} \tilde \G^{(k)}(12)\\
        & = \tilde \C(12) + \int_{\gamma(t)}  \d(34) \,\tilde \C(13) \Sigma(34) \tilde \C(42) + \int_{\gamma(t)} \d(34) \int_{\gamma(t)}  \d(56) \,\tilde \C(15) \Sigma(56) \tilde \C(63)  \Sigma(34) \tilde \C(42) + \cdots \\
        & = \tilde \C(12) + \int_{\gamma(t)}  \d(34) \biggl[\tilde \C(13) + \int_{\gamma(t)}  \d(56) \,\tilde \C(15) \Sigma(56) \tilde \C(63)  + \cdots \biggr]\Sigma(34) \tilde \C(42) \\
        & = \tilde \C(12) + \int_{\gamma(t)}  \d(34)\, \tilde \G(13)\Sigma(34) \tilde \C(42) ~.
    \end{split}
\end{equation}
Inserting Eq.~\eqref{sm-eq:kernel2} into Eq.~\eqref{sm-eq:tGME-implicit} we recover the explicit form of the tGME [Eq.~\eqref{eq:tGME}]:
\begin{equation}
\label{sm-eq:tGME}
    \begin{split}
        \frac{\d \tilde\rho_S(t; \bm \lambda)}{\d t} 
         = \sum_{\mu,\nu}\int_{\gamma(t)} \d z\, \Bigl\{
            \tilde\G_{\mu \nu}(t_+, z; \bm \lambda) \overline{ A_\nu(z) \tilde\rho_S(t; \bm \lambda) } A_\mu(t)  
            - \tilde\G_{\mu \nu}(t_-, z; \bm \lambda) A_\mu(t) \overline{A_\nu(z) \tilde\rho_S(t; \bm \lambda)}\Bigr\} ~.
    \end{split}
\end{equation}
We observe that, setting $\bm \lambda = \bm 0$ from the very beginning, we can prove the standard GME given in Eq.~\eqref{eq:gme} of the main text.
\section{Derivation of the exact heat transport equation}
\label{app:C}
We are now prepared to derive the transport equation~\eqref{eq:transport-equation}. We start by writing the tGME~\eqref{eq:tGME} in physical time, introducing the Keldysh components of $\tilde \G$:
\begin{equation}
\label{sm-eq:tGME-physical}
        \begin{split}
        \frac{\d \tilde\rho_S(t; \bm \lambda)}{\d t} 
        & = \sum_{\mu, \nu} \int_{\gamma(t)} \d z\, \Bigl\{
            \tilde\G_{\mu \nu}(t_+, z; \bm \lambda) \overline{ A_\nu(z) \tilde\rho_S(t; \bm \lambda) } A_\mu(t) 
            - \tilde\G_{\mu \nu}(t_-, z; \bm \lambda) A_\mu(t) \overline{A_\nu(z) \tilde\rho_S(t; \bm \lambda)}\Bigr\} \\[3pt]
        & = \sum_{\mu, \nu} \Bigg[\int_{\gamma_-(t)} \d z\, \Bigl\{
            \tilde\G_{\mu \nu}(t_+, z_-; \bm \lambda) A_\nu(z) \tilde\rho_S(t; \bm \lambda) A_\mu(t) 
            - \tilde\G_{\mu \nu}(t_-, z_-; \bm \lambda) A_\mu(t) A_\nu(z) \tilde\rho_S(t; \bm \lambda) \Bigr\} \\
            & \phantom{\sum_{\mu, \nu} \Bigg[} +\zeta\int_{\gamma_+(t)} \d z\, \Bigl\{
                \tilde\G_{\mu \nu}(t_+, z_+; \bm \lambda) \tilde\rho_S(t; \bm \lambda) A_\nu(z) A_\mu(t) 
                - \tilde\G_{\mu \nu}(t_-, z_+; \bm \lambda) A_\mu(t)  \tilde\rho_S(t; \bm \lambda)A_\nu(z) \Bigr\} \Bigg]\\[3pt]
        & \overset{\rm (i)}{=} \sum_{\mu, \nu} \int_{0}^{t} \d \tau\, \Bigl\{
            \tilde\G^{\,>}_{\mu \nu}(t, \tau; \bm \lambda) A_\nu(\tau) \tilde\rho_S(t; \bm \lambda) A_\mu(t)  -
                 \tilde\G^{\,\tord}_{\mu \nu}(t, \tau; \bm \lambda) A_\mu(t) A_\nu(\tau) \tilde\rho_S(t; \bm \lambda)\\
            & \phantom{\sum_{\mu, \nu} \int_{0}^{t} \d \tau\, \Bigl\{}  +\zeta\tilde\G^{\,<}_{\mu \nu}(t, \tau; \bm \lambda)A_\mu(t)  \tilde\rho_S(t; \bm \lambda)A_\nu(\tau) - \zeta\tilde\G^{\,\drot}_{\mu \nu}(t, \tau; \bm \lambda) \tilde\rho_S(t; \bm \lambda) A_\nu(\tau) A_\mu(t)  
                 \Bigr\} \\[3pt]
    \end{split}
\end{equation}
In (i), we wrote the explicit form of the contour integral on the two branches, using the correct integration measure $\pm \d \tau$ on $\gamma_{\mp}(t)$ respectively.
Now, recalling definition~\eqref{eq:goth-g} of the dressed heat kernel
\begin{equation}
\label{sm-eq:goth-g}
    \g_\alpha (z,w) : = i\frac{\partial\tilde \G(z, w; \bm \lambda )}{\partial \lambda_\alpha} \bigg|_{\bm \lambda = \bm 0}~,
\end{equation}
we take the derivative of both sides of Eq.~\eqref{sm-eq:tGME-physical} with respect to $\lambda_\alpha$ in $\bm \lambda = \bm 0$, defining $\theta_\alpha(t) := i\partial \tilde\rho_S(t; \bm \lambda)/\partial \lambda_\alpha \big|_{\bm \lambda = \bm 0} $ such that $I_{Q}^{(\alpha)}(t) = \tr_S \theta_\alpha(t)$:
\begin{equation}
\label{sm-eq:ode-for-theta}
    \begin{split}
        \frac{\d \theta_\alpha(t)}{\d t}  
        & = \sum_{\mu, \nu} \int_{0}^{t} \d \tau\, \Bigl\{
            \g^>_{\alpha;\mu\nu}(t, \tau) A_\nu(\tau) \rho_S(t) A_\mu(t) + i\G^{\,>}_{\mu \nu}(t, \tau) A_\nu(\tau)  \theta_\alpha(t) A_\mu(t)  \\
            &\phantom{=\sum_{\mu, \nu} \int_{0}^{t} \d \tau\, }- \g^\tord_{\alpha;\mu\nu}(t, \tau) A_\mu(t) A_\nu(\tau) \rho_S(t) -
                 i\G^{\,\tord}_{\mu \nu}(t, \tau) A_\mu(t) A_\nu(\tau)\theta_\alpha(t) \\
            & \phantom{=\sum_{\mu, \nu} \int_{0}^{t} \d \tau\, }  +\zeta\g^<_{\alpha;\mu\nu}(t, \tau)A_\mu(t) \rho_S(t)A_\nu(\tau)+i\zeta \G_{\mu\nu}^{<}(t, \tau) A_\mu(t) \theta_\alpha(t) A_\nu(\tau)\\
            &\phantom{=\sum_{\mu, \nu} \int_{0}^{t} \d \tau\, } -\zeta\g^\drot_{\alpha;\mu\nu}(t, \tau)\rho_S(t) A_\nu(\tau) A_\mu(t) -i\zeta\G_{\mu\nu}^{\drot}(t, \tau)   \theta_\alpha(t) A_\nu(\tau) A_\mu(t)\Bigr\}~. 
    \end{split}
\end{equation}
We notice that all the terms containing $\G$ exactly sum up to the standard GME kernel, i.e.
\begin{equation}
    \frac{\d \theta_\alpha(t) }{\d t}  = i\sum_{\mu, \nu} \int_{\gamma(t)} \d z \,\G_{\mu\nu}(t_+, z) \left[\,\overline{A_\nu(\tau) \theta_\alpha(t)}, A_\mu(t)\right] + \Bigl[ \text{ terms with $\g$ }\Bigr]~.
\end{equation}
Since the trace of a commutator is zero, we need only to retain the terms containing the dressed heat kernel when evaluating the heat current [Eq.~\eqref{eq:heat-exchange}]. Therefore, we have
\begin{equation}
\label{sm-eq:transport-equation-lunga}
    \begin{split}
        I_{Q}^{(\alpha)}(t)
        & = \tr_S\sum_{\mu, \nu} \int_{0}^{t} \d \tau\, \Bigl\{
            \g^>_{\alpha;\mu\nu}(t, \tau) A_\nu(\tau) \rho_S(t) A_\mu(t) - \g^\tord_{\alpha;\mu\nu}(t, \tau) A_\mu(t) A_\nu(\tau) \rho_S(t) \\
            & \phantom{i\tr\sum_{\mu, \nu} \int_{0}^{t} \d \tau\, \Bigl\{}  +\zeta\g^<_{\alpha;\mu\nu}(t, \tau)A_\mu(t) \rho_S(t)A_\nu(\tau) -\zeta\g^\drot_{\alpha;\mu\nu}(t, \tau)\rho_S(t) A_\nu(\tau) A_\mu(t)\Bigr\}\\
            & \overset{\rm (i)}{=} \sum_{\mu,\nu}\int_{0}^{t} \d \tau\, \Bigl\{
            \bigl[\g^>_{\alpha;\mu\nu}(t, \tau) - \g^\tord_{\alpha;\mu\nu}(t, \tau) \bigr] \ev{A_\mu(t)A_\nu(\tau)}_t +\zeta\bigl[\g^<_{\alpha;\mu\nu}(t, \tau) - \g^\drot_{\alpha;\mu\nu}(t, \tau) \bigr] \ev{A_\nu(\tau)A_\mu(t)}_t \Bigr\} ~,
    \end{split}
\end{equation}
where (i) follows from linearity and ciclicity of the trace and we employed the shorthand notation $\ev{\bullet}_t := \tr_S[\,\bullet \rho_S(t)]$.

Eq.~\eqref{sm-eq:transport-equation-lunga} contains all four Keldysh components of $\g_\alpha$. To get to the more compact form given in Eq.~\eqref{eq:transport-equation} in the main text, we need to show that the second term on the RHS of~\eqref{sm-eq:transport-equation-lunga} is just the complex conjugate of the first. To achieve this, we need to state the generalization of Lemma~\ref{lemma:1} to the dressed GF:
\begin{lemma}
\label{lemma:4}
For every pair of indices $\mu, \nu$ and for every $z, w \in \gamma(t)$,
\begin{equation}
\label{sm-eq:lemma4}
    \bigl[\tilde\G_{\mu \nu}(z, w; \bm \lambda)\bigr]^* = \zeta \tilde \G_{\bar \mu \bar \nu} (z^*, w^*;-\bm \lambda)~.
\end{equation}
\end{lemma}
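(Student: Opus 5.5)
The plan is to prove Lemma~\ref{lemma:4} by induction on the order of the Dyson expansion, using the series representation $\tilde\G = \sum_{k\geq 1}\tilde\G^{(k)}$ built in App.~\ref{app:B}. The base case $\tilde\G^{(1)} = \tilde\C$ is exactly Lemma~\ref{lemma:1}. What remains is to show that the conjugation property $[F_{\mu\nu}(z,w;\bm\lambda)]^* = \zeta F_{\bar\mu\bar\nu}(z^*,w^*;-\bm\lambda)$ survives one step of the recursion
\[
\tilde\G^{(k)}(12) = \iint_{\gamma(t)} \d(34)\,\tilde\G^{(k-1)}(13)\,\Sigma(34)\,\tilde\C(42)~.
\]
For that I need a companion identity for the self-energy:
\[
\bigl[\Sigma_{\mu\nu}(z,w)\bigr]^* = \zeta\,\Sigma_{\bar\mu\bar\nu}(z^*,w^*)\times(\text{sign from the measure})~.
\]
Here $\Sigma$ is understood as a c-number, since by hypothesis the $\zeta$-commutators $[A_\mu,A_\nu]_\zeta$ are c-numbers.

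To derive the $\Sigma$ identity I would take the adjoint of $[A_\mu(z),A_\nu(w)]_\zeta$. This gives $[A_\nu^\dagger(w),A_\mu^\dagger(z)]_\zeta$. Using Eq.~\eqref{sm-eq:conjugate-operator} this becomes $[A_{\bar\nu}(w^*),A_{\bar\mu}(z^*)]_\zeta$, which equals $-\zeta[A_{\bar\mu}(z^*),A_{\bar\nu}(w^*)]_\zeta$. Branch flipping sends $\gamma_-$ to $\gamma_+$ and reverses contour order, so $\theta_{z\prec w}$ becomes $\theta_{z^*\succ w^*}$. Together with the opposite sign prefactors in Eq.~\eqref{sm-eq:self-energy}, this should give $\Sigma^*_{\mu\nu}(z,w) = \zeta\,\Sigma_{\bar\mu\bar\nu}(z^*,w^*)$, possibly up to an overall sign. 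That sign is fixed by the next step.

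For the inductive step I would conjugate the integrand and apply the three identities (induction hypothesis, $\Sigma$ identity, Lemma~\ref{lemma:1}). This produces a factor $\zeta^3=\zeta$ and replaces each internal index $\mu_3,\mu_4$ by its bar, which is just a relabelling of the summed indices since $\mu\mapsto\bar\mu$ is an involution. I would then change integration variables $z_3\to z_3^*$, $z_4\to z_4^*$.

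\textbf{Main obstacle.} The delicate point is the contour measure. The contour integral uses $\d z = +\d\tau$ on one branch and $-\d\tau$ on the other (App.~\ref{app:C}), and complex conjugation does not touch the real $\d\tau$. So flipping both internal variables produces a sign $(-1)$ per integration variable, $(-1)^2=1$ overall, provided that $\Sigma$ flips with a sign and that sign is absorbed consistently. I would check this on the lowest-order term $\tilde\C\Sigma\tilde\C$ explicitly in physical time, splitting into the four branch combinations, to fix the sign convention of the $\Sigma$ identity.

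Once the inductive step holds, summing the series gives Eq.~\eqref{sm-eq:lemma4}. This assumes the series converges, which is implicit in the paper. Alternatively, one can argue at the level of the Dyson equation~\eqref{eq:tilted-dyson}: both sides of the claimed identity solve the same linear integral equation, so uniqueness gives the result.
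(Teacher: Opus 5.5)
Your proposal is correct and is essentially the paper's proof: induction on the series $\tilde\G=\sum_k\tilde\G^{(k)}$ with Lemma~\ref{lemma:1} as base case, the self-energy identity $\Sigma^*_{\mu\nu}(z,w)=\zeta\,\Sigma_{\bar\mu\bar\nu}(z^*,w^*)$, the factor $\zeta^3=\zeta$, and a flip of both internal contour variables followed by relabelling of the barred dummy indices. The sign you leave open is already fixed by your own computation, with no dependence on the measure: the $-\zeta$ from reordering the $\zeta$-commutator, times the opposite prefactors $\pm$ in Eq.~\eqref{sm-eq:self-energy}, gives exactly $+\zeta$; separately, the two branch flips contribute $(-1)^2=1$, which is the point the paper compresses into calling the change of variables harmless.
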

\begin{proof}[Proof of Lemma~\ref{lemma:4}] It suffices to show that property~\eqref{sm-eq:lemma4} is satisfied by all terms $\tilde \G^{(k)}$ in the series~\eqref{sm-eq:tilted-GF-proof}. We do this by induction on $k$. The $k=1$ case is just Lemma~\ref{lemma:1}. Similarly, we can prove that the same property holds for the self-energy. For $z \in \gamma_-(t)$, we have
\begin{equation}
    \begin{split}
        \Sigma_{\mu\nu}^* (z,w) 
        & := \theta_{z \prec w} [A_\mu(z), A_\nu(w)]_\zeta^*  \\
        & \overset{\rm (i)}{=} \theta_{z \prec w} [A_\nu^\dagger(w), A^\dagger_\mu(z)]_\zeta 
        = \theta_{z \prec w} [A_{\bar \nu}(w^*), A_{\bar \mu}(z^*)]_\zeta  \\
        & = -\zeta\theta_{z \prec w} [A_{\bar \mu}(z^*), A_{\bar \nu}(w^*)]_\zeta 
        = -\zeta\theta_{z^* \succ w^*} [A_{\bar \mu}(z^*), A_{\bar \nu}(w^*)]_\zeta \\
        & := \zeta\Sigma_{\bar \mu \bar \nu}(z^*, w^*)~,
    \end{split}
\end{equation}
and the same holds for $z \in \gamma_+(t)$ (in (i), we recall that $\Sigma_{\mu\nu}$ is a $c$-number $\forall \mu, \nu$). Thus, the inductive step of the proof is
\begin{equation}
    \begin{split}
        \bigl[\tilde\G_{\mu \nu}^{(k)}(z, w; \bm \lambda)\bigr]^* 
        & = \sum_{\mu',\nu'}\iint_{\gamma(t)} \d^2 \bm y\, \left[\tilde\G^{(k-1)}_{\mu\mu'}(z,y_1; \bm \lambda) \Sigma_{\mu'\nu'}(y_1, y_2) \tilde \C_{\nu'\nu}(y_2, w; \bm \lambda)\right]^*\\
        & = \zeta \sum_{\mu',\nu'}\iint_{\gamma(t)} \d^2 \bm y\, \tilde\G^{(k-1)}_{\bar \mu \bar\mu'}(z^*,y_1^*; -\bm \lambda) \Sigma_{\bar\mu'\bar\nu'}(y_1^*, y_2^*) \tilde \C_{\bar \nu' \bar\nu}(y_2^*, w^*;- \bm \lambda)\\
        & \overset{\rm (i)}{=} \zeta \sum_{\mu',\nu'}\iint_{\gamma(t)} \d^2 \bm y\, \tilde\G^{(k-1)}_{\bar \mu \mu'}(z^*,y_1; -\bm \lambda) \Sigma_{\mu'\nu'}(y_1, y_2) \tilde \C_{ \nu' \bar\nu}(y, w^*; -\bm \lambda)\\
        & := \zeta \tilde\G_{\bar \mu \bar \nu}^{(k)}(z^*, w^*; -\bm \lambda)~,
    \end{split}
\end{equation}
where in (i) we performed the harmless variable change $(y_1^*, y_2^*) \to(y_1, y_2)$, under which the contour is invariant, and renamed the dummy indices $(\bar \mu', \bar \nu') \to (\mu', \nu')$. 
\end{proof}
\begin{corollary}
\label{cor:2}
Given Definition~\ref{defn:effective-conjugate}, the following identities for the dressed tilted GF hold:
\begin{equation} 
\label{sm-eq:cor2}
    \tilde \G^{<} = \effconj\tilde \G^> ~, \qquad  \tilde \G^{\,\drot} = \effconj\tilde \G^{\,\tord} ~.
\end{equation}
\end{corollary}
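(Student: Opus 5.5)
The plan is to mirror the proof of Corollary~\ref{cor:1}, replacing Lemma~\ref{lemma:1} by its dressed counterpart, Lemma~\ref{lemma:4}. The only inputs are the definitions of the Keldysh components, the branch-flip map $z_\pm^* = z_\mp$, and the matrix form of Definition~\ref{defn:effective-conjugate} under the index ordering~\eqref{sm-eq:barred-index}. In that ordering the map $\mu \to \bar\mu$ is implemented by conjugation with $(\1_R\otimes\sigma_x)$.

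First I treat the lesser component. By definition, $\tilde\G^<(z,w;\bm\lambda) := \tilde\G(z_-,w_+;\bm\lambda)$. I apply Lemma~\ref{lemma:4} with the pair $(z_+, w_-)$. Since $z_+^* = z_-$ and $w_-^* = w_+$, this gives
\begin{equation*}
[\tilde\G_{\mu\nu}(z_+,w_-;-\bm\lambda)]^* = \zeta\,\tilde\G_{\bar\mu\bar\nu}(z_-,w_+;\bm\lambda).
\end{equation*}
Here I have used the lemma at $-\bm\lambda$, which is legitimate because it holds for every $\bm\lambda$. I then conjugate both sides and use $\zeta^2 = 1$. Relabelling $\bar\mu \to \mu$ (the bar map is an involution) yields $\tilde\G^<_{\mu\nu}(z,w;\bm\lambda) = \zeta\,[\tilde\G^>_{\bar\mu\bar\nu}(z,w;-\bm\lambda)]^*$. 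This is exactly $\effconj\tilde\G^>$ by Definition~\ref{defn:effective-conjugate}. Writing the index swap as $(\1_R\otimes\sigma_x)\cdots(\1_R\otimes\sigma_x)$ gives the matrix form.

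Second, I repeat the argument verbatim for the anti-time-ordered component. This time I start from $\tilde\G^{\,\drot}(z,w;\bm\lambda) := \tilde\G(z_+,w_+;\bm\lambda)$ and apply Lemma~\ref{lemma:4} to $(z_-,w_-)$. This yields $\tilde\G^{\,\drot} = \effconj\tilde\G^{\,\tord}$.

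The only point needing care is checking that Lemma~\ref{lemma:4} holds for the full resummed $\tilde\G$ and not just for each term $\tilde\G^{(k)}$. This follows by summing the series~\eqref{sm-eq:tilted-GF-proof}, since complex conjugation commutes with the sum wherever the series converges. A second check is that the branch-flip variable change $(y_1^*,y_2^*)\to(y_1,y_2)$ inside the contour integrals carries the correct measure signs. Lemma~\ref{lemma:4} already takes care of this, so the corollary itself is essentially bookkeeping.
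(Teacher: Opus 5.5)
Your proposal is correct and matches the paper's proof. The paper just notes that, given Lemma~\ref{lemma:4}, the argument is formally identical to that of Corollary~\ref{cor:1} with $\tilde\C\to\tilde\G$, which is the branch-flip computation you carry out at $(z_+,w_-)$ and $(z_-,w_-)$. (Minor wording point: after applying Lemma~\ref{lemma:4} you only need to multiply by $\zeta$ and use $\zeta^2=1$, not ``conjugate both sides''; doing the latter gives the inverse relation, which is equivalent because $\effconj$ is an involution.)
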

\begin{proof}[Proof of Corollary~\ref{cor:2}]
Given Lemma~\ref{lemma:4}, the proof is formally identical to that of Corollary~\ref{cor:1} with $\tilde \C \to \tilde \G$.
\end{proof}
\begin{corollary}
\label{cor:3}
The following identities for the dressed heat kernel hold:
\begin{equation} 
\label{sm-eq:cor3}
    \g^{<}_\alpha = \effconj\g_{\alpha}^> ~, \qquad  \g_{\alpha}^{\,\drot} = \effconj\g_{\alpha}^{\,\tord} ~,
\end{equation}
where $\effconj$ [Def.~\eqref{defn:effective-conjugate}] has been straightforwardly extended to $\bm\lambda$-independent quantities.
\end{corollary}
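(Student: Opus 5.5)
The plan is to differentiate the identities of Corollary~\ref{cor:2} with respect to $\lambda_\alpha$ at $\bm\lambda=\bm 0$. The sign flip $\bm\lambda\to-\bm\lambda$ in the effective conjugation compensates the factor $i$ in the definition~\eqref{eq:goth-g} of $\g_\alpha$.

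Start from Lemma~\ref{lemma:4} in component form, $\tilde\G_{\mu\nu}(z,w;\bm\lambda) = \zeta\,[\tilde\G_{\bar\mu\bar\nu}(z^*,w^*;-\bm\lambda)]^*$. This is valid for all $\bm\lambda$ in a neighbourhood of $\bm 0$. Both sides are smooth in $\bm\lambda$: the bare kernels~\eqref{eq:C->}--\eqref{eq:C-<} are smooth, and the Dyson series~\eqref{sm-eq:tilted-GF-proof} can be differentiated term by term. We take this smoothness for granted, as the paper implicitly does in defining $\g_\alpha$.

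Apply $\partial/\partial\lambda_\alpha$ at $\bm\lambda=\bm 0$. Complex conjugation commutes with differentiation in the real parameter $\lambda_\alpha$. The chain rule on the argument $-\bm\lambda$ gives an extra factor $-1$, so
\begin{equation*}
\partial_{\lambda_\alpha}\tilde\G_{\mu\nu}(z,w;\bm 0) = -\zeta\,\bigl[\partial_{\lambda_\alpha}\tilde\G_{\bar\mu\bar\nu}(z^*,w^*;\bm 0)\bigr]^*~.
\end{equation*}
Multiply by $i$ and write $-i = i^*$, i.e. $-[\partial\tilde\G]^* = [i\,\partial\tilde\G]^*/i$. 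This yields
\begin{equation*}
\g_{\alpha;\mu\nu}(z,w) = \zeta\,\bigl[\g_{\alpha;\bar\mu\bar\nu}(z^*,w^*)\bigr]^*~,
\end{equation*}
which is the $\bm\lambda$-independent analogue of Lemma~\ref{lemma:4}.

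Next, specialise the branches exactly as in the proof of Corollary~\ref{cor:1}. Taking $z=z_-$ and $w=w_+$ gives $(z^*,w^*)=(z_+,w_-)$, hence $\g^<_\alpha = \zeta(\1_R\otimes\sigma_x)[\g^>_\alpha]^*(\1_R\otimes\sigma_x) = \effconj\g^>_\alpha$. Here we use the index ordering~\eqref{sm-eq:barred-index}, and the effective conjugate is read for $\bm\lambda$-independent functions, with no argument flip. Taking $z=z_+$ and $w=w_+$ gives $\g^{\drot}_\alpha = \effconj\g^{\tord}_\alpha$ in the same way.

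The only delicate point is the sign bookkeeping. We must check that the $-1$ from $-\bm\lambda$ exactly cancels the $-1$ from conjugating the prefactor $i$, so that no spurious minus sign appears. The derivative must also be taken of the full Lemma~\ref{lemma:4} identity, not of Corollary~\ref{cor:2} with its $\bm\lambda$-dependent effective conjugate, so that the extension of $\effconj$ to $\bm\lambda$-independent quantities is consistent.
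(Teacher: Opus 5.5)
Your proof is correct and is essentially the paper's argument. The paper differentiates the component form of Corollary~\ref{cor:2}, $\tilde\G^<_{\mu\nu}(z,w;\bm\lambda)=\zeta\,[\tilde\G^>_{\bar\mu\bar\nu}(z,w;-\bm\lambda)]^*$, with respect to $\lambda_\alpha$ at $\bm\lambda=\bm 0$, and uses the same cancellation between the chain-rule sign from $-\bm\lambda$ and $i^*=-i$; you instead differentiate Lemma~\ref{lemma:4} first and specialise the branches afterwards, which is an immaterial reordering. For the same reason, your closing caveat is unnecessary: differentiating Corollary~\ref{cor:2}, as the paper does, is perfectly consistent, because its component form keeps the explicit $-\bm\lambda$ argument until the derivative is taken.
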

\begin{proof}[Proof of Corollary~\ref{cor:3}]
\begin{align}
    \g_{\alpha,\mu\nu}^<(z, w) & := i\frac{\partial}{\partial \lambda_\alpha} \tilde \G^<_{\mu\nu}(z, w; \bm \lambda )\bigg|_{\bm \lambda = \bm 0} =i\frac{\partial}{\partial \lambda_\alpha} \left[\zeta  \tilde \G^>_{\bar \mu\bar \nu}(z, w; -\bm \lambda ) \right]^*\bigg|_{\bm \lambda = \bm 0} \\[5pt]
    & = \zeta  \left[-i \frac{\partial}{\partial \lambda_\alpha} \tilde \G^>_{\bar \mu\bar \nu}(z, w; -\bm \lambda )\bigg|_{\bm \lambda = \bm 0}\right]^*  = \zeta \left[-i \frac{\partial}{\partial (-\lambda_\alpha)} \tilde \G^>_{\bar \mu\bar \nu}(z, w; \bm \lambda )\bigg|_{\bm \lambda = \bm 0}\right]^*  \notag \\[5pt]
    & = \zeta \left[i \frac{\partial}{\partial \lambda_\alpha} \tilde \G^>_{\bar \mu\bar \nu}(z, w; \bm \lambda )\bigg|_{\bm \lambda = \bm 0}\right]^*\notag \\[5pt]
    & := \effconj\g_{\alpha,\mu\nu}^>~.
\end{align}
The calculation is the same with $\g^{<}_\alpha  \to \g^{\drot}_\alpha $ and $\g^{>}_\alpha  \to \g^{\tord}_\alpha $.
\end{proof}
Given these facts, we can show the desired relation:
\begin{equation}
    \begin{split}
        \zeta\sum_{\mu,\nu}\bigl[\g^<_{\alpha;\mu\nu}(t, \tau) - \g^\drot_{\alpha;\mu\nu}(t, \tau) \bigr] \ev{A_\nu(\tau)A_\mu(t)}_t & \overset{\rm (i)}{=} \zeta^2 \sum_{\mu,\nu}\bigl[\g^>_{\alpha;\bar\mu\bar\nu}(t, \tau) - \g^\tord_{\alpha;\bar\mu\bar\nu}(t, \tau) \bigr]^* \ev{A_\nu(\tau)A_\mu(t)}_t \\
        & \overset{}{=} \sum_{\mu,\nu} \bigl[\g^>_{\alpha;\bar\mu\bar\nu}(t, \tau) - \g^\tord_{\alpha;\bar\mu\bar\nu}(t, \tau) \bigr]^* \ev{A^\dagger_\mu(t)A^\dagger_\nu(\tau)}_t^*\\
        & = \sum_{\mu,\nu} \bigl[\g^>_{\alpha;\bar\mu\bar\nu}(t, \tau) - \g^\tord_{\alpha;\bar\mu\bar\nu}(t, \tau) \bigr]^* \ev{A_{\bar \mu}(t)A_{\bar \nu}(\tau)}_t^* \\
        & \overset{\rm (ii)}{=} \sum_{\mu,\nu} \bigl[\g^>_{\alpha;\mu\nu}(t, \tau) - \g^\tord_{\alpha;\mu\nu}(t, \tau) \bigr]^* \ev{A_{ \mu}(t)A_{ \nu}(\tau)}_t^*~,
    \end{split}
\end{equation}
where (i) follows from Corollary~\ref{cor:3} and in (ii) we relabeled dummy indices. Plugging this result into Eq.~\eqref{sm-eq:transport-equation-lunga}, we finally recover Eq.~\eqref{eq:transport-equation}:
\begin{equation}
    \begin{split}
        I_Q^{(\alpha)}(t)=  \int_{0}^{t} \d \tau\, 
            \bigl[\g^>_{\alpha;\mu\nu}(t, \tau) - \g^\tord_{\alpha;\mu\nu}(t, \tau) \bigr] \ev{A_\mu(t)A_\nu(\tau)}_t +{\rm c.c.} ~.
    \end{split}
\end{equation}
\section{Recursive equation for the dressed transport kernel}
\label{app:D}
In this appendix, we derive Eqs.~\eqref{eq:goth-g-dyson-contour} and~\eqref{eq:physical-time-goth-g-dyson} for the practical calculation of the dressed heat kernel $\g_\alpha$. 

We start by expanding Eq.~\eqref{eq:tilted-dyson} up to first order in $\bm \lambda$:
\begin{equation}
\label{eq:tilted-dyson-taylor}
    \begin{split}
         \tilde \G(12) & = \C(12)  +\sum_\alpha (-i\lambda_\alpha)\cc_\alpha(12)  \\
         &+ \iint_{\gamma(t)} \d(34)\,\Bigl[\G(13) + \sum_\alpha (-i\lambda_\alpha) \g_\alpha(13) \Bigr]\Sigma(34)  \Bigl[\C(42)  +\sum_\alpha (-i\lambda_\alpha)\cc_\alpha(42)\Bigr] + \O(\lambda^2)~,\\
         & =  \C(12) + \iint_{\gamma(t)} \d(34)\,\G(13) \,\Sigma(34)\, \C(42)  \\
         &+ \sum_\alpha (-i\lambda_\alpha)\bigg[\cc_\alpha(12) +\iint_{\gamma(t)} \d(34)\,\tilde \G(13) \,\Sigma(34)  \, \cc_\alpha(42) + \iint_{\gamma(t)} \d(34)\,\g_\alpha(13) \,\Sigma(34)  \, \tilde\C(42) \bigg]+ \O(\lambda^2)~.
    \end{split}
\end{equation}
Since this must hold for every $\bm \lambda \in \mathbb R^R$, equating coefficients of equivalent order on the LHS and RHS we obtain Eq.~\eqref{eq:goth-g-dyson-contour}:
\begin{equation}
\label{sm-eq:goth-g-dyson-contour}
    \begin{split}
        {\g}_\alpha(12)  = {\cc}_\alpha(12) +\iint_{\gamma(t)} \d^2\bm w\,\Bigl\{\G(13) \,\Sigma(34)  \, {\cc}_{\alpha}(42) + {\g}_{\alpha}(13) \,\Sigma(34)  \, \C(42)\Bigr\}~.
    \end{split}
\end{equation}
We will now unwrap this equation on the Keldysh contour. To achieve this, we take advantage of the fact that, by definition~\eqref{sm-eq:self-energy}, we have $\Sigma^> = \Sigma^< = 0$. Therefore, given the contour function
\begin{equation}
    f(1234) = \G(13) \,\Sigma(34)  \, {\cc}_{\alpha}(42) + {\g}_{\alpha}(13) \,\Sigma(34)  \, \C(42)~,
\end{equation}
the contour integral becomes
\begin{equation}
    \begin{split}
        \iint_{\gamma(t)} \d^2 \bm w\,f(1234) 
        & = \iint_{\gamma_+(t)} \d(34)\,f(123_+4_+) +  \iint_{\gamma_-(t)} \d(34)\,f(123_-4_-)\\
        & = \iint_{[0,t]} \d(34)\,\Bigl\{f(123_+4_+) +f(123_-4_-)\Bigr\}~,
    \end{split}
\end{equation}
noting that the sign of the double integration measure $\d (34)$ is the same on the two branches when both the variables $3$ and $4$ lie on the same branch. For reasons that will be clear later, we only need to write physical-time equations for the pair $\vec \g_{\alpha} = (\g_\alpha^>, \g_\alpha^\drot)$, which are
\begin{align}
\label{sm-eq:dyson-for-g>}
        {\g}^>_\alpha(12)  
            & = {\cc}^>_\alpha(12)+\iint_{[0, t]} \d (34)\,\Bigl\{\G(1_+3_+) \,\Sigma(3_+4_+)  \, {\cc}_{\alpha}(4_+2_-) + \G(1_+3_-) \,\Sigma(3_-4_-)  \, {\cc}_{\alpha}(4_-2_-)\Bigr\}\\
                & \phantom{={\cc}_\alpha(t, \tau)} + \iint_{[0, t]}\d (34))\,\Bigl\{{\g}_{\alpha}(1_+3_+) \,\Sigma(3_+4_+)  \, \C(4_+2_-)+ {\g}_{\alpha}(1_+3_-) \,\Sigma(3_-4_-)  \, \C(4_-2_-)\Bigr\}\notag  \\[3pt]
            & = {\cc}^>_\alpha(12)+\iint_{[0, t]} \d (34)\,\Bigl\{\G^\drot(13) \,\Sigma^\drot(34)  \, {\cc}^>_{\alpha}(42) + \G^>(13) \,\Sigma^\tord(34)  \, {\cc}^\tord_{\alpha}(42)\Bigr\}\notag \\
                & \phantom{={\cc}_\alpha(t, \tau)} + \iint_{[0, t]} \d (34)\,\Bigl\{{\g}^\drot_{\alpha}(13) \,\Sigma^\drot(34)  \, \C^>(42)+ {\g}^>_{\alpha}(13) \,\Sigma^\tord(34)  \, \C^\tord(42)\Bigr\} \notag \\[3pt]
            & \overset{\rm (i)}{=} {\cc}^>_\alpha(12)+\iint_{[0, t]} \d (34)\,\Bigl\{-\G^\drot(13) \,\Sigma^\tord(34)  \, {\cc}^>_{\alpha}(42) - {\g}^\drot_{\alpha}(13) \,\Sigma^\tord(34)  \, \C^>(42)\notag + {\g}^>_{\alpha}(13) \,\Sigma^\tord(34)  \, \C^\tord(42)\Bigr\}~,\notag \\[12pt] 
\label{sm-eq:dyson-for-gdrot}
        {\g}^\drot_\alpha(12)  
        & = \iint_{[0, t]} \d (34)\,\Bigl\{\G(1_+3_-) \,\Sigma(3_-4_-)  \, {\cc}_{\alpha}(4_-2_+) + \G(1_+3_+) \,\Sigma(3_+4_+)  \, {\cc}_{\alpha}(4_+2_+) \Bigr\} \\
        & + \iint_{[0, t]} \d (34)\,\Bigl\{{\g}_{\alpha}(1_+3_+) \,\Sigma(3_+ 4_+)  \, \C(4_+2_+)+ {\g}_{\alpha}(1_+3_-) \,\Sigma(3_-4_-)  \, \C(4_-2_+)\Bigr\} \\
        & = \iint_{[0, t]} \d (34)\,\Bigl\{\G^>(13) \,\Sigma^\tord(34)  \, {\cc}^<_{\alpha}(42) + \G^\drot(13) \,\Sigma^\drot(34)  \, {\cc}^\drot_{\alpha}(42)  \Bigr\} \notag  \\
        & + \iint_{[0, t]} \d (34)\,\Bigl\{{\g}_{\alpha}^\drot(13) \,\Sigma^\drot(34)  \, \C^\drot(42)+ {\g}_{\alpha}^>(13) \,\Sigma^\tord(34)  \, \C^<(42)\Bigr\} \notag  \\
        & \overset{\rm (i)}{=} \iint_{[0, t]} \d (34)\,\Bigl\{\G^>(13) \,\Sigma^\tord(34)  \, {\cc}^<_{\alpha}(42)- {\g}_{\alpha}^\drot(13) \,\Sigma^\tord(34)  \, \C^\drot(42)+ {\g}_{\alpha}^>(13) \,\Sigma^\tord(34)  \, \C^<(42) \Bigr\} ~.\notag 
\end{align}
Here, we have made heavy use of the fact that ${\cc}_\alpha^\tord = {\cc}_\alpha^\drot \equiv 0$, and in (i) we have exploited the identity $\Sigma^\drot = - \Sigma^\tord$. {\it A posteriori}, we observe that neither of the other two Keldysh components (i.e. $\g^<$ and $\g^\tord$) appear in Eqs.~\eqref{sm-eq:dyson-for-g>}-\eqref{sm-eq:dyson-for-gdrot}. Therefore, the latter define a self-consistent system of coupled equations for $(\g_\alpha^>, \g_\alpha^\drot)$. The conjugate pair $(\g_\alpha^<, \g_\alpha^\tord)$ is immediately obtained by effective conjugation [Def.~\ref{defn:effective-conjugate}] thanks to Corollary~\ref{cor:3} to Lemma~\ref{lemma:4}. To achieve the neater form~\eqref{eq:physical-time-goth-g-dyson}, we define the following functionals:
\begin{align}
\label{sm-eq:K-functional}
    \bigl[\K^X f\bigr](12) & : = \iint_{[0, t]} \d^2 \bm w\,\G^X(13) \,\Sigma^\tord(34)  \, f(42)\qquad {\rm for}\; X \in [>,<]~,\\
\label{sm-eq:I-functional}
    \bigl[\I^Yf\bigr](12) & : = \iint_{[0, t]} \d^2 \bm w\,f(13)\,\Sigma^\tord(34)\, \C^X(42) \qquad {\rm for}\; Y \in [>, <, \tord, \drot]~,
\end{align}
therefore getting
\begin{align}
    {\g}^>_\alpha &= \cc^>_\alpha - \K^<\cc^>_\alpha +\I^\tord\g^>_\alpha- \I^>\g^\drot_\alpha ~,\\
    {\g}^\drot_\alpha &= \K^>\cc^<_\alpha - \I^\drot\g^\drot_\alpha+ \I^<\g^>_\alpha ~,
\end{align}
where we employed Lemma F.2 in the Supplemental Material of Ref.~\cite{d2025exact}, which states that for all $z \in \gamma(t)$ we can write
\begin{equation}
    \G^>(t, z) \equiv \G^\tord(t, z)~.
\end{equation}
Turning to matrix form, we immediately recover Eq.~\eqref{eq:physical-time-goth-g-dyson} of the main text. 

In order to solve such an equation via numerical methods, we need to write out explicitly the integral functionals~\eqref{sm-eq:K-functional}-\eqref{sm-eq:I-functional}. First, we observe that $\Sigma^\tord$ can be written as follows:
\begin{equation}
    \Sigma^\tord_{ij}(u,v) = \theta(v-u)\,\sigma_{ij}(u-v) := \theta(v-u)\,\bigl[A_i(u-v), A_j(0)\bigr]_\zeta
\end{equation}
Then, we recall the definition of the Keldysh components of $\C$ in terms of the physical-time GF $\chi$, which are just Eqs.~\eqref{eq:C-tord}-\eqref{eq:C-drot} together Eqs~\eqref{eq:C->}-\eqref{eq:C-<} for $\bm \lambda = \bm 0$:
\begin{align}
    \C_{\mu\nu}^>(v, \tau) & = \chi_{\mu\nu}(v-\tau)~,\\
    \C_{\mu\nu}^<(v, \tau) & = \zeta\chi_{\nu\mu}(\tau-v)~,\\
    \C_{\mu\nu}^\tord(v, \tau) & = \theta(v-\tau)\chi_{\mu\nu}(v-\tau) +\zeta \theta(\tau-v)\chi_{\nu\mu}(\tau-v)~,\\
    \C_{\mu\nu}^\drot(v, \tau) & = \theta(\tau-v)\chi_{\mu\nu}(v-\tau) +\zeta \theta(v-\tau)\chi_{\nu\mu}(\tau-v)~.
\end{align}
The $\K$ functionals~\eqref{sm-eq:K-functional} are therefore
\begin{align}
        \bigl[\K^< f\bigr](t, \tau) 
        & = \int_0^t \d v\int_0^v\d u\,\G^<(t, u) \,\sigma(u-v)  \, f(v, \tau)~,\\
        \bigl[\K^> f\bigr](t, \tau) 
        & = \int_0^t \d v\int_0^v\d u\,\G^>(t, u) \,\sigma(u-v)  \, f(v, \tau)~.\\
\end{align}
and for the $\I$ functionals~\eqref{sm-eq:I-functional} we have
\begin{align}
    \bigl[\I^>f\bigr](t, \tau) 
        & = \int_0^t \d v\int_0^v\d u\,f(t, u)\,\,\sigma(u-v)  \, \chi(v-\tau)  ~, \\[5pt]
    \bigl[\I^<f\bigr](t, \tau) 
        & = \zeta\int_0^t \d v\int_0^v\d u\,f(t, u)\,\,\sigma(u-v)  \, \chi^T(\tau-v) ~, \\[5pt]
    \bigl[\I^\tord f\bigr](t, \tau) 
        & = \int_\tau^t \d v \int_0^v\d u\,f(t, u)\,\sigma(u-v)  \, \chi(v-\tau) +\zeta\int_0^\tau \d v \int_0^v\d u\,f(t, u)\,\sigma(u-v)  \,\chi^T(\tau-v)~, \\[5pt]
    \bigl[\I^\drot f\bigr](t, \tau) 
        & = \int_0^\tau \d v \int_0^v\d u\,f(t, u)\,\sigma(u-v)  \, \chi(v-\tau)   +\zeta\int_\tau^t \d v \int_0^v\d u\,f(t, u)\,\sigma(u-v)  \,\chi^T(\tau-v) ~.
\end{align}
\section{Explicit form of the bare heat kernel}
\label{app:E}
Here, we derive Eqs.~\eqref{eq:bare-heat-kernel},~\eqref{eq:goth-c-} and~\eqref{eq:goth-c+} of Section~\ref{sec:IV} of the main text. We start by direct inspection of Eq.~\eqref{eq:chi} for the physical-time GF: in absence of anomalous correlations, the $\alpha$-th block of $\chi$ takes the form given in~\eqref{sm-eq:C-alpha-block}. By appropriate definition of the density of states $J_\alpha(\omega)$, the two relevant matrix elements can be presented as
\begin{align}
\label{sm-eq:chi-minus-explicit}
    \chi_{\alpha}^-(s) & : =\tr \bigl[B^\dagger_\alpha(s) B_\alpha^{\vphantom{\dagger}}(0) \,\omega_E\bigr] = \int_0^{\infty} \d \omega\,J_\alpha(\omega) f_\alpha(\omega)\,e^{i\omega s}~,\\
\label{sm-eq:chi-plus-explicit}
    \chi_{\alpha}^+(s)&: =\tr \bigl[B_\alpha^{\vphantom{\dagger}}(s) B_\alpha^\dagger(0)\, \omega_E\bigr] = \int_0^{\infty} \d \omega\,J_\alpha(\omega) \bigl[1+\zeta f_\alpha(\omega)\bigr]\,e^{-i\omega s}~,
\end{align}
where $f_\alpha(\omega) = \bigl[e^{\beta_\alpha(\omega-\mu_\alpha) }-\zeta\bigr]^{-1}$ characterizes the reservoir statistics. Per definition~\eqref{eq:goth-c} and Eq.~\eqref{eq:C->}, we have
\begin{equation}
    \begin{split}
        \cc_{\alpha,\pm}^{>}(z,w) &:= i\frac{\partial\tilde \C^{\, >}_{\alpha, \pm}(z, w; \bm \lambda)}{\partial \lambda_\alpha} \Bigg|_{\bm \lambda = \bm 0}   = i\frac{\partial }{\partial \lambda_\alpha} \bigl[e^{\pm i\mu_\alpha \lambda_\alpha}\chi_\alpha^{\pm}(z-w+\lambda_\alpha)\bigr]\Bigg|_{\bm\lambda = \bm 0}\\[3pt]
        & = i \frac{\partial \chi^{\pm}_\alpha(s)}{\partial s}\Bigg|_{s=z-w} \mp \mu_\alpha  \chi_{\alpha}^{\pm}(z-w) \notag ~,
    \end{split}
\end{equation}
which in matrix form is exactly Eq.~\eqref{eq:bare-heat-kernel}. By inserting Eqs.~\eqref{sm-eq:chi-minus-explicit}-\eqref{sm-eq:chi-plus-explicit} in the latter, we recover Eqs.~\eqref{eq:goth-c-} and~\eqref{eq:goth-c+}.
\section{Differential equation for the second moments of the heat exchange PDF}
\label{app:F}
In this appendix we show how, from the definition~\eqref{eq:MGF} of the generating function, we can also derive the second moments of the heat distribution as
\begin{equation}
    \ev{Q_\alpha Q_\beta}(t) := -\frac{\partial^2 \M(t; \bm \lambda)}{\partial \lambda_\alpha \partial \lambda_\beta} \Bigg|_{\bm \lambda = \bm 0} = - \tr \partial^2_{\alpha\beta} \tilde \rho_S(t;\bm \lambda) \big|_{\bm \lambda = \bm 0}~.
\end{equation}
where $\partial_\alpha := \partial/\partial \lambda_\alpha$. We can obtain an ODE for these correlators by differentiating Eq.~\eqref{eq:tGME} twice with respect to $\bm \lambda$ and setting $\bm \lambda = \bm 0$:
\begin{equation}
\label{sm-eq:tgme-second-derivative}
    \begin{split}
        \frac{\d}{\d t} \partial^2_{\alpha\beta} \tilde \rho_S(t;\bm \lambda) \big|_0
        & = \sum_{\mu, \nu} \int_{0}^{t} \d \tau\, \Bigl\{+
            \partial^2_{\alpha\beta}\tilde \G^{>}_{\mu \nu}(t, \tau; \bm \lambda)  A_\nu(\tau) \tilde \rho_S(t; \bm \lambda) A_\mu(t) + \partial_\beta \tilde \G^{>}_{\mu \nu}(t, \tau; \bm \lambda) A_\nu(\tau) \partial_\alpha \tilde \rho_S(t;\bm \lambda)  A_\mu(t)  \\
            &   \phantom{=\sum_{\mu, \nu} \int_{0}^{t} \d \tau\, \Bigl\{}+
            \partial_\alpha \tilde \G^{>}_{\mu \nu}(t, \tau; \bm \lambda) A_\nu(\tau)\partial_\beta \tilde \rho_S(t; \bm \lambda)  A_\mu(t) + \tilde \G^{>}_{\mu \nu}(t, \tau; \bm \lambda) A_\nu(\tau) \partial^2_{\alpha\beta} \tilde \rho_S(t;\bm \lambda)  A_\mu(t)  \\
        &\phantom{=\sum_{\mu, \nu} \int_{0}^{t} \d \tau\, \Bigl\{}-
                  \partial^2_{\alpha\beta}\tilde \G^{\tord}_{\mu \nu}(t, \tau; \bm \lambda)A_\mu(t) A_\nu(\tau) \tilde \rho_S(t; \bm \lambda) -
                 \partial_\beta \tilde \G^{\tord}_{\mu \nu}(t, \tau; \bm \lambda)  A_\mu(t) A_\nu(\tau) \partial_\alpha \tilde \rho_S(t;\bm \lambda) \\
            &\phantom{=\sum_{\mu, \nu} \int_{0}^{t} \d \tau\, \Bigl\{}-
                 \partial_\alpha \tilde \G^{\tord}_{\mu \nu}(t, \tau; \bm \lambda)A_\mu(t) A_\nu(\tau) \partial_\beta\tilde \rho_S(t; \bm \lambda)  -
                 \tilde \G^{\tord}_{\mu \nu}(t, \tau; \bm \lambda) A_\mu(t) A_\nu(\tau) \partial^2_{\alpha \beta}\tilde \rho_S(t;\bm \lambda) \\
        & \phantom{=\sum_{\mu, \nu} \int_{0}^{t} \d \tau\, \Bigl\{}  +\zeta \partial^2_{\alpha\beta}\tilde \G_{\mu\nu}^{<}(t, \tau; \bm \lambda) A_\mu(t) \tilde \rho_S(t; \bm \lambda)A_\nu(\tau)+\zeta\partial_\beta \tilde \G_{\mu\nu}^{<}(t, \tau; \bm \lambda)  A_\mu(t) \partial_\alpha \tilde \rho_S(t;\bm \lambda) A_\nu(\tau)\\
            & \phantom{=\sum_{\mu, \nu} \int_{0}^{t} \d \tau\, \Bigl\{}  +\zeta \partial_\alpha\tilde \G_{\mu\nu}^{<}(t, \tau; \bm \lambda) A_\mu(t) \partial_\beta \tilde \rho_S(t; \bm \lambda)A_\nu(\tau) +\zeta\tilde \G_{\mu\nu}^{<}(t, \tau; \bm \lambda) A_\mu(t) \partial_{\alpha\beta}^2 \tilde \rho_S(t;\bm \lambda) A_\nu(\tau)\\
        &\phantom{=\sum_{\mu, \nu} \int_{0}^{t} \d \tau\, \Bigl\{} -\zeta \partial^2_{\alpha\beta} \tilde \G_{\mu\nu}^{\drot}(t, \tau; \bm \lambda)\tilde \rho_S(t; \bm \lambda) A_\nu(\tau) A_\mu(t) -\zeta\partial_\beta\tilde \G_{\mu\nu}^{\drot}(t, \tau; \bm \lambda)  \partial_\alpha \tilde \rho_S(t;\bm \lambda)A_\nu(\tau) A_\mu(t)\Bigr\}\\
            &\phantom{=\sum_{\mu, \nu} \int_{0}^{t} \d \tau\, \Bigl\{} -\zeta\partial_\alpha \tilde \G_{\mu\nu}^{\drot}(t, \tau; \bm \lambda) \partial_\beta\tilde \rho_S(t; \bm \lambda) A_\nu(\tau) A_\mu(t) -\zeta\tilde \G_{\mu\nu}^{\drot}(t, \tau; \bm \lambda)  \partial^2_{\alpha \beta}\tilde \rho_S(t;\bm \lambda)A_\nu(\tau) A_\mu(t)\Bigr\} \Bigg|_0~. \\[3pt]
    \end{split}
\end{equation}
Out of the latter 16 terms on the RHS, the four containing $\partial_{\alpha\beta}^2\tilde \rho_S$ disappear when we take the trace. We may now define
\begin{equation}
    {\mathfrak C}_{\alpha\beta}(z,w) : = -\frac{\partial^2\tilde \C(z, w; \bm \lambda )}{\partial \lambda_\alpha\partial \lambda_\beta} \bigg|_{\bm \lambda = \bm 0}~, \qquad {\mathfrak G}_{\alpha\beta} (z,w) : = -\frac{\partial\tilde \G(z, w; \bm \lambda )}{\partial \lambda_\alpha\partial \lambda_\beta} \bigg|_{\bm \lambda = \bm 0}~.
\end{equation}
As a consequence of Lemmas~\ref{lemma:1} and~\ref{lemma:4}, we see that $\effconj{\mathfrak C}^> =  \mathfrak{C}^<$ and $\effconj{\mathfrak C}^\tord =  \mathfrak{C}^\drot$, and the same holds for $\mathfrak G$. This can be proven by following the same procedure as in the proof of Corollary~\ref{cor:3}. Taking the trace of Eq.~\eqref{sm-eq:tgme-second-derivative} and recalling the definition of $\theta_\alpha(t):= i \partial_\alpha \tilde \rho_S(t; \bm \lambda) \big|_{\bm \lambda = \bm 0}$, we get
\begin{equation}
    \begin{split}
        \frac{\d}{\d t} \ev{Q_\alpha Q_\beta}_t
        & = \sum_{\mu, \nu} \int_{0}^{t} \d \tau\, \Bigl\{
            \bigl[{\mathfrak G}_{\alpha\beta}^>  - {\mathfrak G}_{\alpha\beta}^\tord\bigr]_{\mu\nu}(t, \tau)  \tr\bigl[A_{\mu}(t) A_\nu(\tau) \rho_S(t)\bigr]  \\
            &   \phantom{=\sum_{\mu, \nu} \int_{0}^{t} \d \tau\, }+ \bigl[\g^>_{\alpha} - \g^\tord_\alpha\bigr]_{\mu\nu}(t, \tau) \tr \bigl[A_{\mu}(t) A_\nu(\tau) \theta_\beta(t)\bigr]\\
            &   \phantom{=\sum_{\mu, \nu} \int_{0}^{t} \d \tau\, }+ \bigl[\g^>_{\beta} - \g^\tord_\beta\bigr]_{\mu\nu}(t, \tau) \tr \bigl[A_{\mu}(t) A_\nu(\tau) \theta_\alpha(t)\bigr]\Bigr\} + {\rm c.c.}~. \\[3pt]
    \end{split}
\end{equation}
In order to evaluate such correlators, we would need to solve the inhomogeneous ODE~\eqref{sm-eq:ode-for-theta} for $\theta_\alpha(t)$ and $\theta_\beta(t)$, and to obtain a self-consistent equation for the second-order dressed kernel ${\mathfrak G}_{\alpha\beta}$. The latter can be found in a straightforward fashion by following the procedure outlined in App.~\ref{app:D} for the dressed heat kernel, rewriting Eq.~\eqref{eq:tilted-dyson-taylor} up to second order in $\bm \lambda$.
\section{Notes on the case-study implementation}
\label{app:G}
In this final section, we give some additional information on the numerical analysis which results are shown in Sec.~\ref{sec:VI} of the main text.

The system-environment coupling Hamiltonian can be put in the form~\eqref{eq:interaction-hamiltonian} as follows:
\begin{equation}
    \V = A_1 \otimes B_L^\dagger +  A_1^\dagger \otimes B_L + A_2 \otimes B_R^\dagger +  A_2^\dagger \otimes B_R~,
\end{equation}
where $A_1 = a_1$, $A_2 = a_2$ and
\begin{equation}
    B_\alpha^{\vphantom{\dagger}} = \sum_{l} g_{\alpha, l} \c{\alpha, l} {\cal P}_\alpha~,
\end{equation}
with ${\cal P}_\alpha := (-1)^{\sum_l \cd{\alpha, l}  \c{\alpha, l}}$ being the parity operator associated with reservoir $\alpha$. The presence of this operator, due to fermionic anti-commutation rules, is inconsequential and can be forgotten hereafter. The physical GF [Eq.~\eqref{eq:chi}] for each reservoir is then
\begin{equation}
    \begin{split}
        \chi_\alpha(s) & = 
        \begin{pmatrix}
            0 & \vev{B_\alpha^\dagger(s) B_\alpha^{\vphantom{\dagger}}(0)}\\[3pt]
            \vev{B_\alpha^{\vphantom{\dagger}}(s) B_\alpha^\dagger(0) } & 0
        \end{pmatrix}
        \\[3pt]
        & = 
        \sum_{l, m} g_{\alpha, l}\, g^*_{\alpha, m}
        \begin{pmatrix}
            0 & \vev{\cd{\alpha, m}(s) \c{\alpha, l}(0)}\\[3pt]
            \vev{\c{\alpha, l}(s) \cd{\alpha, m}(0) } & 0
        \end{pmatrix}\\[3pt]
        & = 
        \sum_{l} |g_{\alpha, l}|^2
        \begin{pmatrix}
            0 & f_{\alpha, l}e^{+i\omega_{\alpha, l} s}\\[3pt]
            \bigl[1- f_{\alpha, l}\bigr]e^{-i\omega_{\alpha, l} s} & 0
        \end{pmatrix}\\[3pt]
        & = 
        \int_0^\infty \d \omega \,J_\alpha(\omega)
        \begin{pmatrix}
            0 & f_{\alpha}(\omega) e^{+i\omega s}\\[3pt]
            \bigl[1- f_{\alpha}(\omega)\bigr] e^{-i\omega s} & 0
        \end{pmatrix}~.
    \end{split}
\end{equation}
Here, we employed the shorthand notation $\vev{\bullet} := \tr [\,\bullet\,\rho_{SE}(0)]$, we defined the density of states (DOS)
\begin{equation}
    J_\alpha(\omega) := \sum_l|g_{\alpha, l}|^2 \delta(\omega - \omega_{\alpha, l})~,
\end{equation}
and used the following results:
\begin{align}
    \vev{\cd{\alpha, m}(s) \c{\alpha, l}(0)} & =\vev{\cd{\alpha, m} \c{\alpha, l}}  e^{+i\omega_{\alpha, m} s}:= \delta_{l, m} f_{\alpha, l}e^{+i\omega_{\alpha, m} s}~,\\
    \vev{\c{\alpha, l}(s)\cd{\alpha, m}(0) } & =\vev{\c{\alpha, l}\cd{\alpha, m} }  e^{-i\omega_{\alpha, l} s}:= \delta_{l, m}\bigl[1- f_{\alpha, l}\bigr]e^{-i\omega_{\alpha, l} s} ~.
\end{align}
For numerical implementation, we choose for both the reservoirs the same Lorentzian spectral line shape
\begin{equation}
    J_L(\omega) = J_R(\omega) = \frac{\eta}{\pi} \frac{\Gamma}{\omega^2 + \Gamma^2} \,e^{-|\omega|/\omega_c}~,
\end{equation}
where $\eta$ quantifies the coupling strength, $\Gamma$ is the excitation linewidth and we employed a long-wavelength exponential cutoff with $\omega_c \gg \Gamma$ to facilitate numerical implementation. To produce the numerical results in Fig.~\ref{fig:5}a of the main text, we fixed $\Gamma = 1$, $\omega_c = 10$. The bare heat kernels are calculated from Eqs.~\eqref{eq:goth-c-}-\eqref{eq:goth-c+}. The recursive equations~\eqref{eq:bare-dyson} and~\eqref{eq:physical-time-goth-g-dyson} are formulated as fixed-point problems and solved via Newton-Krylov algorithm after discretization on a nonlinear time grid.

\twocolumngrid

\end{document}

%% file: figs/fig1.pdf_tex
\begingroup%
  \makeatletter%
  \providecommand\color[2][]{%
    \errmessage{(Inkscape) Color is used for the text in Inkscape, but the package 'color.sty' is not loaded}%
    \renewcommand\color[2][]{}%
  }%
  \providecommand\transparent[1]{%
    \errmessage{(Inkscape) Transparency is used (non-zero) for the text in Inkscape, but the package 'transparent.sty' is not loaded}%
    \renewcommand\transparent[1]{}%
  }%
  \providecommand\rotatebox[2]{#2}%
  \newcommand*\fsize{\dimexpr\f@size pt\relax}%
  \newcommand*\lineheight[1]{\fontsize{\fsize}{#1\fsize}\selectfont}%
  \ifx\svgwidth\undefined%
    \setlength{\unitlength}{167.33342328bp}%
    \ifx\svgscale\undefined%
      \relax%
    \else%
      \setlength{\unitlength}{\unitlength * \real{\svgscale}}%
    \fi%
  \else%
    \setlength{\unitlength}{\svgwidth}%
  \fi%
  \global\let\svgwidth\undefined%
  \global\let\svgscale\undefined%
  \makeatother%
  \begin{picture}(1,0.55117609)%
    \lineheight{1}%
    \setlength\tabcolsep{0pt}%
    \put(0,0){\includegraphics[width=\unitlength,page=1]{fig1.pdf}}%
    \put(0.66877612,0.07951111){\makebox(0,0)[lt]{\lineheight{1.10000002}\smash{\begin{tabular}[t]{l}{\Huge $\H{}$}\end{tabular}}}}%
    \put(0.50434239,0.20560963){\makebox(0,0)[lt]{\lineheight{1.10000002}\smash{\begin{tabular}[t]{l}{\large $\H{S}$}\end{tabular}}}}%
    \put(0.70655442,0.41218373){\makebox(0,0)[lt]{\lineheight{1.10000002}\smash{\begin{tabular}[t]{l}{\large $\H{1}$}\end{tabular}}}}%
    \put(0.20760648,0.3887528){\makebox(0,0)[lt]{\lineheight{1.10000002}\smash{\begin{tabular}[t]{l}{\large $\H{3}$}\end{tabular}}}}%
    \put(0.22509606,0.12454626){\makebox(0,0)[lt]{\lineheight{1.10000002}\smash{\begin{tabular}[t]{l}{\large $\H{2}$}\end{tabular}}}}%
    \put(0,0){\includegraphics[width=\unitlength,page=2]{fig1.pdf}}%
    \put(0.30251389,0.20994909){\makebox(0,0)[lt]{\lineheight{1.10000002}\smash{\begin{tabular}[t]{l}$\V_2$\end{tabular}}}}%
    \put(0.29820526,0.31748396){\makebox(0,0)[lt]{\lineheight{1.10000002}\smash{\begin{tabular}[t]{l}$\V_3$\end{tabular}}}}%
    \put(0.87302417,0.4119953){\makebox(0,0)[lt]{\lineheight{1.10000002}\smash{\begin{tabular}[t]{l}$\beta_1, \,\mu_1$\end{tabular}}}}%
    \put(-0.00209691,0.38958125){\makebox(0,0)[lt]{\lineheight{1.10000002}\smash{\begin{tabular}[t]{l}$\beta_3, \,\mu_3$\end{tabular}}}}%
    \put(-0.00209691,0.12544631){\makebox(0,0)[lt]{\lineheight{1.10000002}\smash{\begin{tabular}[t]{l}$\beta_2, \,\mu_2$\end{tabular}}}}%
    \put(0.6204012,0.3294141){\makebox(0,0)[lt]{\lineheight{1.10000002}\smash{\begin{tabular}[t]{l}$\V_1$\end{tabular}}}}%
  \end{picture}%
\endgroup%

%% file: figs/fig2.pdf_tex
\begingroup%
  \makeatletter%
  \providecommand\color[2][]{%
    \errmessage{(Inkscape) Color is used for the text in Inkscape, but the package 'color.sty' is not loaded}%
    \renewcommand\color[2][]{}%
  }%
  \providecommand\transparent[1]{%
    \errmessage{(Inkscape) Transparency is used (non-zero) for the text in Inkscape, but the package 'transparent.sty' is not loaded}%
    \renewcommand\transparent[1]{}%
  }%
  \providecommand\rotatebox[2]{#2}%
  \newcommand*\fsize{\dimexpr\f@size pt\relax}%
  \newcommand*\lineheight[1]{\fontsize{\fsize}{#1\fsize}\selectfont}%
  \ifx\svgwidth\undefined%
    \setlength{\unitlength}{198.70299091bp}%
    \ifx\svgscale\undefined%
      \relax%
    \else%
      \setlength{\unitlength}{\unitlength * \real{\svgscale}}%
    \fi%
  \else%
    \setlength{\unitlength}{\svgwidth}%
  \fi%
  \global\let\svgwidth\undefined%
  \global\let\svgscale\undefined%
  \makeatother%
  \begin{picture}(1,0.3908932)%
    \lineheight{1}%
    \setlength\tabcolsep{0pt}%
    \put(0,0){\includegraphics[width=\unitlength,page=1]{fig2.pdf}}%
    \put(0.36637574,0.28587962){\makebox(0,0)[lt]{\lineheight{1.10000002}\smash{\begin{tabular}[t]{l}$\gamma_+(t)$\end{tabular}}}}%
    \put(0.36637574,0.0697289){\makebox(0,0)[lt]{\lineheight{1.10000002}\smash{\begin{tabular}[t]{l}$\gamma_-(t)$\end{tabular}}}}%
    \put(0.82710196,0.20689213){\makebox(0,0)[lt]{\lineheight{1.10000002}\smash{\begin{tabular}[t]{l}$t$\end{tabular}}}}%
    \put(0.79374804,0.07204543){\makebox(0,0)[lt]{\lineheight{1.10000002}\smash{\begin{tabular}[t]{l}$t_-$\end{tabular}}}}%
    \put(0.79374804,0.28308638){\makebox(0,0)[lt]{\lineheight{1.10000002}\smash{\begin{tabular}[t]{l}$t_+$\end{tabular}}}}%
    \put(0.11068442,0.20092514){\makebox(0,0)[lt]{\lineheight{1.10000002}\smash{\begin{tabular}[t]{l}$0$\end{tabular}}}}%
    \put(0.11551738,0.366858){\makebox(0,0)[lt]{\lineheight{1.10000002}\smash{\begin{tabular}[t]{l}$\Im z$\end{tabular}}}}%
    \put(0.898463,0.21542423){\makebox(0,0)[lt]{\lineheight{1.10000002}\smash{\begin{tabular}[t]{l}$\Re z$\end{tabular}}}}%
  \end{picture}%
\endgroup%

%% file: figs/fig4.pdf_tex
\begingroup%
  \makeatletter%
  \providecommand\color[2][]{%
    \errmessage{(Inkscape) Color is used for the text in Inkscape, but the package 'color.sty' is not loaded}%
    \renewcommand\color[2][]{}%
  }%
  \providecommand\transparent[1]{%
    \errmessage{(Inkscape) Transparency is used (non-zero) for the text in Inkscape, but the package 'transparent.sty' is not loaded}%
    \renewcommand\transparent[1]{}%
  }%
  \providecommand\rotatebox[2]{#2}%
  \newcommand*\fsize{\dimexpr\f@size pt\relax}%
  \newcommand*\lineheight[1]{\fontsize{\fsize}{#1\fsize}\selectfont}%
  \ifx\svgwidth\undefined%
    \setlength{\unitlength}{133.89995064bp}%
    \ifx\svgscale\undefined%
      \relax%
    \else%
      \setlength{\unitlength}{\unitlength * \real{\svgscale}}%
    \fi%
  \else%
    \setlength{\unitlength}{\svgwidth}%
  \fi%
  \global\let\svgwidth\undefined%
  \global\let\svgscale\undefined%
  \makeatother%
  \begin{picture}(1,0.40856293)%
    \lineheight{1}%
    \setlength\tabcolsep{0pt}%
    \put(0,0){\includegraphics[width=\unitlength,page=1]{fig4.pdf}}%
    \put(0.37205604,0.19922424){\makebox(0,0)[lt]{\lineheight{1.10000002}\smash{\begin{tabular}[t]{l}\Large $a_1$\end{tabular}}}}%
    \put(0.56395837,0.19922424){\makebox(0,0)[lt]{\lineheight{1.10000002}\smash{\begin{tabular}[t]{l}\Large $a_2$\end{tabular}}}}%
    \put(0.06932845,0.19176894){\makebox(0,0)[lt]{\lineheight{1.10000002}\smash{\begin{tabular}[t]{l}\Large $\beta_L$\end{tabular}}}}%
    \put(0.8330095,0.19176894){\makebox(0,0)[lt]{\lineheight{1.10000002}\smash{\begin{tabular}[t]{l}\Large $\beta_R$\end{tabular}}}}%
    \put(0.24086221,0.23320471){\makebox(0,0)[lt]{\lineheight{1.10000002}\smash{\begin{tabular}[t]{l}\large $\eta$\end{tabular}}}}%
    \put(0.7107757,0.23320471){\makebox(0,0)[lt]{\lineheight{1.10000002}\smash{\begin{tabular}[t]{l}\large $\eta$\end{tabular}}}}%
    \put(0.17949909,-0.00663632){\makebox(0,0)[lt]{\lineheight{1.10000002}\smash{\begin{tabular}[t]{l}\Large $-I_Q^{(L)}(t)$\end{tabular}}}}%
    \put(0.65432065,-0.00663632){\makebox(0,0)[lt]{\lineheight{1.10000002}\smash{\begin{tabular}[t]{l}\Large $I_Q^{(R)}(t)$\end{tabular}}}}%
    \put(0.57006495,0.27946133){\makebox(0,0)[lt]{\lineheight{1.10000002}\smash{\begin{tabular}[t]{l}\large $\varepsilon_2$\end{tabular}}}}%
    \put(0.37792971,0.27946133){\makebox(0,0)[lt]{\lineheight{1.10000002}\smash{\begin{tabular}[t]{l}\large $\varepsilon_1$\end{tabular}}}}%
    \put(0.47433962,0.22783827){\makebox(0,0)[lt]{\lineheight{1.10000002}\smash{\begin{tabular}[t]{l}\large $\Delta$\end{tabular}}}}%
    \put(0.59815745,0.34782831){\makebox(0,0)[lt]{\lineheight{1.10000002}\smash{\begin{tabular}[t]{l}\Large $S$\end{tabular}}}}%
  \end{picture}%
\endgroup%

%% file: bibliography.bib
@preamble{ " \newcommand{\noop}[1]{} " }

@book{breuer2007book,
  author    = {Breuer, Heinz-Peter and Petruccione, Francesco},
  publisher = {Oxford University Press},
  title     = {The Theory of Open Quantum Systems},
  year      = {2007},
  doi       = {10.1093/acprof:oso/9780199213900.001.0001},
  url       = {https://doi.org/10.1093/acprof:oso/9780199213900.001.0001}
}

@article{lindblad1976classic,
  author  = {Lindblad, G\"oran},
  journal = {Commun. Math. Phys.},
  title   = {On the generators of quantum dynamical semigroups},
  volume  = {48},
  pages   = {119},
  year    = {1976},
  doi     = {10.1007/BF01608499}
}

@article{gorini1976classic,
  author  = {Gorini, Vittorio and Kossakowski, Andrzej and Sudarshan, Ennackal Chandy George},
  journal = {J. Math. Phys.},
  title   = {Completely positive dynamical semigroup of {$N$}-level systems},
  volume  = {17},
  pages   = {821},
  year    = {1976},
  doi     = {10.1063/1.522979}
}

@article{deVega2017review,
  author  = {de Vega, Inés and Alonso, Daniel},
  title   = {Dynamics of {non-Markovian} open quantum systems},
  journal = {Rev. Mod. Phys.},
  volume  = {89},
  pages   = {015001},
  year    = {2017},
  doi     = {10.1103/RevModPhys.89.015001}
}

@book{FetterWalecka,
  title     = {Quantum Theory of Many-Particle Systems},
  author    = {Fetter, Alexander L. and Walecka, John Dirk},
  publisher = {McGraw--Hill},
  address   = {New York},
  year      = {1971}
}

@book{GiulianiVignale,
  title     = {Quantum Theory of the Electron Liquid},
  author    = {Giuliani, Gabriele F. and Vignale, Giovanni},
  publisher = {Cambridge University Press},
  address   = {Cambridge, England},
  year      = {2005},
  doi       = {10.1017/CBO9780511619915}
}

@book{Datta1995,
  title     = {Electronic Transport in Mesoscopic Systems},
  author    = {Datta, Supriyo},
  year      = {1995},
  publisher = {Cambridge University Press},
  address   = {Cambridge},
  series    = {Cambridge Studies in Semiconductor Physics and Microelectronic Engineering},
  doi       = {10.1017/CBO9780511805776}
}

@book{Imry2002,
  title     = {Introduction to Mesoscopic Physics},
  author    = {Imry, Yoseph},
  year      = {2002},
  edition   = {2nd},
  publisher = {Oxford University Press},
  address   = {Oxford},
  doi       = {10.1093/oso/9780198507383.002.0002}
}

@book{NazarovBlanter2009,
  title     = {Quantum Transport: Introduction to Nanoscience},
  author    = {Nazarov, Yuli V. and Blanter, Yaroslav M.},
  year      = {2009},
  publisher = {Cambridge University Press},
  address   = {Cambridge},
  doi       = {10.1017/CBO9780511626906}
}

@book{kamenev2011book,
  author    = {Kamenev, Alex},
  publisher = {Cambridge University Press},
  title     = {Field Theory of Non-Equilibrium Systems},
  year      = {2011},
  doi       = {10.1017/CBO9781139003667}
}

@book{stefanucci2013book,
  author    = {Stefanucci, Gianluca and van Leeuwen, Robert},
  publisher = {Cambridge University Press},
  title     = {Nonequilibrium Many-Body Theory of Quantum Systems},
  year      = {2013},
  doi       = {10.1017/CBO9781139023979}
}

@article{cavina2023convenient,
  title   = {A convenient {Keldysh} contour for thermodynamically consistent perturbative and semiclassical expansions},
  author  = {Cavina, Vasco and Kadijani, Sadeq S. and Esposito, Massimiliano and Schmidt, Thomas L.},
  journal = {SciPost Phys.},
  volume  = {15},
  number  = {5},
  pages   = {209},
  year    = {2023},
  doi     = {10.21468/SciPostPhys.15.5.209}
}

@article{funo2018pathint,
  title   = {Path Integral Approach to Quantum Thermodynamics},
  author  = {Funo, Ken and Quan, Haitao T.},
  journal = {Phys. Rev. Lett.},
  number  = {4},
  volume  = {121},
  pages   = {040602},
  year    = {2018},
  doi     = {10.1103/PhysRevLett.121.040602}
}

@article{diosi1998diffusion,
  title   = {Non-{Markovian} quantum state diffusion},
  author  = {Di\'osi, L. and Gisin, N. and Strunz, W. T.},
  journal = {Phys. Rev. A},
  volume  = {58},
  issue   = {3},
  pages   = {1699--1712},
  year    = {1998},
  doi     = {10.1103/PhysRevA.58.1699}
}

@article{stockburger2001diffusion,
  author  = {Stockburger, J\"urgen T. and Grabert, Hermann},
  title   = {Non-{Markovian} quantum state diffusion},
  journal = {Chem. Phys.},
  volume  = {268},
  number  = {1},
  pages   = {249-256},
  year    = {2001},
  doi     = {10.1016/S0301-0104(01)00307-X}
}

@article{ankerhold2026colloquium,
  title   = {Colloquium: Simulating non-Markovian dynamics in open quantum systems},
  author  = {Xu, Meng and Vadimov, Vasilii and Stockburger, J. T. and Ankerhold, J.},
  journal = {Rev. Mod. Phys.},
  pages   = {--},
  year    = {2026},
  month   = {Jan},
  publisher = {American Physical Society},
  doi     = {10.1103/w3nw-hbjc}
}

@article{cavina2025unifying,
  title     = {Unifying quantum stochastic methods using Wick's theorem on the Keldysh contour},
  author    = {Cavina, Vasco and D’Abbruzzo, Antonio and Giovannetti, Vittorio},
  journal   = {Phys. Rev. Res.},
  volume    = {7},
  number    = {4},
  pages     = {043262},
  year      = {2025},
  publisher = {APS},
  doi       = {10.1103/cnvm-w8cy}
}

@article{stockburger2002exact,
  title   = {Exact $\mathit{c}$-Number Representation of Non-{Markovian} Quantum Dissipation},
  author  = {Stockburger, J\"urgen T. and Grabert, Hermann},
  journal = {Phys. Rev. Lett.},
  volume  = {88},
  issue   = {17},
  pages   = {170407},
  year    = {2002},
  doi     = {10.1103/PhysRevLett.88.170407}
}

@article{tilloy2017unraveling,
  title   = {Time-local unraveling of non-{Markovian} stochastic {Schr\"odinger} equations},
  author  = {Tilloy, Antoine},
  journal = {{Quantum}},
  volume  = {1},
  pages   = {29},
  year    = {2017},
  doi     = {10.22331/q-2017-09-19-29}
}

@article{talkner2016aspects,
  title   = {Aspects of quantum work},
  author  = {Talkner, Peter and H{\"a}nggi, Peter},
  journal = {Phys. Rev. E},
  volume  = {93},
  number  = {2},
  pages   = {022131},
  year    = {2016},
  doi     = {10.1103/PhysRevE.93.022131}
}

@article{Espositoreview,
  title   = {Nonequilibrium fluctuations, fluctuation theorems, and counting statistics in quantum systems},
  author  = {Esposito, Massimiliano and Harbola, Upendra and Mukamel, Shaul},
  journal = {Rev. Mod. Phys.},
  volume  = {81},
  issue   = {4},
  pages   = {1665--1702},
  year    = {2009},
  doi     = {10.1103/RevModPhys.81.1665}
}

@article{d2025exact,
  title   = {Exact Non-{Markovian} Master Equations: A Generalized Derivation for {Gaussian} Systems},
  author  = {D'Abbruzzo, Antonio and Giovannetti, Vittorio and Cavina, Vasco},
  journal = {Phys. Rev. Lett.},
  volume  = {135},
  issue   = {24},
  pages   = {240401},
  year    = {2025},
  doi     = {10.1103/cb7c-5f66}
}

@article{Talarico2019scalable,
  author  = {Talarico, N. W. and Maniscalco, S. and Lo Gullo, N.},
  title   = {A scalable numerical approach to the solution of the {Dyson} equation for the non-equilibrium single-particle {Green’s} function},
  journal = {Phys. Status Solidi B},
  volume  = {256},
  pages   = {1800501},
  year    = {2019},
  doi     = {10.1002/pssb.201800501}
}

@article{lamic2025solving,
  title   = {Solving the transient {Dyson} equation with quasilinear complexity via matrix compression},
  author  = {Lamic, Baptiste},
  journal = {Phys. Rev. B},
  volume  = {112},
  issue   = {23},
  pages   = {235142},
  year    = {2025},
  doi     = {10.1103/q222-w14m}
}

@article{sierra2014strongly,
  title   = {Strongly nonlinear thermovoltage and heat dissipation in interacting quantum dots},
  author  = {Sierra, Miguel A. and S{\'a}nchez, David},
  journal = {Phys. Rev. B},
  volume  = {90},
  pages   = {115313},
  year    = {2014},
  doi     = {10.1103/PhysRevB.90.115313}
}

@article{ingi2017reversal,
  title   = {Reversal of Thermoelectric Current in Tubular Nanowires},
  author  = {Erlingsson, Sigurdur I. and Manolescu, Andrei and Nemnes, George Alexandru and Bardarson, Jens H. and S\'anchez, David},
  journal = {Phys. Rev. Lett.},
  volume  = {119},
  pages   = {036804},
  year    = {2017},
  doi     = {10.1103/PhysRevLett.119.036804}
}

@article{tesser2022heat,
  title   = {Heat rectification through single and coupled quantum dots},
  author  = {Tesser, Ludovico and Bhandari, Bibek and Erdman, Paolo Andrea and Paladino, Elisabetta and Fazio, Rosario and Taddei, Fabio},
  journal = {New J. Phys.},
  volume  = {24},
  number  = {3},
  pages   = {035001},
  year    = {2022},
  doi     = {10.1088/1367-2630/ac53b8}
}

@misc{antola2025quantum,
  title        = {{Quantum Bipolar Thermoelectricity}},
  author       = {Antola, Filippo and De Simoni, Giorgio and Giazotto, Francesco and Braggio, Alessandro},
  year         = {2025},
  howpublished = {\href{https://arxiv.org/abs/2508.03219}{arXiv:2508.03219}}
}

@article{arrachea2007heat,
  title   = {Heat production and energy balance in nanoscale engines driven by time-dependent fields},
  author  = {Arrachea, Liliana and Moskalets, Michael and Martin-Moreno, Luis},
  journal = {Phys. Rev. B},
  volume  = {75},
  number  = {24},
  pages   = {245420},
  year    = {2007},
  doi     = {10.1103/PhysRevB.75.245420}
}

@article{kosloff2014quantum,
  title   = {Quantum heat engines and refrigerators: Continuous devices},
  author  = {Kosloff, Ronnie and Levy, Amikam},
  journal = {Annu. Rev. Phys. Chem.},
  volume  = {65},
  number  = {1},
  pages   = {365--393},
  year    = {2014},
  doi     = {10.1146/annurev-physchem-040513-103724}
}

@article{erdman2017thermoelectric,
  title   = {Thermoelectric properties of an interacting quantum dot based heat engine},
  author  = {Erdman, Paolo Andrea and Mazza, Francesco and Bosisio, Riccardo and Benenti, Giuliano and Fazio, Rosario and Taddei, Fabio},
  journal = {Phys. Rev. B},
  volume  = {95},
  number  = {24},
  pages   = {245432},
  year    = {2017},
  doi     = {10.1103/PhysRevB.95.245432}
}

@article{benenti2017fundamental,
  title   = {Fundamental aspects of steady-state conversion of heat to work at the nanoscale},
  author  = {Benenti, Giuliano and Casati, Giulio and Saito, Keiji and Whitney, Robert S.},
  journal = {Phys. Rep.},
  volume  = {694},
  pages   = {1--124},
  year    = {2017},
  doi     = {10.1016/j.physrep.2017.05.008}
}

@article{esposito2015efficiency,
  title   = {Efficiency fluctuations in quantum thermoelectric devices},
  author  = {Esposito, Massimiliano and Ochoa, Maicol A. and Galperin, Michael},
  journal = {Phys. Rev. B},
  volume  = {91},
  number  = {11},
  pages   = {115417},
  year    = {2015},
  doi     = {10.1103/PhysRevB.91.115417}
}

@article{dubi2011colloquium,
  title   = {Colloquium: Heat flow and thermoelectricity in atomic and molecular junctions},
  author  = {Dubi, Yonatan and Di Ventra, Massimiliano},
  journal = {Rev. Mod. Phys.},
  volume  = {83},
  number  = {1},
  pages   = {131--155},
  year    = {2011},
  doi     = {10.1103/RevModPhys.83.131}
}

@article{paladino20141,
  title   = {1/f noise: Implications for solid-state quantum information},
  author  = {Paladino, E. and Galperin, Y. M. and Falci, G. and Altshuler, B. L.},
  journal = {Rev. Mod. Phys.},
  volume  = {86},
  number  = {2},
  pages   = {361--418},
  year    = {2014},
  doi     = {10.1103/RevModPhys.86.361}
}

@article{paladino2002decoherence,
  title   = {Decoherence and 1/f noise in {Josephson} qubits},
  author  = {Paladino, E. and Faoro, L. and Falci, G. and Fazio, Rosario},
  journal = {Phys. Rev. Lett.},
  volume  = {88},
  number  = {22},
  pages   = {228304},
  year    = {2002},
  doi     = {10.1103/PhysRevLett.88.228304}
}

@article{zou2024spatially,
  title   = {Spatially correlated classical and quantum noise in driven qubits},
  author  = {Zou, Ji and Bosco, Stefano and Loss, Daniel},
  journal = {npj Quantum Inf.},
  volume  = {10},
  number  = {1},
  pages   = {46},
  year    = {2024},
  doi     = {10.1038/s41534-024-00842-9}
}

@article{gulacsi2023signatures,
  title   = {Signatures of non-{Markovianity} of a superconducting qubit},
  author  = {Gul{\'a}csi, Bal{\'a}zs and Burkard, Guido},
  journal = {Phys. Rev. B},
  volume  = {107},
  number  = {17},
  pages   = {174511},
  year    = {2023},
  doi     = {10.1103/PhysRevB.107.174511}
}

@article{wang2025non,
  title   = {Non-{Markovian} noise mitigation: Practical implementation, error analysis, and the role of spectral properties of the environment},
  author  = {Wang, Ke and Li, Xiantao},
  journal = {Phys. Rev. A},
  volume  = {112},
  number  = {1},
  pages   = {012406},
  year    = {2025},
  doi     = {10.1103/db4q-8ny9}
}

@misc{sannia2025non,
  title         = {Non-{Markovianity} and memory enhancement in Quantum Reservoir Computing},
  author        = {Sannia, Antonio and Rodr{\'\i}guez, Ricard Ravell and Giorgi, Gian Luca and Zambrini, Roberta},
  howpublished = {\href{https://arxiv.org/abs/2505.02491}{arXiv:2505.02491}},
  year          = {2025}
}

@article{soret2022thermodynamic,
  title   = {Thermodynamic consistency of quantum master equations},
  author  = {Soret, Ariane and Cavina, Vasco and Esposito, Massimiliano},
  journal = {Phys. Rev. A},
  volume  = {106},
  number  = {6},
  pages   = {062209},
  year    = {2022},
  doi     = {10.1103/PhysRevA.106.062209}
}

@article{landauer1957spatial,
  title   = {Spatial variation of currents and fields due to localized scatterers in metallic conduction},
  author  = {Landauer, Rolf},
  journal = {IBM J. Res. Dev.},
  volume  = {1},
  number  = {3},
  pages   = {223--231},
  year    = {1957},
  doi     = {10.1147/rd.13.0223}
}

@article{buttiker1986four,
  title   = {Four-terminal phase-coherent conductance},
  author  = {B{\"u}ttiker, M.},
  journal = {Phys. Rev. Lett.},
  volume  = {57},
  number  = {14},
  pages   = {1761},
  year    = {1986},
  doi     = {10.1103/PhysRevLett.57.1761}
}

@article{FisherLee1981,
  title   = {Relation between conductivity and transmission matrix},
  author  = {Fisher, Daniel S. and Lee, Patrick A.},
  journal = {Phys. Rev. B},
  volume  = {23},
  pages   = {6851--6854},
  year    = {1981},
  doi     = {10.1103/PhysRevB.23.6851}
}

@article{vacchini2011markovianity,
  title   = {Markovianity and non-{Markovianity} in quantum and classical systems},
  author  = {Vacchini, Bassano and Smirne, Andrea and Laine, Elsi-Mari and Piilo, Jyrki and Breuer, Heinz-Peter},
  journal = {New J. Phys.},
  volume  = {13},
  number  = {9},
  pages   = {093004},
  year    = {2011},
  doi     = {10.1088/1367-2630/13/9/093004},
  url     = {https://doi.org/10.1088/1367-2630/13/9/093004}
}

@article{tamascelli2018nonperturbative,
  title     = {Nonperturbative Treatment of Non-{Markovian} Dynamics of Open Quantum Systems},
  author    = {Tamascelli, Dario and Smirne, Andrea and Huelga, Susana F. and Plenio, Martin B.},
  journal   = {Phys. Rev. Lett.},
  volume    = {120},
  issue     = {3},
  pages     = {030402},
  year      = {2018},
  doi       = {10.1103/PhysRevLett.120.030402},
  url       = {https://journals.aps.org/prl/abstract/10.1103/PhysRevLett.120.030402}
}

@article{Breuer2016Review,
  title   = {Colloquium: Non-{Markovian} dynamics in open quantum systems},
  author  = {Breuer, Heinz-Peter and Laine, Elsi-Mari and Piilo, Jyrki and Vacchini, Bassano},
  journal = {Rev. Mod. Phys.},
  volume  = {88},
  issue   = {2},
  pages   = {021002},
  year    = {2016},
  doi     = {10.1103/RevModPhys.88.021002}
}

@article{Li2020Perspective,
  title   = {Non-{Markovian} quantum dynamics: What is it good for?},
  author  = {Li, C.-F. and Guo, G.-C. and Piilo, J.},
  journal = {EPL},
  volume  = {128},
  number  = {3},
  pages   = {30001},
  year    = {2020},
  doi     = {10.1209/0295-5075/128/30001}
}

@article{luczka1990spin,
  title={Spin in contact with thermostat: Exact reduced dynamics},
  author={{\L}uczka, Jerzy},
  journal={Physica A},
  volume={167},
  number={3},
  pages={919--934},
  year={1990},
  publisher={Elsevier},
  doi={10.1016/0378-4371(90)90299-8}
}

@article{Hu1992Classic,
  title   = {Quantum {Brownian} motion in a general environment: Exact master equation with nonlocal dissipation and colored noise},
  author  = {Hu, B. L. and Paz, Juan Pablo and Zhang, Yuhong},
  journal = {Phys. Rev. D},
  volume  = {45},
  issue   = {8},
  pages   = {2843--2861},
  year    = {1992},
  doi     = {10.1103/PhysRevD.45.2843}
}

@article{Zhang2012ME,
  title   = {General Non-{Markovian} Dynamics of Open Quantum Systems},
  author  = {Zhang, Wei-Min and Lo, Ping-Yuan and Xiong, Heng-Na and Tu, Matisse Wei-Yuan and Nori, Franco},
  journal = {Phys. Rev. Lett.},
  volume  = {109},
  issue   = {17},
  pages   = {170402},
  year    = {2012},
  doi     = {10.1103/PhysRevLett.109.170402}
}

@misc{colla2025unveiling,
      title={Unveiling coherent dynamics in non-Markovian open quantum systems: exact expression and recursive perturbation expansion}, 
      author={Alessandra Colla and Heinz-Peter Breuer and Giulio Gasbarri},
      year={2025},
      howpublished = {\href{https://arxiv.org/abs/2506.04097}{arXiv:2506.04097}}
}

@article{abiuso2019non,
  title   = {Non-{Markov} enhancement of maximum power for quantum thermal machines},
  author  = {Abiuso, Paolo and Giovannetti, Vittorio},
  journal = {Phys. Rev. A},
  volume  = {99},
  number  = {5},
  pages   = {052106},
  year    = {2019},
  doi     = {10.1103/PhysRevA.99.052106}
}

@article{strasberg2016nonequilibrium,
  title   = {Nonequilibrium thermodynamics in the strong coupling and non-{Markovian} regime based on a reaction coordinate mapping},
  author  = {Strasberg, Philipp and Schaller, Gernot and Lambert, Neill and Brandes, Tobias},
  journal = {New J. Phys.},
  volume  = {18},
  number  = {7},
  pages   = {073007},
  year    = {2016},
  doi     = {10.1088/1367-2630/18/7/073007}
}

@article{perarnau2015extractable,
  title   = {Extractable work from correlations},
  author  = {Perarnau-Llobet, Mart{\'\i} and Hovhannisyan, Karen V. and Huber, Marcus and Skrzypczyk, Paul and Brunner, Nicolas and Ac{\'\i}n, Antonio},
  journal = {Phys. Rev. X},
  volume  = {5},
  number  = {4},
  pages   = {041011},
  year    = {2015},
  doi     = {10.1103/PhysRevX.5.041011}
}

@article{talkner2020colloquium,
  title   = {Colloquium: Statistical mechanics and thermodynamics at strong coupling: Quantum and classical},
  author  = {Talkner, Peter and H{\"a}nggi, Peter},
  journal = {Rev. Mod. Phys.},
  volume  = {92},
  number  = {4},
  pages   = {041002},
  year    = {2020},
  doi     = {10.1103/RevModPhys.92.041002}
}

@article{thomas2018thermodynamics,
  title   = {Thermodynamics of non-{Markovian} reservoirs and heat engines},
  author  = {Thomas, George and Siddharth, Nana and Banerjee, Subhashish and Ghosh, Sibasish},
  journal = {Phys. Rev. E},
  volume  = {97},
  number  = {6},
  pages   = {062108},
  year    = {2018},
  doi     = {10.1103/PhysRevE.97.062108}
}

@article{Cao2015thermoelectric,
  author  = {Cao, Zhan and Fang, Tie-Feng and Li, Lin and Luo, Hong-Gang},
  title   = {Thermoelectric-induced unitary Cooper pair splitting efficiency},
  journal = {Applied Physics Letters},
  year    = {2015},
  volume  = {107},
  number  = {21},
  pages   = {212601},
  doi     = {10.1063/1.4936380}
}

@misc{mayo2025tur,
    title={Thermodynamic Uncertainty Relation in Hybrid Normal-Superconducting Systems: The Role of superconducting coherence}, 
    author={Franco Mayo and Nahual Sobrino and Rosario Fazio and Fabio Taddei and Michele Governale},
    year={2025},
    howpublished = {\href{https://arxiv.org/abs/2506.02904}{arXiv:2506.02904}}
}

@article{blasi2024exact,

  title = {Exact finite-time correlation functions for multiterminal setups: Connecting theoretical frameworks for quantum transport and thermodynamics},

  author = {Blasi, Gianmichele and Khandelwal, Shishir and Haack, G\'eraldine},

  journal = {Phys. Rev. Res.},

  volume = {6},

  issue = {4},

  pages = {043091},

  numpages = {30},

  year = {2024},

  month = {Nov},

  publisher = {American Physical Society},

  doi = {10.1103/PhysRevResearch.6.043091},

  url = {https://link.aps.org/doi/10.1103/PhysRevResearch.6.043091} 
  
}
